\newcommand{\subsetsim}{\mathrel{\mathpalette\subset@sim\relax}}
\newcommand{\subset@sim}[2]{%
  \vtop{\offinterlineskip\m@th
    \ialign{\hfil$#1##$\hfil\cr
      \subset\cr\noalign{\kern1pt}\sim\cr
    }%
  }%
}
\newcommand{\supsetsim}{\mathrel{\mathpalette\supset@sim\relax}}
\newcommand{\supset@sim}[2]{%
  \vtop{\offinterlineskip\m@th
    \ialign{\hfil$#1##$\hfil\cr
      \supset\cr\noalign{\kern1pt}\sim\cr
    }%
  }%
}
\newcommand\spans[1]{\ensuremath{#1^*}}
\newcommand\roots[1]{\ensuremath{R(#1)}}
\newcommand\pref[1]{\ensuremath{#1^\prime}}
\newcommand\suff[1]{\ensuremath{#1^{\prime\prime}}}
\renewcommand\implies{\ensuremath{\Rightarrow}}
\newcommand\as{\ensuremath{\alpha}}
\newcommand\bs{\ensuremath{\beta}}
\newcommand{\ic}[2]{\ensuremath{\mathit{lshift}_{#2}(#1)}}
\newcommand\calC{\mathcal{C}}
\newcommand\calT{\mathcal{T}}
\newcommand\calF{\mathcal{F}}
\newcommand\fst{\mathit{fst}}
\newcommand\snd{\mathit{snd}}
\newcommand{\calK}{\mathcal{K}}
\newcommand{\calM}{\mathcal{M}}
\newcommand{\calW}{\mathcal{W}}
\newcommand\bfM{\mathbf{M}}
\newcommand\N{\mathbb{N}}
\newcommand\pair{\ensuremath{P}}
\newcommand\dom{\mathit{dom}}
\newcommand\delay{\mathit{delay}}
\renewcommand\({\left(}
\renewcommand\){\right)}
\newcommand{\saina}[2]{}
\newtheorem{theorem}{Theorem}	
\newtheorem{example}{Example}
\newtheorem{definition}{Definition}
\newtheorem{question}{Question}
\newtheorem{remark}{Remark}
\newtheorem{claim}{Claim}
\journal{Information and Computation}
\begin{document}

\begin{frontmatter}

\title{Deciding Conjugacy of a Rational Relation}

\author[inst1,inst2]{C. Aiswarya}
\ead{aiswarya@cmi.ac.in}

\author[inst3]{Amaldev Manuel \fnref{fund}}
\ead{amal@iitgoa.ac.in}
\fntext[fund]{Supported by the DST SERB MATRICS grant for the project \textit{Deciding closeness of finite state transducers} [MTR/2022/000628]}		

\author[inst3]{Saina Sunny}
\ead{saina19231102@iitgoa.ac.in}

\affiliation[inst1]{organization={Chennai Mathematical Institute},
            country={India}}
            
\affiliation[inst2]{organization={CNRS, ReLaX, IRL 2000},
            country={India}}
            
\affiliation[inst3]{organization={Indian Institute of Technology Goa},
            country={India}}

\begin{abstract}

A rational relation is conjugate if every pair of words in the relation are conjugates, i.e., cyclic shifts of each other. We show that checking whether a rational relation is conjugate is decidable.

We assume that the rational relation is given as a rational expression over pairs of words. Every rational expression is effectively equivalent to a sum of sumfree expressions, possibly with an exponential size blow-up. The general problem reduces to determining the conjugacy of sumfree rational expressions.
To solve this specific case, we generalise the classical Lyndon-Sch\"utzenberger theorem from word combinatorics that equates conjugacy of a pair of words $(u,v)$ and the existence of a word $z$ (called a \emph{witness}) such that $uz=zv$ (or $zu=vz$). We give two generalisations of this result. We say that a set of conjugate pairs has a \emph{common witness} if there is a word that is a witness for every pair in the set. The generalisations are as follows:
\begin{enumerate}
\item If $G$ is an arbitrary set of conjugate pairs, then $G^*$ is conjugate if and only if there is a common witness for $G$.  Moreover, a word is a common witness for $G^*$ if and only if it is a common witness for $G$.  
\item If $G_1^*, \ldots, G_k^*$, $k > 0$ are arbitrary sets of conjugate pairs and $(\alpha_0,\beta_0), \ldots, (\alpha_k,\beta_k)$ are arbitrary pairs of words, then the set of words \[G = (\alpha_0,\beta_0) G_1^* (\alpha_1,\beta_1) \cdots (\alpha_{k-1},\beta_{k-1})G_k^*(\alpha_k,\beta_k)\] is conjugate if, and only if, 
\begin{itemize}
    \item either it has a common witness,
    \item or there exists words $w \sim \as_0\ldots \as_k$ and $\rho$ such that  
    each pair generated by $M$ is conjugate to a word in $w\rho^*$.  
    \end{itemize}

Moreover, the common witnesses of $G$ are computable in polynomial time from the common witnesses of $G_1^*, \ldots, G_k^*$.
\end{enumerate}
Further, the set of common witnesses, if it exists, is computable inductively. 
Using the above results,
we give a polynomial-time algorithm for checking the conjugacy of a sumfree expression and an exponential-time procedure for the general problem.

\end{abstract}

\begin{keyword}
Rational relations \sep Finite state transducers \sep Conjugacy of words \sep Combinatorics of words

\end{keyword}

\end{frontmatter}

\tableofcontents

\section{Introduction
\label{sec:introduction}}
Conjugacy of two elements $u$ and $v$ in a group can be defined as any of the following equivalent cases:
\begin{enumerate}
\item $uz = zv$ for some $z$,
\item $u = xy$ and $v = yx$ for some $x,y$. 
\end{enumerate}
The conjugacy problem asks if a  given pair of elements in a finitely presented group (typically infinite) is conjugate.
It along with the word and isomorphism problems constitute the classical triad of decision problems on groups identified by Dehn in 1912 \cite{dehn}. Dehn's prescient choice turned out to be instrumental not only in mathematics, but also to the theory of  semigroups/monoids and automata in computer science. It turns out that the above conditions are equivalent for free monoids (i.e., when $u,v,z,x,y$ are taken to be words over some finite alphabet). This is the well-known second theorem of Lyndon-Sch\"utzenberger. But unlike in the case of groups where condition (1) is taken to be the definition of conjugacy, in the case of monoids condition (2) is taken as the definition of conjugacy. Hence the statement reads the following way. 

\begin{theorem} [Proposition 1.3.4 of \cite{Lot83}]\label{uz=zv}
A  pair of nonempty words $(u,v)$ is conjugate if and only if there exists a word $z$ such that $uz=zv$. Moreover, $z \in (xy)^*x$ where $x$ and $y$ are such that $u = xy$ and $v = yx$.
\end{theorem}

\smallskip

Conjugacy problem is solvable in polynomial time over free monoids and free groups. We consider a generalisation of the problem to a finitely-presented possibly infinite set of pairs. Let $A$ be a finite alphabet. A relation $R \subseteq A^* \times A^*$ over the free monoid $A^*$ is conjugate if each pair $(u,v)\in R$ is conjugate. Consider the following decision version: 
\begin{enumerate}
\item \emph{Given a relation $R$ over $A^*$, is it conjugate?}
\end{enumerate}
Of particular interest is when $R$ is automata-definable because of motivations detailed later. First we recall the class of rational relations. The family of \emph{rational subsets} of a monoid $M$ is the smallest class containing $\emptyset$, all singleton subsets of $M$ and closed under union, product and Kleene closure. A natural way to present a rational subsets of $M$ is as a rational expression: $\emptyset, m\in M$ are rational expressions, and if $E_1, E_2$ are rational expressions then $E_1\cdot E_2$, $E_1 + E_2$, and $E_1^*$ are also rational expressions. A rational relation over $A^*$ is a rational subset of the product monoid $A^*\times A^*$. Coincidentally, rational relations are precisely those that are defined by nondeterministic finite state transducers. Checking conjugacy of rational relation is equivalent to checking if the input output pairs of a relation defined by a nondeterministic transducer are conjugates. A rational expression over $A^\ast \times A^\ast$ is conjugate if the rational relation it represents is conjugate.

\begin{example}
The rational expression $E_1= (\epsilon,a)(ab,ba)^*(a,\epsilon)$ denotes the relation $\{((ab)^na,a(ba)^n) \mid n \geq 0\}$. The expression $ E_2=((a,aa) + (b,\epsilon))^*$ represents $ \{(u,v) \mid $  $v$ is obtained  from $u$ by duplicating  $a$'s and discarding  $b$'s$\}$.
The expression $E_1$ is conjugate,  but $E_2$ is not.
\end{example}

A strong justification for the above problem comes from the theory of word transducers. Checking a number of properties of word transducers, for instance sequentiality (can the given transducer be input-determinised?) 
or  finite sequentiality (is the given transducer equivalent to a disjoint union of input-deterministic transducers?), bounded edit distance \cite{editdistance} (is the edit distance between the respective outputs of the given transducers bounded?) etc.~amounts to checking conjugacy of the rational relations defined by the strongly connected components of the transducer and certain specific properties of the underlying acyclic graph of strongly connected components. Loosely speaking, conjugacy of the relations defined by the strongly connected components imply that the loops of the transducer are pumpable. Likewise, given two functional transducers with identical domain, checking if the output pairs of these transducers are conjugates on all input, reduces to checking conjugacy of the rational relation of output pairs defined by their product automata. A converse result also holds: checking conjugacy of rational relation is equivalent to checking conjugacy of output pairs of two sequential transducer with identical domain by virtue of Nivat's theorem \cite{nivat1968transductions,editdistance}.

Our main result is summarised by the following theorem.
\begin{theorem}\label{maintheorem}
Conjugacy of rational relations is decidable.
\end{theorem}

The union operation of rational relations preserves conjugacy, i.e., if $R_1$ and $R_2$ are conjugate, then $R_1 \cup R_2$ is also conjugate. However, conjugacy is not preserved under the product and Kleene closure.
\begin{example}\label{eg:closure}
	Consider $E_1 = (\mathit{ab},\mathit{ba})$,  $E_2 = (\mathit{ca},\mathit{ac})$ and $E_3 = (\mathit{ac},\mathit{ca})$, all of which are  conjugates. $E_1 \cdot E_3$ is conjugate but  $E_1 \cdot E_2 $ is not.  Also $(E_1 + E_3)^\ast $ is conjugate, but $(E_1 + E_2)^\ast$ is not. 
\end{example}
Assume that the rational relation is given by a rational expression. Every rational expression is effectively equivalent to a sum of sumfree expressions (called sumfree normal form (SNF)), by inductively rewriting the  expression using the identities $(a+b)^*=(a^*b^*)^*$ and $(a+b)\cdot (c+d) = ac+ad+bc+bd$ (see \Cref{sumfreere}). This fact is the rational-expression analogue of the factorisation forest theorem of Simon, a deep result from the theory of finite semigroups \cite{simon1990factorization}. Rewriting a rational expression in SNF may result in an exponential blow-up, both in the number of summands and the size of each summand. 
\begin{example}
For the expression $E= (a + b)^{n}$ for some $n >0$, it can be shown that any equivalent expression in SNF will have at least $2^n$ summands. For $E' = \$(E\#)^*$, any equivalent SNF expression will have at least one summand of exponential size, and the expression $E\cdot E'$ in SNF will have exponentially many summands of exponential size.
\end{example}

For proving \Cref{maintheorem} we only need to decide the conjugacy of a rational relation given by a sumfree expression. The decidability of conjugacy of a sumfree expression hinges on the notion of a common witness of a relation, inherited from Lyndon-Sch\"utzenberger's theorem. A \emph{witness} of a conjugate pair $(u,v)$ is a word $z$ such that either $uz = zv$ (\emph{inner witness}) or $zu = vz$ (\emph{outer witness}). A word $z$ is a \emph{common inner (resp.~outer) witness} of a relation, if for every pair $(u,v)$ in the relation, $z$ is an inner witness (\textit{resp.}~outer witness) of $(u,v)$. By \Cref{uz=zv}, if a relation has a common inner witness or common outer witness then it is conjugate. However, the converse is easily shown to be false.  

We show that a sumfree rational relation is conjugate if and only if either it has a common witness --- either a common inner witness or a common outer witness, but not necessarily both --- or there exist words $u$ and $v$ such that every pair generated by the relation is a cyclic shift of some word in $uv^*$. This characterisation of conjugacy is a main contribution of our paper.  It is in fact a generalisation of Lyndon-Sch\"utzenberger theorem characterising conjugacy of two words.

There are two interesting questions regarding common witnesses:
\begin{enumerate}
\item[I.] \emph{Is there a common witness for the relation $R$?}
\item[II.] \emph{Given a word $z$, is it a common witness of $R$?}	
\end{enumerate}
Question II proves to be comparatively more tractable, as it can be reduced to verifying whether the rational relation $R^\prime = \{(uz,zv) \mid (u,v) \in R\}$ (or, $R^\prime = \{(zu,vz) \mid (u,v) \in R\}$) consists of only identical pairs \cite{sakarovitch2009elements}. In fact, the decidability of the twinning property of a transducer is connected to Question II.  It is further elaborated in \Cref{Subsec:relatedwork}.

Question I, on the other hand, is more difficult \emph{a priori} as we do not have a bound on the size of a possible common witness. We provide a decision procedure for Question I. This is another main contribution of the article. Our characterisation of conjugacy via common witness, together with this procedure, yields an algorithm for deciding conjugacy.

In the rest of this section, we give an overview of determining the conjugacy of sumfree expressions. Subsequently, it is argued that decidability follows from two specific questions (\Cref{qn:kleene} and \Cref{qn:monoid}). Finally, we discuss related works. 
                                           
\subsection{Conjugacy of a Sumfree Expression}
\label{Subsec:decidability}
A rational expression is \emph{sumfree} if it does not use the union (i.e., $+$). 
The set of sumfree expressions is formally defined as a hierarchy. Fix a monoid $\bfM=(M, \cdot, 1)$. Given a class $\calC$ of expressions over $\bfM$, the \emph{Kleene closure} of $\calC$, denoted as $\calK \calC$, is the class of expressions 
$$\calK \calC = \calC \cup \{ E^* \mid E \in \calC\}.$$ 
The \emph{monoid closure} of $\calC$, denoted as $\calM \calC$, is the class of expressions 
$$\calM \calC = \calC \cup \{ E_1 \cdots E_k \mid E_i \in \calC \text{ for each $1\leq i \leq k$ and $k\in\mathbb{N}$}\}.$$

\begin{definition}[Sumfree Expression]
The family $\calF$ of sumfree expressions is defined inductively. Let $\calF_0 = M \cup \{\emptyset\}$ and
$\calF_{i+1} = \calM\calK\calF_i \ $  for each $i\geq 0$. We define
$$\calF = \bigcup_{i\geq 0} \calF_i.$$
The \emph{star height} of an expression $E$ is the smallest $k \in \mathbb{N}$ such that $E$ belongs to $\calF_k$.
\end{definition}

Over the free monoid $A^*$, the set of expressions $\calF_0$ is  $A^* \cup \{\emptyset\}$ and $\calK \calF_0$ is the set of expressions $\calF_0 \cup \{w^* \mid w \in A^* \}$ (for convenience we assume that $\emptyset$ is not used in any other expression other than $\emptyset$ itself).  
It is not difficult to see that $\calM\calK\calF_0$ is the set of expressions $\calK \calF_0 \cup \{u_1v_1^*u_2v_2^* \cdots u_{k}v_k^*u_{k+1} \mid u_i,v_i \in A^*, k\in \mathbb{N}\}$.

We now proceed to solve the conjugacy of sumfree expressions. We use pairs of lowercase Greek letters $(\alpha,\beta)$ with suitable modifications to denote pairs of words over $A^* \times A^*$. Clearly $\emptyset$ and $(\epsilon,\epsilon)$ are conjugates. For an expression of the form $(\alpha,\beta)$, it is straightforward to check conjugacy (see \Cref{isconjugate}). Thus, the conjugacy problem is decidable for the class of expressions $\calF_0$. 

To show the decidability of the conjugacy problem for the whole family $\calF$, it suffices to show that if the problem is decidable for $\calF_i$, $i\geq 0$, then it is also decidable for $\calK \calF_i$ and $\calF_{i+1}=\calM \calK \calF_i$. Then by induction on $i$ the decidability extends to the whole family $\calF$.

Assume that conjugacy is decidable for $\calF_i$. Let $E$ be an expression in $\calF_i$ and hence $E^* \in \calK\calF_i$. Since $L(E) \subseteq L(E^*)$,
\begin{proposition}\label{Econj}
If the expression $E^*$ is conjugate, then $E$ is conjugate. 
\end{proposition}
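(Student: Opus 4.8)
The plan is to derive this direction directly from the definition of the Kleene closure together with the definition of conjugacy for relations; no combinatorics on words is needed here. First I would recall from \eqref{closure} that $L(E^*) = L(E)^* = L(E)^0 \cup L(E)^1 \cup \cdots$, so that $L(E) = L(E)^1$ occurs as one of the terms of this union, whence $L(E) \subseteq L(E^*)$ --- precisely the containment already recorded in the sentence preceding the statement.

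Next I would invoke that, by definition, a set of pairs is conjugate exactly when every one of its members is a conjugate pair. Consequently conjugacy is hereditary under inclusion: if $S \subseteq T$ and $T$ is conjugate, then each pair of $S$ lies in $T$ and is therefore a conjugate pair, so $S$ is conjugate. Instantiating this with $S = L(E)$ and $T = L(E^*)$ gives the claim, which in the write-up amounts to a single sentence.

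I do not expect any obstacle in this proposition; its purpose is purely to set up one half of the eventual equivalence. All of the difficulty lies in the converse --- showing that conjugacy of $E^*$ forces $E$ to be conjugate \emph{and} to admit a common witness, and that such a witness can be propagated as one climbs the hierarchy from $\calF_i$ to $\calK\calF_i$ to $\calM\calK\calF_i = \calF_{i+1}$. That is the content toward which \Cref{conjugacy} and \Cref{generalmonomialwitness} are directed, and where the generalised Lyndon-Sch\"utzenberger arguments will do the real work.
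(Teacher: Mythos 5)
Your proof is correct and matches the paper's own reasoning exactly: the paper states the proposition as an immediate consequence of the containment $L(E) \subseteq L(E^*)$, which you justify via $L(E) = L(E)^1$, together with the fact that conjugacy of a set is inherited by subsets. Nothing further is needed.
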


Because conjugacy is decidable for $\calF_i$, we can check whether $E$ is conjugate. Therefore, to show the decidability of conjugacy for $\calK\calF_i$, it suffices to show the decidability of the following question.

\begin{question}[Conjugacy of Kleene Closures]
\label{qn:kleene}
Given a conjugate sumfree expression $E$, is $E^*$ conjugate?
\end{question}

Next, assume that conjugacy is decidable for $\calK \calF_i$. Let $E = (\alpha_0,\beta_0) E_1^* (\alpha_1,\beta_1) \cdots E_k^*(\alpha_k,\beta_k)$ be an expression in $\calM\calK\calF_i$ where $E_1^*,\ldots,E_k^*$ are from $\calK \calF_i$.
Analogous to the case of Kleene closures, $E$ is conjugate only if $E_1^*,\ldots,E_k^*$ are conjugate, as the next lemma shows.

\begin{lemma}\label{boundedkleenestar}
If the expression $E=(\alpha_0,\beta_0) F^* (\alpha_1,\beta_1)$ is conjugate, then $F^*$ is conjugate.
\end{lemma}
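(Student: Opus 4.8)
The plan is to show the contrapositive is false — i.e., assume $E = (\alpha_0,\beta_0) F^* (\alpha_1,\beta_1)$ is conjugate and derive that every pair in $L(F^*)$ is conjugate. Fix an arbitrary pair $(u,v) \in L(F^*)$. Since $(u,v) \in L(F^*)$, the pair $(u^n, v^n)$ belongs to $L(F^*)$ for every $n \geq 0$ (as $F^*$ is closed under products, $(u,v)^n$ is obtained by concatenating $n$ copies of $(u,v)$, each in $L(F)^*$; more precisely if $(u,v) \in L(F)^i$ then $(u^n,v^n) \in L(F)^{ni}$). Hence $(\alpha_0 u^n \alpha_1,\ \beta_0 v^n \beta_1) \in L(E)$ for every $n$, and by hypothesis each such pair is conjugate, so $|\alpha_0 u^n \alpha_1| = |\beta_0 v^n \beta_1|$; letting $n \to \infty$ forces $|u| = |v|$. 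Write $a = |\alpha_0|$, $b = |\alpha_1|$, $c = |\beta_0|$, $d = |\beta_1|$, and $\ell = |u| = |v|$.

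The core step is a pumping/counting argument on cyclic shifts. Conjugacy of $(\alpha_0 u^n \alpha_1, \beta_0 v^n \beta_1)$ means there is a cyclic rotation index $r_n \in \{0, \ldots, a + n\ell + b - 1\}$ witnessing it. The idea is that for large $n$, the overwhelming middle portion of $\alpha_0 u^n \alpha_1$ is a power of $u$ and of $\beta_0 v^n \beta_1$ is a power of $v$, so the rotation must act "internally" in a controlled way. I would split into cases according to the value of $r_n \bmod \ell$ for infinitely many $n$ (by pigeonhole, some residue class is hit infinitely often), and in each such infinite family argue that the rotation aligns a long block of $u$-periodic text with a long block of $v$-periodic text. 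A clean way to extract the conclusion: from two distinct values $n_1 < n_2$ in the same residue class with suitably matched rotations, one obtains an equation of the form $x u^m = v^m y$ (or a variant with the roles of the endpoints swapped) for some fixed words $x,y$ depending only on $\alpha_0,\alpha_1,\beta_0,\beta_1$ and all large $m$; standard combinatorics on words (Fine–Wilf / Lyndon–Sch\"utzenberger, already invoked elsewhere in the paper) then yields that $u$ and $v$ are conjugate.

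Alternatively — and this may be the cleaner route to actually write — I would avoid extracting an explicit word equation and instead argue directly: since $(\alpha_0 u^n \alpha_1, \beta_0 v^n \beta_1)$ is conjugate, $\beta_0 v^n \beta_1$ is a factor of $(\alpha_0 u^n \alpha_1)^2$. For $n$ large compared to $a+b+c+d$, the factor $v^n$ of $\beta_0 v^n \beta_1$ must sit inside a region of $(\alpha_0 u^n \alpha_1)^2$ that is entirely within the concatenation of at most two blocks $\alpha_0 u^n \alpha_1$, and all but boundedly many letters of that region lie inside occurrences of $u^n$; hence a long power $v^t$ (with $t \to \infty$ as $n \to \infty$) occurs inside a long power $u^s$, which by Fine–Wilf forces $u$ and $v$ to have a common primitive root $p$, say $u = p^i$, $v = p^j$. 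Then $|u| = |v|$ gives $i = j$, so $u = v$, which is trivially conjugate. I expect the main obstacle to be the bookkeeping in this last argument — precisely locating where $v^n$ lands inside $(\alpha_0 u^n \alpha_1)^2$ and bounding the number of "boundary" letters uniformly in $n$ — but this is a finite case analysis on the offset, not a conceptual difficulty.
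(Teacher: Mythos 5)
Your second route has a concrete error at the final step. You write that a long power $v^t$ occurring as a factor inside a long power $u^s$ forces, ``by Fine--Wilf,'' a \emph{common} primitive root $p$ with $u=p^i$, $v=p^j$, whence $|u|=|v|$ gives $u=v$. This is the wrong form of the theorem: an aligned common \emph{prefix} of $u^\omega$ and $v^\omega$ of sufficient length gives a common primitive root (\Cref{finewilf}), but an arbitrary common \emph{factor} only gives that the primitive roots are \emph{conjugate} (the Conjugate Fine--Wilf theorem, \Cref{commonfactor}). Since the offset at which $v^t$ lands inside $(\alpha_0 u^n \alpha_1)^2$ is governed by $|\alpha_0|,|\beta_0|$ and the rotation index and need not be $0 \bmod |u|$, you only get $\rho_u \sim \rho_v$, not $\rho_u = \rho_v$. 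The claimed conclusion $u=v$ is in fact false: take $F=\{(ab,ba)\}$, $\alpha_0=\varepsilon$, $\beta_0=a$, $\alpha_1=a$, $\beta_1=\varepsilon$; then $(ab)^n a = a(ba)^n$ for all $n$, so $E$ is conjugate, yet $u=ab\neq ba=v$. The correct chain is $\rho_u\sim\rho_v$, and then $|u|=|v|$ forces equal exponents, so $u$ and $v$ are conjugate powers of conjugate primitive words and hence conjugate --- which is all the lemma asks, but is weaker than equality. Your first route is too sketchy to evaluate, and you yourself say you would not write it out.

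Once the Fine--Wilf step is corrected, your Route~2 is a valid alternative to the paper's argument, but it invokes heavier machinery. The paper's proof is more elementary: since $|u|=|v|$ and the overlap is long, $v$ is already a factor of $uu$, so $v=xy$ with $u=px=yq$; then $|u|=|v|$ gives $|p|=|y|$ and hence $p=y$, so $u=yx$ and $(u,v)$ is conjugate --- no periodicity theorem is needed. The paper thus reads off the conjugating cut directly rather than passing through primitive roots.
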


\begin{proof}

If $F^*$ is an empty set, then it is conjugate. Otherwise, assume that $(u,v)$ is a nonempty pair that belongs to $F^*$. Therefore, $(u^\ell,v^\ell)$ for each $\ell \geq 0$ also belongs to $F^*$. We can safely assume that $|u| = |v|$, otherwise each iteration will increase the difference in length between $u^\ell$ and $v^\ell$, leading to nonconjugacy of $E$.

Let $k = |\alpha_0| + |\beta_0| + |\alpha_1| + |\beta_1|$. Consider the pair $(\as_0,\bs_0)(u^\ell,v^\ell)(\as_1,\bs_1)$ where $\ell$ is some value much larger than $k$, say $2^k$. Since $\ell$ is much larger than $k$ and $(\as_0u^\ell \as_1,\bs_0v^\ell \bs_1)$ is conjugate, there exist large factors of $u^\ell$ and $v^\ell$ that match as shown in Figure \ref{uvlarge}. Since $|u| = |v|$, we can infer that $u$ is a factor of $vv$, and $v$ is a factor of $uu$. 
\begin{figure}[t]

\centering
\begin{tikzpicture}

\draw[red!50,line width=.55mm,loosely dotted] [-](-1,2) -- (0,2)[anchor=south, xshift=-1.5cm, yshift=0cm]node{};
\draw[red!50,line width=.55mm,densely dotted] [|-](0,2) -- (2.5,2)[anchor=south, xshift=-1.5cm, yshift=0cm]node{$u$};
\draw[red!50,line width=.55mm,densely dotted] [|-](2.5,2) -- (5,2)[anchor=south, xshift=-1.5cm, yshift=0cm]node{$u$};
\draw[red!50,line width=.55mm,densely dotted] [|-](5,2) -- (7.5,2)[anchor=south, xshift=-1.5cm, yshift=0cm]node{$u$};
\draw[red!50,line width=.55mm,densely dotted] [|-](7.5,2) -- (10,2)[anchor=south, xshift=-1.5cm, yshift=0cm]node{$u$};
\draw[red!50,line width=.55mm,loosely dotted] [|-](10,2) -- (11,2)[anchor=south, xshift=-1.5cm, yshift=0cm]node{};

\draw[blue!50,line width=.55mm,loosely dotted] [-](.5,0) -- (1.5,0)[anchor=north, xshift=-1.5cm, yshift=0cm]node{};
\draw[blue!50,line width=.55mm,densely dotted] [|-](1.5,0) -- (4,0)[anchor=north, xshift=-1.5cm, yshift=0cm]node{$v$};
\draw[blue!50,line width=.55mm,densely dotted] [|-](4,0) -- (6.5,0)[anchor=north, xshift=-1.5cm, yshift=0cm]node{$v$};
\draw[blue!50,line width=.55mm,densely dotted] [|-](6.5,0) -- (9,0)[anchor=north, xshift=-1.5cm, yshift=0cm]node{$v$};
\draw[blue!50,line width=.55mm,densely dotted] [|-](9,0) -- (11.5,0)[anchor=north, xshift=-1.5cm, yshift=0cm]node{$v$};
\draw[blue!50,line width=.55mm,loosely dotted] [|-](11.5,0) -- (12.5,0)[anchor=north, xshift=-1.5cm, yshift=0cm]node{};

\draw [decorate,
decoration = {calligraphic brace,
raise=3pt,
amplitude= 5pt,mirror
}] (2.5,2) -- (4,2);
\draw [decorate,
decoration = {calligraphic brace,
raise=3pt,
amplitude= 5pt,mirror
}] (4,2) -- (5,2);

\draw [decorate,
decoration = {calligraphic brace,
raise=3pt,
amplitude= 5pt,mirror
}] (5,2) -- (6.5,2);

\draw [decorate,
decoration = {calligraphic brace,
raise=3pt,
amplitude= 5pt,mirror
}] (6.6,2) -- (7.5,2);

\draw [decorate,
decoration = {calligraphic brace,
raise=3pt,
amplitude= 5pt
}] (2.5,0) -- (4,0);
\draw [decorate,
decoration = {calligraphic brace,
raise=3pt,
amplitude= 5pt
}] (4,0) -- (5,0);

\draw [decorate,
decoration = {calligraphic brace,
raise=3pt,
amplitude= 5pt
}] (5,0) -- (6.5,0);

\draw [decorate,
decoration = {calligraphic brace,
raise=3pt,
amplitude= 5pt
}] (6.6,0) -- (7.5,0);

\draw[draw=none] [-](2.5,2) -- (4,2)[anchor=south, xshift=-0.7cm, yshift=-.7cm]node{$p$};
\draw[draw=none] [-](4,2) -- (5,2)[anchor=south, xshift=-0.45cm, yshift=-.7cm]node{$x$};
\draw[draw=none] [-](5,2) -- (6.5,2)[anchor=south, xshift=-0.7cm, yshift=-.7cm]node{$y$};
\draw[draw=none] [-](6.5,2) -- (7.5,2)[anchor=south, xshift=-0.45cm, yshift=-.7cm]node{$q$};

\draw[draw=none] [-](2.5,0) -- (4,0)[anchor=south, xshift=-0.7cm, yshift=.25cm]node{$p$};
\draw[draw=none] [-](4,0) -- (5,0)[anchor=south, xshift=-0.45cm, yshift=.25cm]node{$x$};
\draw[draw=none] [-](5,0) -- (6.5,0)[anchor=south, xshift=-0.7cm, yshift=.25cm]node{$y$};
\draw[draw=none] [-](6.5,0) -- (7.5,0)[anchor=south, xshift=-0.45cm, yshift=.25cm]node{$q$};

\end{tikzpicture}
\caption{$v$ as infix of $uu$.}

\label{uvlarge}
\end{figure}

Since $v$ is an infix of $uu$, the following holds as shown in Figure \ref{uvlarge}. There exist words $x,y,p,$ and $q$ such that $v = xy$ and $u = px = yq$.
Since $|u| = |v|$, length of $p$ and length of $y$ are the same, that implies $p=y$ (since $u = px = yq)$. Therefore, $u = yx$. Hence $u$ and $v$ are conjugate words.
Since the pair $(u,v)$ was arbitrary, $F^*$ is conjugate.
\end{proof}

We can generalize the above lemma to the general form of sumfree expressions.
\begin{corollary}\label{boundedconj}
If the expression $E = (\alpha_0,\beta_0) E_1^* (\alpha_1,\beta_1) \cdots (\alpha_{k-1},\beta_{k-1})E_k^*(\alpha_k,\beta_k)$, for $k > 0$, is conjugate, then each of $E_1^*, \ldots, E_k^*$ is conjugate.
\end{corollary}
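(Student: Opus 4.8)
The plan is to reduce \Cref{boundedconj} to the already-proved \Cref{boundedkleenestar} by isolating one Kleene-star factor at a time. Fix an index $j$ with $1 \leq j \leq k$, and suppose $E = (\alpha_0,\beta_0) E_1^* (\alpha_1,\beta_1) \cdots E_k^*(\alpha_k,\beta_k)$ is conjugate. I want to show $E_j^*$ is conjugate. The idea is to instantiate all the other Kleene-star factors by the empty word: since $(\epsilon,\epsilon) \in L(E_i^*)$ for every $i$, the relation
\[
(\alpha_0',\beta_0') \; E_j^* \; (\alpha_1',\beta_1')
\]
is a subset of $L(E)$, where $(\alpha_0',\beta_0') = (\alpha_0,\beta_0)(\alpha_1,\beta_1)\cdots(\alpha_{j-1},\beta_{j-1})$ collects the prefix contributions and $(\alpha_1',\beta_1') = (\alpha_j,\beta_j)(\alpha_{j+1},\beta_{j+1})\cdots(\alpha_k,\beta_k)$ collects the suffix contributions. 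Since $L(E)$ is conjugate and conjugacy is inherited by subsets, the relation $(\alpha_0',\beta_0') E_j^* (\alpha_1',\beta_1')$ is conjugate, which is exactly an expression of the form treated in \Cref{boundedkleenestar}. Applying that lemma gives that $E_j^*$ is conjugate.

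The only slightly delicate point is to verify that the prefix/suffix contributions combine into single pairs $(\alpha_0',\beta_0')$ and $(\alpha_1',\beta_1')$ of words, so that the resulting expression really is in the shape $(\alpha,\beta) F^* (\gamma,\delta)$ to which \Cref{boundedkleenestar} applies. This is immediate because the product of finitely many singleton pairs in $A^* \times A^*$ is again a singleton pair: $(\alpha_0\alpha_1\cdots\alpha_{j-1},\, \beta_0\beta_1\cdots\beta_{j-1})$ and $(\alpha_j\cdots\alpha_k,\, \beta_j\cdots\beta_k)$. Formally one also notes that setting $E_i^*$ to contribute $(\epsilon,\epsilon)$ for $i \neq j$ is legitimate since $(\epsilon,\epsilon) = (\epsilon,\epsilon)^0 \in L(E_i^*)$ by the definition of Kleene closure.

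Since $j$ was arbitrary, every $E_j^*$ is conjugate, which is the claim. I do not expect any real obstacle here: the corollary is a routine "substitute the empty iteration elsewhere" argument that packages the multi-factor monomial case into repeated use of the single-factor \Cref{boundedkleenestar}. If one prefers, the statement can alternatively be obtained by a direct induction on $k$, peeling off the last factor $E_k^*(\alpha_k,\beta_k)$ and absorbing $(\alpha_k,\beta_k)$ together with $(\epsilon,\epsilon) \in L(E_k^*)$ into the suffix pair, but the one-shot substitution argument above is the cleanest route.
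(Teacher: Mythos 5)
Your proof is correct and takes essentially the same route as the paper's: you instantiate every $E_i^*$ with $i\neq j$ by $(\epsilon,\epsilon)$, collapse the surrounding constant pairs into $(\alpha_0\cdots\alpha_{j-1},\beta_0\cdots\beta_{j-1})$ and $(\alpha_j\cdots\alpha_k,\beta_j\cdots\beta_k)$, observe that the resulting set is a (conjugate) subset of $L(E)$, and then invoke \Cref{boundedkleenestar}. This is precisely the paper's one-line argument, just with the ``substitute the empty iteration elsewhere'' step spelled out.
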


\begin{proof}
If $E$ is conjugate, then for each $i \in \{1, \ldots, k\}$, $$(\as_0 \cdots \as_{i-1},\bs_0 \cdots \bs_{i-1}) E_i^* (\as_i \cdots \as_k,\bs_i \cdots \bs_k) \subseteq E$$ is conjugate. Therefore, from \Cref{boundedkleenestar} we get that each of $E_1^*, \ldots, E_k^*$ is conjugate.
\end{proof}

Since the conjugacy of $\calK\calF_i$ is decidable, we can check whether $E_1^*, \ldots, E_k^*$ are conjugate expressions. Thus, to show the decidability of $\calM\calK\calF_i$, it suffices to show the decidability of the following question.

\begin{question}[Conjugacy of Monoid Closures]
\label{qn:monoid}
Given conjugate sumfree expressions $E_1^*, \ldots, E_k^*$, is the expression $E = (\alpha_0,\beta_0) E_1^* (\alpha_1,\beta_1) \cdots (\alpha_{k-1},\beta_{k-1})E_k^*(\alpha_k,\beta_k)$ conjugate?
\end{question}
We show that \Cref{qn:kleene} and \Cref{qn:monoid} can be effectively answered. The idea is to use the notion of common witness that we mentioned in the beginning  (further elaborated in \Cref{def:commonwitness}).

We present two common witness theorems that address the above questions:
\begin{enumerate}
\item \label{Mj:1} Let $G$ be an arbitrary set of conjugate pairs. The set $G^*$ is conjugate if and only if $G$ has a common witness (\Cref{conjugacy}). 
\item\label{Mj:2} Let $G_1^*, \ldots, G_k^*$, $k > 0$, be arbitrary sets of conjugate pairs. The set \[G=(\alpha_0,\beta_0) G_1^* (\alpha_1,\beta_1) \cdots (\alpha_{k-1},\beta_{k-1})G_k^*(\alpha_k,\beta_k),\] called a \emph{sumfree set}, is conjugate if and only if either it has a common witness, or each $G_i^*$ has a common witness and there exist words $u,v$ such that each pair of words in $G$ is a cyclic shift of a word in $uv^*$. (\Cref{thm:conjugateMonoidClosure}).

\end{enumerate}

\begin{remark}
Note that the assumption of conjugacy of the sets $G, G_1^*, \ldots, G_k^*$ is not necessary. However, if they are not conjugate then the corresponding sets will neither have a common witness nor be conjugate, and the statements will be vacuously true (Since \Cref{Econj} and \Cref{boundedconj} also hold for arbitrary sets).
\end{remark}
\Cref{Mj:2} is a generalisation of \Cref{Mj:1}, and its proof relies on \Cref{Mj:1}. Both theorems are generalisations of the Lyndon-Sch\"utzenberger theorem. 

In the case of rational sumfree expressions of pairs the above theorems are \emph{effective}. In \Cref{Subsec:computingwitness} we show that a common witness of a rational sumfree expression, if exists, is computable in polynomial time in the length of the expression. Finally, we give a decision procedure proving our main theorem (\Cref{maintheorem}) in \Cref{sec:decidingconj}.

\subsection{Related Work}
\label{Subsec:relatedwork}
\subparagraph*{Conjugate Post Correspondence Problem:} A problem much related to \Cref{conjugacy} is the \emph{Conjugate Post Correspondence problem}: given a finite set of pairs $G$, does there exist of a pair $(u,v) \in G^*$ such that $u$ and $v$ are conjugate? This problem is shown to be undecidable by reduction to the word problem of a special type of semi-Thue systems \cite{ConjugatePCP}. In \Cref{Sec:cwtheorems}, we show that the universal version of this problem --- checking if all the pairs in $G^*$ are conjugate --- is decidable.

\subparagraph*{Edit-Distance between Transducers:}
Distance between two functional transducers are studied in \cite{editdistance}. 
For a metric $d$ on words,  the distance between two functional transducers $\calT_1,\calT_2$ from $A^*$ to $B^*$ is given by  
\[d(\calT_1,\calT_2) = \begin{cases} \sup \left \{\,  d(\calT_1(w),\calT_2(w)) \,\mid\, w \in \dom(\calT_1) \,\right \} & \text{ if $\dom(\calT_1) = \dom(\calT_2)$} \\
 \infty & \text{ otherwise }
 \end{cases}\]
Two transducers are said to be \emph{close} if the distance between them is finite ($<$$\infty$). This allows to compare two transducers beyond their equivalence.
One important class of metrics over words is edit distance, in particular Levenshtein edit distance. Levenshtein edit distance between two words is the minimum number of insertions, deletions, and substitutions required to convert one word to another. In the case of edit distances, closeness guarantees that the output of one transducer can be converted to the output of another by doing a bounded number of edits. 
It is shown that two functional transducers with identical domain are close with respect to Levenshtein edit distance if and only if all the loops in the cartesian product of the transducers are conjugates \cite{editdistance}. Our result gives an algorithm for checking the latter, and hence decidability of closeness of transducers follows.

\subparagraph*{Twinning and Sequentiality:}\label{Twinning}
Rational functions are functions defined by a finite state transducer. A rational function is \emph{sequential} if it can be realised by a sequential transducer, i.e., those that are deterministic in the input. These were originally called subsequential in the literature by Sch\"utzenberger \cite{schuetzenberger1977variante}. 
 Sequentiality of rational functions is a decidable property due to a topological characterisation called the \emph{twinning property} by Choffrut \cite{choffrut1977caracterisation}.

A transducer $\calT$ from $A^*$ to $B^*$ is an automaton over $A^* \times B^*$. A transition of $\calT$ from state $p$ to state $q$ is of the form $(p,(u,v),q)$ where the word $u \in A^*$ is called the {input} and the word $v \in B^*$ is called the {output}. A path from state $p$ to $q$ on an input word $w$ producing an output word $x$ is represented as $p \xrightarrow{w \mid x} q$. The transducer $\calT$ realises the rational relation $\{ (w,x) \mid q_0 \xrightarrow{w \mid x} q_f\}$ over $A^* \times B^*$ where $q_0,q_f$ is an initial and a final state respectively.

The \emph{delay} between two words $u$ and $v$, denoted by $\delay(u,v)$, is the pair $(u',v')$ such that $u = \ell u'$ and $v=\ell v'$ where $\ell$ is the longest common prefix of $u$ and $v$.

A  transducer with the initial state $q_0$ is \emph{twinning} if for all states $p,q$ and for all words $w_1,w_2 \in A^*$ and $x,y,u,v \in B^*$, if $q_0 \xrightarrow{w_1 \mid x} p \xrightarrow{w_2 \mid u} p$ and $q_0 \xrightarrow{w_1 \mid y} q \xrightarrow{w_2 \mid v} q$, then $\delay(x,y) = \delay(xu,yv)$.  
 This is equivalent to either $u = v = \epsilon$, or $u \neq \epsilon \neq v$ and $u$ and $v$ are conjugates with $\delay(x,y)$ being a witness of $(u,v)$ (Proposition 6.2 of \cite{Berstel}).

Since the twinning property compares paths with the same input label, an equivalent definition for twinning can be defined on the square of the transducer \cite{beal2003squaring,lombardy2006sequential}. The \emph{square} of a transducer $\calT$, denoted by $\calT^2$, is a cartesian product of $\calT$ by itself, equivalent to the transducer from $A^*$ into $B^* \times B^*$.
The original definition of twinning has the following equivalent form.
Let $\calT$ be a trim transducer. Two states $p$ and $q$ of $\calT$ are \emph{twinned} if whenever $(u,v)$ is a nonempty output pair of a loop in $\calT^2$ rooted at a reachable state $(p,q)$, and for any path from initial state to $(p,q)$ in $\calT^2$ with output $(x,y)$, 
the following holds: $\delay(x,y) = \delay(xu,yv)$, or equivalently, $\delay(x,y)$ is a witness of $(u,v)$.
A transducer $\calT$ is twinning if, for any pair of states $p$ and $q$ that are reachable in $\calT^2$, the states $p$ and $q$ are twinned.

Thus checking twinning property reduces to computing a number of instances of the following problem:  decide if a given rational relation $R$ is conjugate with a given witness $z$. This is straightforward to check: verify whether the rational relation $R^\prime = \{(uz,zv) \mid (u,v) \in R\}$ (or, $R^\prime = \{(zu,vz) \mid (u,v) \in R\}$) consists of only identical pairs.

Generalisation of the twinning property called \emph{weak twinning} is used to characterise \emph{multi-sequential} (also called \emph{plurisubsequential} or \emph{finitely sequential}) functions \cite{choffrut1986decomposition} and relations \cite{jecker2015multi}. This property also can be stated as conjugacy of suitably defined rational relations with respect to certain predetermined witnesses, and hence decidable albeit higher complexity.

The results in this paper holds the potential to define more general properties of transducers where the witnesses are not known a priori, for instance determinising approximately.

\subparagraph*{Others:} 

A generalisation of Lyndon-Sch\"utzenberger's theorem to infinite sets, though with no comparison to ours, is considered in \cite{OnConjugacyOfLanguages, karhumaki2001combinatorial}, where solutions to the language equation $XZ=ZY$, where $X,Y,Z$ are sets of words, are given for special cases. The general solution is still open.

\subsection{Organisation of the Paper}
\label{Subsec:organisation}

In \Cref{Sec:prelim}, we revisit the standard tools from combinatorics of words required to state and prove our main theorems. We present the common witness theorems for addressing  \Cref{qn:kleene} and \Cref{qn:monoid}, along with the proofs of the easier directions in \Cref{Sec:cwtheorems}. However, the difficult directions require a detailed case analysis, and we complete the proof of common witness theorem for Kleene closure in \Cref{sec:proofcw}. We prove the common witness and conjugacy theorem of monoid closures in \Cref{sec:conjugacy_monomial} and \Cref{sec:conjugateMonoid} respectively. We outline the decision procedure for computing the witness in \Cref{Subsec:computingwitness}, and deciding conjugacy in \Cref{sec:decidingconj}.
In \Cref{Sec:conclusion}, we state some future directions and conclude.

\section{Combinatorial Tools on Words}
\label{Sec:prelim}
In this section, we introduce some preliminary notions and results that we need for the rest of the paper. 

Let $\mathbb N$ denote the set $\{1, 2, \dots\}$. For $i,j \in \mathbb{N} \cup \{0\}$, define $[i,j]$ as the set $\{i, i+1, \dots, j\}$ if $i \le j$, and as $\{i, i-1, \dots, j\}$ otherwise. Given a word $u$ and $k \in \mathbb N \cup \{\omega\}$, we denote by $u^k$ the concatenation of  $u$ with itself $k$ times. A word $u$ is called a \emph{factor} (respectively \emph{prefix}, \emph{suffix}) of a word $v$, if there exist words $x,y \in A^*$ such that $v = xuy$ (respectively $v = uy$, $v = xu$). Let $u[i \cdot\cdot j]$ denote the factor of $u$ from index $i$ to $j$ where $1 \leq i \leq j \leq |u|$. A \emph{cyclic shift} of a word $w=uv$ is a word $vu$ for some words $u,v \in A^*$.  Let $\ic{u}{i}$, denote the word obtained after $i$ left cyclic shifts of a word $u$ for $i \geq 0$. 
If $u$ and $v$ are words such that $u$ is a prefix of $v$, the \emph{left quotient} of $v$ by $u$, denoted by $u^{-1}v$, is the word $x$ such that $v = ux$. Similarly, the \emph{right quotient} of $v = xu$ by $u$, denoted as $vu^{-1}$, is the word $x$.

Given a set of pairs of words $G$, let $\fst(G)$ denote the projection of $G$ onto the first component, i.e., $\fst(G) = \{u \mid (u,v) \in G\}$.   Similarly,  $\snd(G)$ denote the projection of $G$ onto the second component, i.e., $\snd(G) = \{v \mid (u,v) \in G\}$.

Two words $u$ and $v$ are conjugate if they are cyclic shifts of each other, i.e., there exist words $x,y$ such that $u=xy$ and $v=yx$. We use $u \sim v$ to denote that $u$ and $v$ are conjugate words. Let $X,Y \subseteq A^*$. We say $X$ is a \emph{conjugate subset} of $Y$, denoted by $X \subsetsim Y$, if for every word $x \in X$ there exists a word $y \in Y$ such that $x \sim y$.  Also, $X$ and $Y$ are \emph{conjugate}, denoted by $X \sim Y$, if $X \subsetsim Y$ and $Y \subsetsim X$.

\subsection{Primitive words, Roots and Conjugacy}
A word is \emph{primitive} if it cannot be expressed as a power of any strictly smaller word. For example, $\mathit{aba}$ is primitive but $\mathit{abab}$ is not. 

A word $\rho$ is called a \emph{primitive root} of a word $u$ if $u = \rho^n$ for $n \geq 1$ and $\rho$ is a primitive word. Every word $u$ has a unique primitive root, denoted by $\rho_u$ (\cite{Lot83}, Proposition 1.3.1). 

\begin{proposition}\label{computeroot}
The primitive root of a word can be computed in time linear in the length of the word.
\end{proposition}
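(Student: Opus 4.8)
The plan is to reduce the computation to a bounded search over candidate roots, using the observation that the length of any root must divide the length of the word. Write $n = |u|$. If $u = \rho^k$ for some $\rho \in A^+$ and $k \geq 1$, then $|\rho|$ divides $n$; hence there are at most $n$ candidate lengths, and for a fixed candidate length $d$ the only possible root of that length is the prefix $u[1 \tdots d]$ (since it must coincide with the first $d$ letters of $\rho^k$). So the whole search space has size at most $n$.

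The algorithm then enumerates the divisors $d$ of $n$ in increasing order; for each $d$ it sets $\rho = u[1\tdots d]$ and tests whether $u = \rho^{n/d}$, which amounts to checking $u[i] = u[i-d]$ for every $i$ with $d < i \leq n$ and costs $O(n)$ time. It outputs $u[1\tdots d]$ for the least $d$ that passes the test. Such a $d$ always exists, since $d = n$ passes trivially.

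For correctness, let $d$ be the least divisor passing the test and $\rho = u[1\tdots d]$, so that $u = \rho^{n/d}$ and $\rho$ is a root of $u$. The word $\rho$ is primitive: if $\rho = \sigma^m$ with $m \geq 2$, then $|\sigma|$ divides $d$, hence divides $n$, and $u = \sigma^{mn/d}$ shows that the smaller divisor $|\sigma| < d$ also passes the test, contradicting minimality of $d$. By \Cref{uniqueroot}, $\rho$ is therefore the unique primitive root $\rho_u$. On the complexity side, $n$ has at most $n$ divisors and each test costs $O(n)$, so the procedure runs in $O(n^2)$ time, polynomial in $|u|$. (A linear bound is in fact available: the smallest period of $u$, and hence $\rho_u$, can be read off from the border/failure array of $u$ computed in $O(n)$ time; but the crude quadratic bound already establishes the proposition.)

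I do not expect any real obstacle here. The only points that need a line of justification are the divisibility remark that bounds the candidate set to prefixes of lengths dividing $n$, and the minimality argument showing that the shortest prefix whose appropriate power equals $u$ is primitive — both of which are immediate from the definitions together with \Cref{uniqueroot}.
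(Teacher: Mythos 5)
Your proof is correct, but it takes a genuinely different route from the paper's. The paper's proof works with cyclic shifts: it computes the smallest $i \in \{1,\ldots,|w|\}$ such that $\ic{w}{i}=w$, and then takes $w[1\tdots i]$ as the primitive root (implicitly relying on the standard fact that this minimal shift length always divides $|w|$). You instead enumerate the divisors $d$ of $|u|$, test whether the prefix $u[1\tdots d]$ raised to the power $n/d$ reproduces $u$, and return the shortest one that succeeds. Both procedures are quadratic. Your version is arguably more self-contained: the minimality-implies-primitivity argument (if $\rho=\sigma^m$ then $|\sigma|$ is a smaller passing divisor) is spelled out explicitly, whereas the paper's one-liner silently uses the divisibility of the least cyclic-shift period, which itself needs a short combinatorial justification. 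The paper's version is slightly closer in spirit to the conjugacy checks used elsewhere in the paper (\Cref{isconjugate}), which also iterate over cyclic shifts.
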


\begin{proof}
For a word $w$, find the first nontrivial occurrence of $w$ in $ww$, and let the offset be the prefix of $ww$ before this occurrence. This can be done in linear time using the classical pattern matching algorithm by Knuth-Morris-Pratt \cite{knuth1977fast}. If the length of the offset divides the length of $w$, the offset is the primitive root of the word $w$.
\end{proof}

\begin{proposition}\label{prop:comb}
Let $u, v$ be nonempty words such that their primitive roots are distinct, i.e., $\rho_u \neq \rho_v$. Let $w = uv$. Then $\rho_w \neq \rho_u$ and $\rho_w \neq \rho_v$.

\end{proposition}
\begin{proof}
Consider the word $w = uv$. We show that $\rho_w \neq \rho_u$ and $\rho_w \neq \rho_v$. Assume for contradiction that $\rho_w = \rho_u$. Then $uv \in (\rho_u)^*$. Since $u \in (\rho_u)^*$, it follows that $v \in (\rho_u)^*$ as well. Since primitive root of a word is unique, we get $\rho_v = \rho_u$, which contradicts the assumption.
By a similar argument, we get $\rho_w \neq \rho_v$. Therefore, $\rho_w \neq \rho_u$ and $\rho_w \neq \rho_v$, which completes the proof.
\end{proof}
\begin{proposition}
\label{prop:nontriv-concat2}
    Let $u,v\in A^*$ be words such that $\rho_u \sim \rho_v$ and $\rho_u \neq \rho_v$. Then any word \begin{align} w \in (u+v)^*\end{align}
    in which both $u$ and $v$ appear at least once, satisfies $\rho_{w} \not \sim \rho_u$. 
\end{proposition}
\begin{proof}

Suppose, for contradiction, that $\rho_{w} \sim \rho_u$. Then, we have $|\rho_{w}| = |\rho_u| = |\rho_{v}|$.
Since $w \in (\rho_{w})^*$ and both $u$ and $v$ are factors of $w$ with lengths divisible by $|\rho_{w}|$, it follows that $u, v \in (\rho_{w})^*$. By uniqueness of primitive roots, this implies $\rho_{u} = \rho_{w} = \rho_{v}$ contradicting $\rho_{u} \neq \rho_{v}$.
\end{proof}

First Theorem of Lyndon-Sch\"utzenberger relates primitivity and commutativity.

\begin{theorem} [\cite{lyndonSch}, Lemma 3]\label{commute}
Two words $u,v \in A^*$ commute, i.e., $uv =vu$, if and only if they have the same primitive root.
\end{theorem}

We lift the notion of primitive root to a pair and a set of pairs as follows.
\begin{definition}[Primitive Root of a Set of Pairs]
The primitive root of a pair $(u,v)$ is the pair $(\rho_u,\rho_v)$.
The primitive root of a set of pairs $P$, denoted by \roots{P}, is the set of all primitive roots of each pair in $P$, i.e., $\roots{P} = \{ (\rho_{u},\rho_{v}) \mid (u,v) \in P\} \ . $

For example, $\{ (\mathit{ab},\mathit{ba}),(\mathit{bab},\mathit{abb})\}$ is the primitive root of the set $\{(\mathit{abab},\mathit{baba}),(\mathit{bab},\mathit{abb})\}$.
\end{definition}

Recall that a pair of words $(u,v)$ is conjugate if $v$ can be obtained from $u$ by cyclic shifts, i.e., there exist words $x,y$ such that $u=xy$ and $v=yx$. For example, $(\mathit{ aaab}, \mathit{aaba})$ is a conjugate pair with $x=a$ and $y=aab$. It is not difficult to see that conjugacy relation is an equivalence relation on the set of words.

\begin{proposition}\label{isconjugate}
Deciding if a pair of words is conjugate can be done in linear time. 
\end{proposition}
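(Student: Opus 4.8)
The plan is to reduce conjugacy testing to a substring search. First I would recall (and briefly reprove) the standard characterisation: for words $u,v \in A^*$ we have $u \sim v$ if and only if $|u| = |v|$ and $u$ is a factor of $vv$. The forward direction is immediate, since if $u = xy$ and $v = yx$ then $vv = y(xy)x = yux$, so $u$ is a factor of $vv$ and obviously $|u| = |v|$. For the converse, suppose $|u| = |v| = n$ and $vv = p\,u\,q$. Then $|p| + |q| = n$, so $p$ is a prefix of $v$ and $q$ is a suffix of $v$. Writing $v = ps$ and $v = tq$ with $|t| = n - |q| = |p|$, the two prefixes $p$ and $t$ of $v$ have equal length and hence coincide, so $v = pq$; cancelling in $vv = puq = pqpq$ gives $u = qp$, whence $u \sim v$.

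Given this characterisation, the algorithm is: reject immediately if $|u| \neq |v|$; otherwise test whether $u$ occurs as a factor of $vv$. Even the naive sliding-window scan does this in time $O(|u|\cdot|vv|) = O(|u|^2)$, which already yields the claimed quadratic bound (a linear-time pattern matcher such as Knuth--Morris--Pratt would of course also suffice). Alternatively, reusing the left-shift notation from \Cref{Sec:prelim}, one can simply check whether $\ic{u}{i} = v$ for some $i \in \{0,\ldots,|u|-1\}$: there are $|u|$ candidate shifts and each shift-and-compare costs $O(|u|)$, again $O(|u|^2)$ overall.

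I do not expect any genuine obstacle here: the only step that needs an argument is the characterisation of conjugacy via factors of $vv$, and the running-time bound is a routine observation.
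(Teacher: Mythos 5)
Your proposal is correct, and it is in fact more informative than the paper's one-line argument. The paper simply observes that one can test $\ic{u}{i} = v$ for each $i \in \{1,\ldots,|u|\}$, giving $O(|u|^2)$ time — which is precisely the ``alternative'' you mention at the end. Your primary route goes through the standard characterisation that $u \sim v$ iff $|u| = |v|$ and $u$ is a factor of $vv$; your proof of this equivalence is correct (the cancellation step $puq = pqpq \Rightarrow u = qp$ is sound). The factor-of-$vv$ view buys something the paper's phrasing does not: it makes visible that the problem is really a string-matching problem, and hence is solvable in linear time via KMP or Z-algorithm rather than merely in quadratic time. Since the statement only claims a quadratic bound, both approaches suffice, but yours gives the sharper bound essentially for free and explains \emph{why} conjugacy testing is easy, rather than just asserting the running time of a brute-force loop.
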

\begin{proof}
Two words $u$ and $v$ are conjugate if and only if either $u=v$ or $|u| = |v|$ and there is exactly one nontrivial occurence of $v$ in $uu$.

If $u$ and $v$ are distinct conjugates, then $u=xy$ and $v=yx$ for some nonempty words $x,y$, implying $uu = xyxy = xvy$, where $v$ occurs nontrivially in $uu$ exactly once. Conversely, if $v$ occurs nontrivially in $uu$ exactly once and $|u| = |v|$, then there exist words $x,y,p$ and $q$ such that $v = xy$ and $u = px = yq$. Since $|u| = |v|$, we get $|p|=|y|$. This implies $p=y$ since $u = px = yq$. Therefore, $u = yx$. Hence $u$ and $v$ are conjugate words.

Therefore, deciding whether a pair of words $(u,v)$ is conjugate can be done by checking if $u =v$. If not, verify $|u| = |v|$ and whether $v$ occurs in $uu$ nontrivially exactly once. This can be done in linear time using the Knuth-Morris-Pratt algorithm \cite{knuth1977fast}.
\end{proof}

A conjugate pair of words and its primitive root are connected.
\begin{proposition}[\cite{ChoffrutKarhumaki97}, Lemma 1]\label{primitiveconjugates}
If a pair $(u,v)$ is conjugate, then its primitive root $(\rho_u,\rho_v)$ is also conjugate. Moreover, $(u,v) = (\rho_u, \rho_v)^n$ for some $n \geq 1$.
\end{proposition}
Below is a fundamental result on words by Fine and Wilf.
\begin{theorem}[Theorem 5 of \cite{ChoffrutKarhumaki97}]\label{finewilf}
Two nonempty words $u$ and $v$  have the same primitive root if and only if the words $u^\omega$ and $v^\omega$ have a common prefix of length $|u| + |v| - \mathit{gcd}(|u|,|v|)$. 
\end{theorem}
The above theorem can be adapted to yield conjugate primitive roots.
\begin{theorem}[\cite{ChoffrutKarhumaki97}]\label{commonfactor}
For any two words $u,v \in A^+$, if $u^\omega$ and $v^\omega$ have a common factor of length at least $|u| + |v| - \mathit{gcd}(|u|,|v|)$, then the primitive roots of $u$ and $v$ are conjugates. 
\end{theorem}
The bound $|u| + |v| - \mathit{gcd}(|u|,|v|)$ is  known as the \emph{Fine and Wilf index} of $u$ and $v$.

\subsection{Cuts and Uniqueness of Cuts of Primitive Pairs}
\begin{definition}[Cut]\label{cut}
A \emph{cut} of a conjugate pair $(u,v)$ is a pair of words $(x,y)$ such that $u=xy$ and $v=yx$. Alternatively, we say that $u$ has a cut at position $|x|$, or equivalently, $v$ has a cut at position $|y|$.

If either $x$ or $y$ is the empty word, then we say the cut is \emph{empty}. Otherwise the cut is \emph{nonempty}.
\end{definition}
For example, the pair $(\mathit{ aabb}, \mathit{bbaa})$ has a cut $(\mathit{aa},\mathit{bb})$. There can be several cuts for a conjugate pair. For instance, the pair $(\mathit{ abab}, \mathit{baba})$ has cuts $(\mathit{a},\mathit{bab})$ and $(\mathit{aba},\mathit{b})$. Empty cuts are possible only for pair of identical words.

\begin{definition}[Matching of a Cut]\label{def:matchcut}
    Let $u$ and $v$ be conjugate words, and let $(x,y)$ be a cut of $(u,v)$ with $|x|=k$. It corresponds to a matching between  positions of $u$ and $v$, namely the circular permutation given by:
    
    \[\begin{array}{rcl@{\hspace{1cm}}rcl}
        |x|+1 &\mapsto & 1& 1 & \mapsto &|y|+1\\
        |x|+2 &\mapsto &2 & 2 &\mapsto &|y|+2\\
        &\vdots&&&\vdots&\\
        |x|+|y|&\mapsto &|y| &|x|&\mapsto &|x|+|y|
    \end{array}\]
This terminology is extended to factors of $u$ an $v$, namely given a cut $(x,y)$ of $u$ and $v$, and factors $u'$ and $v'$ of $u$ and $v$ respectively, we say $u'$ and $v'$ match if the positions of $u'$ and $v'$ match according to the above bijection. 
\end{definition}

If $u$ and $v$ are conjugates and one of them is primitive, by \Cref{primitiveconjugates}, the other is also primitive. A pair $(u,v)$ is primitive if both $u$ and $v$ are primitive words.
For such pairs, their cuts are also special.

\begin{proposition}[Uniqueness of Cuts of Primitive Pairs]\label{primitive}
Let $(u,v)$ be a conjugate primitive pair. If $(u,v)$ is  \emph{distinct}, then $(u,v)$ has a unique cut $(x,y)$. If $(u,v)$ is not distinct (i.e, $u=v$), the only two possible cuts of $(u,v)$ are $(u,\epsilon)$ and $(\epsilon,v)$.
\end{proposition}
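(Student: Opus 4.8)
The plan is to reduce the statement to the uniqueness of a factorisation position in a primitive word, which is essentially the defining property of primitivity combined with the Second Theorem of Lyndon--Sch\"utzenberger (\Cref{uz=zv}). Suppose $(u,v)$ is a conjugate primitive pair and let $(x,y)$ and $(x',y')$ be two cuts, so $u = xy = x'y'$ and $v = yx = y'x'$. Without loss of generality assume $|x| \le |x'|$; then $x$ is a prefix of $x'$, say $x' = x t$ for some word $t$, and since $xy = x'y' = xty'$ we get $y = t y'$. Substituting into $v = yx = y'x'$ gives $t y' x = y' x t$, i.e. $t(y'x) = (y'x)t$. By the First Theorem of Lyndon--Sch\"utzenberger (\Cref{commute}), $t$ and $y'x$ are powers of a common word $w$; write $t = w^a$ and $y'x = w^b$.

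First I would handle the distinct case $u \ne v$. If $t \ne \epsilon$ then $a \ge 1$, so $w$ is a nonempty factor of $y'x = w^b$, and moreover $v = yx$: note $v = yx = ty'x = w^a \cdot w^b = w^{a+b}$, so $v$ is a power of $w$; since $v$ is primitive this forces $v = w$ and $a + b = 1$, contradicting $a \ge 1$ and $b \ge 1$ (as $y'x = v$ is nonempty so $b \ge 1$). Hence $t = \epsilon$, which gives $x' = x$ and $y' = y$, i.e. the cut is unique. Here I must be slightly careful about the degenerate subcase where $y'x = \epsilon$ or where the cut is empty, but an empty cut $(u,\epsilon)$ or $(\epsilon,v)$ forces $u = v$, contradicting distinctness, so both cuts are nonempty and $y'x$, $t y'$ etc. are all nonempty; the argument goes through.

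Next I would treat the non-distinct case $u = v$. Then $xy = yx$, so by \Cref{commute} $x$ and $y$ are powers of a common word $w'$; since $u = xy$ is a power of $w'$ and $u$ is primitive, $u = w'$ and $\{|x|,|y|\}$ must be $\{0,|u|\}$ — one of $x,y$ is empty. Thus the only cuts are $(u,\epsilon)$ and $(\epsilon,v)$, as claimed. I expect the main obstacle to be bookkeeping the empty-word edge cases cleanly — ensuring the commutation equations are applied only to genuinely nonempty words and that the primitivity of $u$ (equivalently $v$, by \Cref{primitiveconjugates}) is invoked at the right moment to collapse a power to its base. Once those corner cases are dispatched, the core of the argument is the single application of the commutation theorem to $t$ and $y'x$.
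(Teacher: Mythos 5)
Your proof is correct and takes essentially the same route as the paper: both derive a commutation relation from the two alleged cuts and invoke the First Theorem of Lyndon--Sch\"utzenberger (\Cref{commute}) to contradict the primitivity of $v$, and both handle $u=v$ by noting that a nonempty cut would give $xy=yx$. There is a small slip in the parenthetical ``$y'x = v$'' (in fact $ty'x = v$), but the conclusion $b\ge 1$ holds regardless, since as you observe, distinctness forces $x$ and $y'$ to be nonempty, hence $y'x \neq \epsilon$.
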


\begin{proof}
By definition, if pair $(u,v)$ is conjugate, then there exist a cut $(x,y)$ such that $u=xy$ and $v=yx$. Assume $u$ and $v$ are distinct, and hence $x$ and $y$ have to be nonempty. It suffices to show that $x$ and $y$ are unique if $u$ and $v$ are primitive. 

For the sake of contradiction, assume that $(x,y)$ is not unique, i.e., there exists a different cut $(x^\prime,y^\prime)$ for $(u,v)$, i.e., $u = x^\prime y^\prime$, $v= y^\prime x^\prime$ and $x^\prime \neq x, y^\prime \neq y$. WLOG, assume that $|x| > |x^\prime|$. Therefore there exists a nonempty word $p$ such that $x = x^\prime p$ and $y^\prime=py$. Substituting for $x$ in $v$, we get $$v = yx = yx^\prime p$$ and substituting for $y^\prime$ in $v$, we obtain $$v = y^\prime x^\prime = pyx^\prime \ .$$ Therefore $yx^\prime$ and $p$ commutes. By \Cref{commute}, they have the same primitive root. Since $p$ and $yx'$ are nonempty words, $v$ is a power of some smaller word. Hence $v$ is not primitive and it is a contradiction.

In the case where $u=v$ and $u$ being primitive, two possible cuts are $(u,\epsilon)$ and $(\epsilon,v)$, i.e., the empty cuts. Imagine there is a nonempty cut $(x,y)$. Since $u=v$, it follows that $xy=yx$. Using \Cref{commute}, $u$ and $v$ are powers of a smaller word and hence not primitive. Therefore, when $u$ is primitive and $u=v$, the only possible cuts are the empty cuts.
\end{proof}
It is known that a primitive word cannot be equal to any of its nontrivial cyclic shifts, see for instance \cite{smyth2003computing,bai2016new}. We give a reformulation of this fact below. 
 The statement of the lemma is given in a fashion that is suitable for some of the proofs in the rest of the paper.

\begin{lemma}[Cut Lemma]\label{cutcorr}
Assume $(u,v)$ is a conjugate primitive pair.
\begin{enumerate}[label=\Roman*.]
\item \label{distinct} If $(u,v)$ is a \emph{distinct} pair with the unique cut $(x,y)$, then the following equalities \emph{cannot} hold for any nonempty words $\pref{x}, \suff{x}, \pref{y}$, $\suff{y}$ such that $x=\pref{x}\suff{x}$ and $y=\pref{y}\suff{y}$.
\begin{enumerate}[label=(\alph*), ref= \theenumi{} (\alph*)]
\item\label{cuta} $xy = \suff{x}y\pref{x}$
\item\label{cutb} $xy = \suff{y}x\pref{y}$
\item\label{cutc} $yx = \suff{y}x\pref{y}$
\item\label{cutd} $yx = \suff{x}y\pref{x}$
\item\label{cute} $xy = yx$
\end{enumerate}
\item \label{same} In the special case when $u=v$, there are two empty cuts $(u,\epsilon)$ and $(\epsilon,u)$. In both cases, the equality $u = \suff{u}\pref{u}$ \emph{cannot} hold for any nonempty words $\pref{u}, \suff{u}$ such that $u=\pref{u}\suff{u}$.
\end{enumerate}
\end{lemma}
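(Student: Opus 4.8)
The plan is to reduce every case to the statement that a primitive word cannot coincide with a nontrivial cyclic shift of itself, and then invoke the First Theorem of Lyndon--Schützenberger (\Cref{commute}) to derive a contradiction with primitivity. First I would isolate the core sub-claim: if $w$ is primitive and $w = w_2 w_1$ where $w = w_1 w_2$ with both $w_1, w_2$ nonempty, then $w$ commutes with its own factor, forcing $w$ to be a proper power. Concretely, from $w = w_1 w_2 = w_2 w_1$ we get $w_1 w_2 = w_2 w_1$, so by \Cref{commute} $w_1$ and $w_2$ are powers of a common word $t$, whence $w = w_1 w_2 \in t^+$ with $|t| < |w|$, contradicting primitivity. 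This immediately disposes of case \ref{same}: taking $w = u$, $w_1 = \pref{u}$, $w_2 = \suff{u}$, the equality $u = \suff{u}\pref{u}$ is exactly $w = w_2 w_1$, so it cannot hold.

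For case \ref{distinct}, the key observation is that $u = xy$ is primitive (since $(u,v)$ is a primitive pair) and likewise $v = yx$ is primitive. I would handle the five equalities by matching each to an instance of the core sub-claim applied to either $u$ or $v$:
\begin{itemize}
\item[\ref{cuta}] $xy = \suff{x}y\pref{x}$: here $u = xy = (\pref{x})(\suff{x}y)$, and the right-hand side is $(\suff{x}y)(\pref{x})$, so with $w = u$, $w_1 = \pref{x}$, $w_2 = \suff{x}y$ (both nonempty since $\pref{x},\suff{x}$ are nonempty) the core sub-claim gives a contradiction.
\item[\ref{cutb}] $xy = \suff{y}x\pref{y}$: write $u = xy = (x\pref{y})(\suff{y})$; the right-hand side equals $(\suff{y})(x\pref{y})$, so apply the sub-claim with $w = u$, $w_1 = x\pref{y}$, $w_2 = \suff{y}$.
\item[\ref{cutc}] $yx = \suff{y}x\pref{y}$: analogous to \ref{cuta} with $v = yx$ in place of $u$: $v = (\pref{y})(\suff{y}x)$ and the right side is $(\suff{y}x)(\pref{y})$; apply the sub-claim to $w = v$.
\item[\ref{cutd}] $yx = \suff{x}y\pref{x}$: analogous to \ref{cutb} with $v = yx$: $v = (y\pref{x})(\suff{x})$ and the right side is $(\suff{x})(y\pref{x})$; apply the sub-claim to $w = v$.
\item[\ref{cute}] $xy = yx$: this is directly \Cref{commute} applied to $x$ and $y$ (both nonempty since $(u,v)$ is distinct), which makes $u = xy$ a proper power, contradicting primitivity.
\end{itemize}

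The main obstacle, if any, is purely bookkeeping: one must verify in each of the five cases that the factorisation being fed into the core sub-claim really has both parts nonempty and that the two sides of the given equality are genuinely the two cyclic rotations $w_1 w_2$ and $w_2 w_1$ of the same word $w$; the hypothesis that $\pref{x}, \suff{x}, \pref{y}, \suff{y}$ are all nonempty is exactly what guarantees the rotations are nontrivial. There is no deep combinatorics beyond the single application of \Cref{commute}; the lemma's role is to package this elementary fact in the five specific syntactic shapes that will arise later in the case analysis, so the proof is essentially a clean enumeration. One small point to note is that in case \ref{same} the two empty cuts $(u,\epsilon)$ and $(\epsilon,u)$ yield the same equality $u = \suff{u}\pref{u}$ up to relabelling, so a single argument covers both.
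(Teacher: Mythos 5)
Your proof is correct, but it takes a different (and arguably more direct) route than the paper. The paper handles each of the five equalities of part \ref{distinct} by exhibiting a \emph{second} nonempty cut for the primitive pair $(u,v)$ (e.g.\ from \ref{cuta} it produces the cut $(\suff{x},y\pref{x})$, from \ref{cutc} the cut $(x\pref{y},\suff{y})$, etc.) and then appeals to the uniqueness of cuts of distinct primitive pairs (\Cref{primitive}) to obtain a contradiction. You instead observe that each equality exhibits $u$ or $v$ as equal to a nontrivial cyclic rotation of itself, $w = w_1 w_2 = w_2 w_1$ with $w_1,w_2$ nonempty, and then invoke \Cref{commute} directly to conclude that $w$ is a proper power, contradicting primitivity. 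Since \Cref{primitive} is itself proved via \Cref{commute}, the two arguments bottom out in the same place; your version essentially inlines the uniqueness-of-cuts argument and thereby proves the fact the paper itself advertises as the informal content of the lemma (``a primitive word cannot equal a nontrivial cyclic shift of itself''), which makes it cleaner and more self-contained. Your treatment of \ref{same} coincides with the paper's, and your treatment of \ref{cute} (direct commutativity of the nonempty $x,y$) is simpler than the paper's, which instead notes that $(y,x)$ would be a second cut distinct from $(x,y)$. All the nonemptiness checks you flag are the right ones and do hold under the stated hypotheses.
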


\begin{proof}
Consider the case when $(u,v)$ is a distinct pair with the unique nonempty cut $(x,y)$. It suffices to show that if any of the equalities hold, there exists a different nonempty cut of the primitive pair $(u,v)$ contradicting \Cref{primitive}.
\begin{enumerate}
\item In the case of \ref{cuta}, the other nonempty cut is $(\suff{x},y\pref{x})$ since $(xy,yx) = (\suff{x}y\pref{x},y\pref{x}\suff{x})$.
\item In the case of \ref{cutb}, the other nonempty cut is $(\suff{y}x,\pref{y})$ since $(xy,yx) = (\suff{y}x\pref{y},\pref{y}\suff{y}x)$.
\item When \ref{cutc} is true, we obtain a different nonempty cut $(x\pref{y},\suff{y})$ because $(xy,yx) = (x\pref{y}\suff{y},\suff{y}x\pref{y})$.
\item If \ref{cutd} holds, the other nonempty cut is $(\pref{x},\suff{x}y)$ since $(xy,yx) = (\pref{x}\suff{x}y,\suff{x}y\pref{x})$.
\item If \ref{cute} holds, the other nonempty cut is $(y,x)$ since $(xy,yx) = (yx,xy)$ and $x \neq y$ (since $u \neq v$).
\end{enumerate}
Consider the special case when $u=v$. If the equality $u = \suff{u}\pref{u}$ holds, then we obtain $u = \pref{u}\suff{u} = \suff{u}\pref{u}$. Therefore, $\pref{u}$ and $\suff{u}$ commutes. Since $\pref{u}$ and $\suff{u}$ are nonempty words, $u$ is a power of some smaller word using \Cref{commute}. Hence $u$ is not primitive and it is a contradiction.
\end{proof}

A consequences of the above lemma is that the cut of the primitive root decides the cuts of its powers. 
\begin{proposition}\label{cutlemmaprop}
Let $(u,v)$ is a \emph{distinct} conjugate primitive pair with the unique cut $(x,y)$. Any cut of the pair $(u^n,v^n)$ for $n \geq 1$ is of the form $((xy)^*x,(yx)^*y)$.
\end{proposition}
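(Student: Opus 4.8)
The plan is to argue by contradiction: suppose $(u^n,v^n)$ has a cut $(X,Y)$, so that $u^n = XY$ and $v^n = YX$, and show that either $X \in (xy)^*x$ (which forces $Y\in(yx)^*y$ automatically, since $Y = X^{-1}u^n$ and $u=xy$, $v=yx$) or we contradict the primitivity of $u$ and $v$ via the Cut Lemma (\Cref{cutcorr}). The first step is to set up coordinates. Since $u = xy$ with $|u| = |x|+|y|$, and $u^n = XY$, write $|X| = m(|x|+|y|) + r$ for the unique $0 \le r < |x|+|y|$ and $0\le m$; the goal is to show $r \in \{0, |x|\}$ — no, more precisely that $r = |x|$ whenever the cut is nonempty, together with the degenerate possibilities $r=0$. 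I would phrase it as: $X$ is a prefix of $u^\omega$ of length $m|u| + r$, so $X = u^m \cdot (\text{prefix of } u \text{ of length } r)$, and $Y = (\text{suffix of } u \text{ of length } |u|-r)\cdot u^{n-m-1}$.

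The heart of the argument is then the following: because $v^n = YX$ and $v = yx$, the word $YX$ must also decompose as a power of $v$. Concretely, $v^n = YX$ means $Y$ is a prefix of $v^\omega$. Now I would split $u$ at position $r$ as $u = x_1 x_2$ with $|x_1| = r$ (when $0 < r < |u|$), so $X = u^m x_1$ and $Y = x_2 u^{n-m-1}$. Substituting into $v^n = YX$ gives $v^n = x_2 u^{n-m-1} u^m x_1 = x_2 u^{n-1} x_1 = x_2 (xy)^{n-1} x_1$. Comparing the first $|u| = |v|$ letters of $v^n$ on both sides, $v = yx$ must equal the length-$|u|$ prefix of $x_2(xy)^{n-1}x_1$. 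When $n=1$ this reads $yx = x_2 x_1$ where $u = xy = x_1 x_2$; this is exactly the situation governed by the Cut Lemma. If $x_1$ is a strict nonempty prefix of $x$ (i.e.\ $0<r<|x|$), writing $x = x_1 x_1'$ and $x_2 = x_1' y$, the equality $yx = x_2 x_1 = x_1' y x_1$ is a cyclic-shift equation of the form excluded by \Cref{cutcorr}~\ref{cutd} (with $x'' = x_1'$, etc., after matching lengths) — so this case is impossible. Symmetrically, $|x| < r < |u|$ (so $x_1 = x x_0'$ for a nonempty proper prefix $x_0'$ of $y$) is ruled out by case~\ref{cutb} or~\ref{cuta} of the Cut Lemma. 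Hence $r \in \{0, |x|\}$, i.e.\ $X$ ends exactly at a copy boundary of $x$ inside $u^\omega$, which is precisely $X \in (xy)^*x \cup (xy)^*$; since $u^n = XY$ with $Y$ the complementary factor, $Y \in (yx)^*y \cup (yx)^*$ correspondingly, and one checks these match up to give cuts of the claimed form $((xy)^*x, (yx)^*y)$ (the $r=0$ option corresponds to the empty cut $(\epsilon, u^n)$ / $(u^n,\epsilon)$, which also lies in this family reading $(xy)^0 \cdot$ appropriately, or is handled separately when $u=v$ via part~\ref{same} of the Cut Lemma). For general $n$ I would reduce to the $n=1$ comparison by noting that the length-$|u|$ prefix equation $v = (\text{prefix of } x_2(xy)^{n-1}x_1)$ depends only on the first $|u|$ symbols, which are determined by $x_2$ and enough of $(xy)^{n-1}x_1$; since $|x_2| < |u|$ this prefix is $x_2 \cdot (\text{prefix of }xy\text{ of length } |u|-|x_2|) = x_2 x_1$, recovering the $n=1$ equation verbatim.

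The main obstacle I anticipate is the bookkeeping in matching the cyclic-shift equations that arise from the prefix comparison to the exact five forms~\ref{cuta}--\ref{cute} listed in the Cut Lemma, getting the nonempty-factor hypotheses and the length matchings right in each of the subcases $0<r<|x|$, $r=|x|$, $|x|<r<|u|$; there is also the separate but easy degenerate case $u=v$ where one invokes part~\ref{same} of \Cref{cutcorr} instead, and the bookkeeping at the boundary values $r\in\{0,|x|\}$ to confirm the resulting cut genuinely has the form $((xy)^*x,(yx)^*y)$ rather than merely lying in the right length class. None of these steps is deep, but stating them cleanly so that each appeal to \Cref{cutcorr} is manifestly an instance of one of its clauses will require care.
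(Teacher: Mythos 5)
Your proposal follows essentially the same strategy as the paper's proof: locate the cut position modulo $|u|$, extract a length-$|u|$ block equation, and invoke the Cut Lemma (\Cref{cutcorr}) to forbid cuts strictly inside $x$ or inside $y$. The paper phrases this as a case split on whether there is a full $xy$ before or after the cut and compares suffixes; you parametrize by $r = |X| \bmod |u|$ and compare a single prefix, arriving at the equation $yx = x_2x_1$ with $u = x_1x_2$, $|x_1|=r$. That derivation is correct for all $n\ge 1$. Two points to fix, though. First, your subcase $|x|<r<|u|$ produces $yx = y''xy'$ with $y=y'y''$, which is clause~\ref{cutc}, not~\ref{cuta} or~\ref{cutb}; the labels matter because the Cut Lemma is a list of specific forbidden equalities. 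Second, your handling of $r=0$ is off: $r=0$ covers every $X=u^m$ for $0\le m\le n$, not just the empty cuts, and none of these lies in $(xy)^*x$ since $x$ is nonempty (the pair is distinct). The right disposition is that $r=0$ forces $x_1=\epsilon$, $x_2=u$, hence $v=x_2x_1=u$, contradicting $u\neq v$; part~\ref{same} of the Cut Lemma is irrelevant here because the proposition explicitly assumes $u\neq v$. Once these are patched the argument is sound. In fact, after you reach $v = x_2x_1$ you can skip the clause-by-clause matching entirely: that equation says $(x_1,x_2)$ is a cut of the distinct primitive pair $(u,v)$, so by uniqueness of cuts (\Cref{primitive}) it must equal $(x,y)$ (the empty cuts being excluded as above), giving $r=|x|$ immediately.
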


\begin{proof}
Let $(u^\prime,v^\prime)= (u^n,v^n)$ for some $n \geq 1$. The lemma is trivially true for $n = 1$ by the uniqueness of cut of primitive pairs by \Cref{primitive}.

Consider the case when $n \geq 2$. Substituting for $u=xy$ and $v=yx$ in $u^\prime$ and $v^\prime$,
\begin{align*}
u^\prime &= \overbrace{u \cdots u}^{n \text{ times}} = xy \cdots xy\\
v^\prime &= v \cdots v = yx \cdots yx
\end{align*}
We show that cut in $u^\prime$ will always be at the end of some $x$ and all other cases leads to one of Cases \ref{cuta} to \ref{cute} of \nameref{cutcorr}.

\paragraph*{Case 1: When the cut is at the end of $y$} I.e., there exists a cut $(p,q)$ for $(u^\prime, v^\prime)$ such that $p \in (xy)^+$. Then
\begin{align}
u^\prime &= \overbrace{ xy\cdots xy}^{p}\overbrace{ xy\cdots xy}^{q}\\
v^\prime &= yx \cdots yxyx \cdots yx = qp = \overbrace{ xy\cdots xy}^{q}\overbrace{ xy\cdots xy}^{p} \label{cl:eq1}
\end{align}
Equating the suffixes of $v^\prime$ of length $|xy|$ in both side of the \Cref{cl:eq1}, we deduce $xy = yx$, i.e., $u = v$. It satisfies Case \ref{cute} of \nameref{cutcorr}. Hence a contradiction.

\paragraph*{Case 2: When the cut is strictly within some $x$ or $y$}
We will make a further case analysis: when there is an $xy$ present before the cut, and when there is an $xy$ present after the cut (since $n \geq 2$).

Suppose the cut in $u^\prime$ is in the $i^\text{th}$ $xy$ for $i > 1$, i.e., there is an $xy$ present before the cut.
\begin{enumerate}
\item When the cut is within $x$, i.e., there exists a cut $(p,q)$ of $(u^\prime, v^\prime)$ such that $p \in (xy)^+\pref{x}$ where \pref{x} is a nonempty proper prefix of $x$ and $x=\pref{x}\suff{x}$ for some word $\suff{x}$. Now,
\begin{align}
u^\prime &=\overbrace{ \cdots \pref{x}\suff{x}y\pref{x}}^{p}\overbrace{\suff{x}y \cdots}^{q}\\
v^\prime &= y\pref{x}\suff{x} \cdots y\pref{x}\suff{x} = qp = \overbrace{\suff{x}y \cdots}^{q}\overbrace{ \cdots \pref{x}\suff{x}y\pref{x}}^{p} \label{cl:eq2}
\end{align}
As before, equating the suffixes of $v^\prime$ of length $|xy|$ on both sides of \Cref{cl:eq2}, we obtain
$$yx = y\pref{x}\suff{x} = \suff{x}y\pref{x}$$
Here $\pref{x}$ and $\suff{x}$ satisfies Case \ref{cutd} of \nameref{cutcorr}. Hence a contradiction.

\item When the cut is within $y$, i.e., there exists a cut $(p,q)$ of $(u^\prime, v^\prime)$ such that $p \in (xy)^+x\pref{y}$ where \pref{y} is a nonempty prefix of $y$ and $y=\pref{y}\suff{y}$ for some word $\suff{y}$. Then,
\begin{align}
u^\prime &=\overbrace{ \cdots x\pref{y}\suff{y}x\pref{y}}^{p}\overbrace{\suff{y} \cdots}^{q}\\
v^\prime &= \pref{y}\suff{y}x \cdots \pref{y}\suff{y}x = qp = \overbrace{\suff{y} \cdots}^{q}\overbrace{ \cdots x\pref{y}\suff{y}x\pref{y}}^{p} \label{cl:eq3}
\end{align}
On both sides of the \Cref{cl:eq3}, equating the suffixes of $v^\prime$ of length $|xy|$, we get
\[yx = \pref{y}\suff{y}x = \suff{y}x\pref{y}\]
that is Case \ref{cutc} of \nameref{cutcorr}. Hence a contradiction.
\end{enumerate}
The case when there is an $xy$ after the cut is symmetric and leads to Cases \ref{cuta} and \ref{cutb} of \nameref{cutcorr}.

Since we have eliminated all of other cases, the only possible cuts of the pair $(u^\prime,v^\prime)$ are of the form $((xy)^*x,(yx)^*y)$.
\end{proof}

The below lemma gives the relationship between the cuts of two conjugate primitive pairs, particularly when there is a specific combination of these pairs which is conjugate.
\begin{lemma}[Equal Length Lemma]\label{eqlengthprimitives}
Let $(u_1,v_1),(u_2,v_2)$ be two conjugate primitive pairs of equal length (i.e., $|u_1| = |u_2|$). If pair $(u_1,v_1)^{\ell_1}(u_2,v_2)^{\ell_2}$ where $\ell_1 > 2, \ell_2 > \ell_1 + 2$ is conjugate, then there exist cuts $(x_1,y_1)$ of $(u_1,v_1)$ and $(x_2,y_2)$ of $(u_2,v_2)$ such that either $x_1 = x_2$ or $y_1 = y_2$.
\end{lemma}

\begin{proof}
Let $(u,v) = (u_1,v_1)^{\ell_1}(u_2,v_2)^{\ell_2}$ such that $\ell_1 > 2$ and $\ell_2 > \ell_1 + 2$ suffices.
\begin{align*}
u &= \overbrace{u_1 \cdots u_1}^{\ell_1 \text{ times}}\overbrace{u_2u_2 \cdots u_2u_2}^{\ell_2 \text{ times}}\\
v &= v_1 \cdots v_1v_2v_2 \cdots v_2v_2
\end{align*}

If $(u,v)$ is conjugate, then they have a cut say $(p,q)$. If either $p$ or $q$ is empty then both the pairs are identical. In this case, $u_1=x_1y_1=y_1x_1=v_1$ and $u_2=x_2y_2=y_2x_2=v_2$. By \Cref{primitive}, either $x_1=x_2 = \varepsilon$ or $y_1=y_2=\varepsilon$. Thus, assume that both $p$ and $q$ are nonempty. There are two possibilities for a cut of $(u,v)$: when the cut in $u$ is within ${u_1}^{\ell_1}u_2$ or it is after ${u_1}^{\ell_1}u_2$.

In both the cases we show that either $x_1 = x_2$, or $y_1=y_2$ or both.

\medskip
\paragraph*{Case 1: When the cut in $u$ is within ${u_1}^{\ell_1}u_2$}
In this case, the cut in $v$ is within the suffix $v_2^{\ell_1+1}$ since the $|u_1| = |u_2| = |v_2|$ and $\ell_2 > \ell_1 + 2$. Substituting $(u_1,v_1)$ and $(u_2,v_2)$ with $(x_1y_1,y_1x_1)$ and $(x_2y_2,y_2x_2)$,
\begin{align*}
u &= x_1y_1 \cdots x_1y_1\,x_2y_2 \cdots x_2y_2 = pq\\
v &= y_1x_1 \cdots y_1x_1 \cdots y_2x_2 \underbrace{y_2x_2}_{\text{cut region}} \cdots = qp
\end{align*}
Since $\ell_2 > \ell_1 + 2$, there exist at least one $y_2x_2$ before the cut in $v$. We compare the suffixes of $q$ in both $u$ and $v$. Since $q$ ends with $x_2y_2$ in $u$, the cut in $v$ should be at the end of a $y_2$ by Case \ref{distinct} of \nameref{cutcorr}. Note that by Case~\ref{same} of \nameref{cutcorr}, the cut can end after $x_2$ but then $u_2=x_2y_2=y_2x_2 = v_2$ and hence either $x_2 = \varepsilon$ or $y_2 = \varepsilon$. In both case, we can say that the cut is at the end of a $y_2$.

Hence $p$ can be of the form $x_2$ or $(x_2y_2)^+x_2$ depending upon if the cut in $v$ is within the last $y_2x_2$ or not.
\begin{align*}
u &= x_1y_1 \cdots x_1y_1x_2y_2 \cdots x_2y_2 = pq\\
v &= \underbrace{y_1x_1 \cdots y_1x_1 \cdots y_2x_2y_2}_{q} \underbrace{x_2 \cdots}_{p}
\end{align*}

Suppose $p \in (x_2y_2)^+x_2$, then equating the prefixes of $p$ in $u$ and $v$ of length $|x_2y_2| = |x_1y_1|$ (Since $|u_1| = |u_2|$), we obtain $x_2y_2 = x_1y_1$. Substituting this in $u$,
\begin{align*}
u &= x_1y_1 \cdots x_1y_1x_1y_1 \cdots x_1y_1 = pq\\
v &= \underbrace{y_1x_1 \cdots y_1x_1 \cdots y_2x_2y_2}_{q} \underbrace{x_2y_2 \cdots x_2}_{p}
\end{align*}

Now we compare the prefixes of $q$ in $u$ and $v$. Since $q$ starts with $y_1x_1$ in $v$, from  \nameref{cutcorr}, the cut in $u$ should be at the end of $x_1$. Therefore, $p \in (x_1y_1)^+x_1$ in $u$. Also, $p \in (x_2y_2)^+x_2$ in $v$. Since $|x_1y_1| = |x_2y_2|$ and $|x_1|,|x_2| < |x_1y_1|$, we can deduce $p = (x_1y_1)^{i}x_1 = (x_2y_2)^{i}x_2$ for some $i$. Hence, $x_1 = x_2$. Therefore, it implies $y_1 = y_2$ since $x_1y_1 = x_2y_2$ and hence, $(u_1,v_1)$ and $(u_2,v_2)$ are identical.

Suppose $p =x_2$ in $v$. Here, $p$ in $u$ is within the first $x_1y_1$ since $|x_1y_1| = |x_2y_2|$. Moreover $p = x_1$ since the only possible cut in $u$ will be at the end of $x_1$ by 
\nameref{cutcorr} (comparing the prefixes of $q$ in $u$ and $v$). Hence $x_1 = x_2$.

\medskip
\paragraph*{Case 2: Cut in $u$ is after ${u_1}^{\ell_1}u_2$}
This case is symmetric. For the sake of completeness we prove it.

Substituting $(u_1,v_1)$ and $(u_2,v_2)$ with $(x_1y_1,y_1x_1)$ and $(x_2y_2,y_2x_2)$,
\begin{align*}
u &= x_1y_1 \cdots x_1y_1 \cdots x_2y_2 \overbrace{x_2y_2}^{\text{cut region}}\cdots = pq\\
v &= y_1x_1 \cdots y_1x_1y_2x_2 \cdots y_2x_2 = qp
\end{align*}

Note that there is at least one $x_2y_2$ before the cut. We compare the suffixes of $p$ in $u$ and $v$. Since $p$ ends with $y_2x_2$ in $v$, the cut in $u$ should be at the end of $x_2$ by Case~\ref{distinct} of \nameref{cutcorr}. Note that by Case~\ref{same} of \nameref{cutcorr}, the cut can end after $y_2$ but then $u_2=x_2y_2=y_2x_2 = v_2$ and hence either $x_2 = \varepsilon$ or $y_2 = \varepsilon$. Hence $q$ is of the form $y_2$ or $(y_2x_2)^+y_2$ depending upon if the cut in $u$ is within the last $x_2y_2$ or not.
\begin{align*}
u &= \overbrace{x_1y_1 \cdots x_1y_1 \cdots x_2y_2x_2}^{p}\overbrace{\cdots y_2}^{q}\\
v &= y_1x_1 \cdots y_1x_1y_2x_2 \cdots y_2x_2 = qp
\end{align*}

If $q \in (y_2x_2)^+y_2$, then comparing the prefixes of $q$ of length $|y_2x_2| = |y_1x_1|$ (Since $|v_1| = |v_2|$) in $u$ and $v$, we obtain $y_2x_2 = y_1x_1$. Substituting this in $v$,
\begin{align*}
u &= \overbrace{x_1y_1 \cdots x_1y_1 \cdots x_2y_2x_2}^{p}\overbrace{\cdots y_2}^{q}\\
v &= y_1x_1 \cdots y_1x_1y_1x_1 \cdots y_1x_1 = qp
\end{align*}

We compare the prefixes of $p$ in $u$ and $v$. Since $p$ starts with $x_1y_1$ in $u$, the cut in $v$ should be at the end of $y_1$ using  
\nameref{cutcorr}. Therefore, $q \in (y_1x_1)^+y_1$ in $v$ and $q \in (y_2x_2)^+y_2$ in $u$. Hence, as before we can deduce that $y_1 = y_2$. It also implies $x_1 = x_2$ since $y_1x_1 = y_2x_2$ and thus $(u_1,v_1)$ and $(u_2,v_2)$ are identical.

If $q = y_2$. The cut in $v$ is within first $y_1x_1$. In fact, $q=y_1$ since the only possible cut in $u$ will be at the end of $y_1$ by  
\nameref{cutcorr} (comparing the prefixes of $p$ in $u$ and $v$). Hence $y_1 = y_2$.
\end{proof}


\section{Common Witness Theorems}
\label{Sec:cwtheorems}
In this section, it is shown that an infinite set of pairs that is generated by a sumfree set is conjugate if and only if there is a word witnessing its conjugacy.
\subsection{Common Witness and its Characterisations}
Recall from \Cref{uz=zv} that a pair of words $(u,v)$ is conjugate, then there exists a word $z$ such that $uz=zv$ where $u=xy$,$v=yx$ and $z \in (xy)^*x$. By symmetry of conjugacy, there also exists a word $z'$ such that $z'u=vz'$ where $z^\prime \in (yx)^*y$. We call $z$  (resp. $z'$) in the above characterisation as an \emph{inner witness} (resp.  \emph{outer witness}) of the pair $(u,v)$ (since $z$ is appended to the inner ends).  

Given a conjugate pair $(u,v)$, the set of all inner witnesses of $(u,v)$ is $\{z \mid uz = zv\} = \cup_{\{(x, y )\mid  u = xy, v= yx \}} (xy)^\ast x$. Similarly, the set of all outer witnesses of $(u,v)$ is $\{z \mid zu = vz\} = \cup_{\{(x, y )\mid  u = xy, v= yx \}} (yx)^\ast y$. 

For example, the pair $(\mathit{aba},\mathit{baa})$ has inner witnesses $(aba)^*a$ and outer witnesses $(baa)^*ba$.

An inner witness of a pair $(u,v)$ is an outer witness of the pair $(v,u)$. We say that a pair of words has a \emph{witness} if it has either an inner witness or an outer witness.

The following proposition shows that a pair and its primitive root shares the same set of witnesses.
\begin{proposition}\label{samewitness}
Let $(u,v)$ be a conjugate pair with the primitive root $(\rho_{u}, \rho_{v})$. The following are equivalent for a word $z$.
\begin{enumerate}
\item $z$ is an inner (\textit{resp.}~outer) witness of $(u,v)$.
\item $z$ is an inner (\textit{resp.}~outer) witness of $(\rho_{u},\rho_{v})$.
\end{enumerate}
\end{proposition}

\begin{proof}
We prove $(1) \iff (2)$ for inner witness. The outer witness case is symmetric.
\begin{description}
\item[$(2) \implies (1)$:]
Assume that $z$ is an inner witness of $(\rho_u,\rho_v)$. By induction on $n$, we prove that $z$ is also an inner witness of $(\rho_u, \rho_v)^n$ for all $n \geq 1$, i.e., $\rho_u^nz = z\rho_v^n$. It is true when $n=1$. For all $n > 1$,
\begin{align*}
\rho_u^nz &= \rho_u^{n-1}\rho_uz\\
&= \rho_u^{n-1}z\rho_v && \text{(Since $\rho_uz=z\rho_v$)}\\
&= z\rho_v^{n-1}\rho_v && \text{(Inductive Hypothesis)}\\
&= z\rho_v^n
\end{align*}
Since $(u,v)$ is a conjugate pair, there exists an integer $n \geq 1$ such that $(u,v) = (\rho_u, \rho_v)^n$ by \Cref{primitiveconjugates}. Hence, $z$ is also an inner witness for $(u,v)$.
\item[$(1) \implies (2)$:] Since $(u,v)$ is conjugate, its primitive root $(\rho_u,\rho_v)$ is conjugate as well by \Cref{primitiveconjugates}.
Consider the case when $u \neq v$. It follows that $\rho_u \neq \rho_v$. According to \Cref{primitive}, the pair $(\rho_u,\rho_v)$ has a unique cut, denoted as $(x,y)$. From \Cref{cutlemmaprop}, all cuts of $(u,v)$ are of the form $((xy)^*x,(yx)^*y)$. From \Cref{uz=zv}, an inner witness of $(u,v)$ belongs to \[((xy)^*x(yx)^*y)^*(xy)^*x = (xy)^*x\] and hence is an inner witness of $(\rho_u,\rho_v)$. 

In the case where $u=v$, it follows that $\rho_u = \rho_v$. Consequently, any witness $z$ for the pair $(u,u)$ belongs to the set $u^*$ that is a subset of $\rho_u^*$. Thus, $z$ is also a witness for the pair $(\rho_u,\rho_v)$, since $\rho_u^*$ consists of witnesses of $(\rho_u,\rho_v)$.
\end{description}
\end{proof}

\begin{proposition}\label{prop:iwredux}
Let $\rho, \rho_1, \rho_2$ be nonempty primitive words that are pair-wise conjugate, and let $\as$ be an arbitrary word. If $\rho_1\as\rho_2$ is an infix  of a word in $\rho^*$, then $\as$ is an inner witness of $(\rho_1,\rho_2)$, i.e. $\rho_1\as = \as\rho_2$.
\end{proposition} 

\begin{proof}

Since $(\rho_1,\rho_2)$ are conjugate, there exists a cut $(x,y)$ such that 
$\rho_1 = xy$ and $\rho_2 = yx$. Moreover, since $\rho \sim \rho_1 \sim \rho_2$, all three words have the same length, say $p > 0$. Further, since $\rho \sim \rho_1 = xy$, we have $\rho^+ = s \rho_1^*t =  s(xy)^* t$ for some words $s,t$ such that $|st| = p$.

Assume that $\rho_1\alpha\rho_2$ is an infix of a word $w \in \rho^+$. By the previous observation $w=s(xy)^it$ for some sufficiently large $i>0$. Firstly,
assume $\rho_1 \neq \rho_2$. Then, the cut $(x,y)$ is non-empty and unique by \Cref{primitive}. Since $\rho_1$ and $\rho_2$ are primitive, by \Cref{same} of \nameref{cutcorr}, neither $xy$ nor $yx$ is equal to any of its non-trivial cyclic shifts. Given that $\rho_1 \as \rho_2 = xy \as yx$ occurs as an infix of $w$. We do a case analysis. If $xy \as yx$  lies entirely within the factor $(xy)^i$ of $w$, then by \nameref{cutcorr}, the factor $xy$ of $xy \as yx$  must align exactly with a factor $xy$ of $w$, and likewise $yx$ should match exactly with a factor $yx$ of $w$. As a consequence, $\as \in (xy)^*x$ which is an inner witness of $(\rho_1,\rho_2)$. 

Next, assume for the sake of contradiction that $xy \as yx$ is a factor of $s(xy)^i$, but not a factor of $(xy)^i$. Hence a suffix of $s$ is a prefix of $xy$ (recall that $|s| \le p = |xy|$). Since $\rho \sim xy$ and $|s| \le p$, $s$ is an infix of $xy$. Hence, the prefix $xy$ of $\rho_1\as\rho_2$ matches with a nontrivial cyclic shift of $xy$ in $w$. This contradicts \nameref{cutcorr}. A symmetric argument rules out the case when $xy \as yx$ is a factor of $(xy)^it$, but not a factor of $(xy)^i$. 

When $\rho_1 = \rho_2$, the only cuts possible are empty cuts (i.e., either $x$ or $y$ is empty) by \Cref{primitive}. Thus $\rho_1 = \rho_2 = x$, and the previous argument applies directly, and we get $\as \in x^*$, and hence is an inner witness of $(\rho_1,\rho_2)$.     
\end{proof}

We generalise the notion of a witness of a pair to a set of pairs.

\begin{definition}[Common Witness]\label{def:commonwitness}
A word is a \emph{common inner witness} of a set of pairs $\pair$ if it is an inner witness of each pair in $\pair$. Similarly, a word is a \emph{common outer witness} of $\pair$ if it is an outer witness of each pair in $\pair$.

A set of pairs has a \emph{common witness} if it has either a common inner witness or a common outer witness.
We denote by $W(P)$ the set of all common witnesses of $P$.
\end{definition}

The structure of a common witness of a set of pairs is obtained from \Cref{uz=zv}.

\begin{proposition}\label{cwpattern}
Let $\pair$ be a set of pairs of words. The following are equivalent.
\begin{enumerate}
\item $z$ is a common inner witness of $\pair$.
\item There exists a cut $(x,y)$ of each pair $(u,v) \in P$ such that $z \in \bigcap_{(u,v) \in P}(xy)^*x$. 
\item $z \in \bigcap_{(u,v) \in P} \bigcup_{\{(x, y)\mid  u = xy, v= yx \}}(xy)^*x$
\end{enumerate}
The statement for common outer witness is analogous.
\end{proposition}

\begin{proof}
Follows from the definition of common witness and \Cref{uz=zv}. 
\end{proof}

\begin{example}
Consider the set $P = \{(\mathit{ab},\mathit{ba}),(\mathit{abab},\mathit{baba})\}$. The pair $(ab,ba)$ has a unique cut $(a,b)$, and the pair $(\mathit{abab},\mathit{baba})$ has two cuts: $(a,bab)$ and $(aba,b)$. The word $a$ is a common inner witness of $P$ since $a$ belongs to both $(ab)^*a$ and $(abab)^*a$ (using the first cut). Similarly, $aba$ is also a common inner witness of $P$ since $aba$ belongs to both $(ab)^*a$ and $(abab)^*aba$ (using the second cut). Notice that $aba$ is not in the intersection of $(ab)^*a$ and $(abab)^*a$.
\end{example}
\Cref{samewitness} connecting witness of a conjugate pair and its root can be lifted to a set of conjugate pairs and its root as follows. 
\begin{proposition}\label{samewitnessext} The common witnesses of a set of conjugate  pairs $G$ and its root $\roots{G}$ are the same, i.e., a word $z$ is a common inner (\textit{resp.}~outer) witness of $G$ iff $z$ is a common inner (\textit{resp.}~outer) witness of $\roots{G}$. 
\end{proposition}
\begin{proof}
We prove for common inner witness. The common outer witness case is symmetric.
\begin{align*}
z &\text{ is a common inner witness of } G\\
&\iff z \text{ is an inner witness of } (u,v) \text{ for each $(u,v) \in G$}\\
&\iff z \text{ is an inner witness of } (\rho_u,\rho_v) \text{ for each $(u,v) \in G$}  && \text{ (By \Cref{samewitness})}\\
&\iff z \text{ is a common inner witness of } \roots{G}
\end{align*}
\end{proof}

When a set is not conjugate, clearly it has no common witness. However, even when a set is conjugate, it may have both common inner and outer witnesses, or only common inner witness, or only common outer witness, or neither of them as shown below.

\begin{example}
Consider the set $P = \{(\mathit{ab},\mathit{ba}),(\mathit{ac},\mathit{ca}))\}$. The pair $(\mathit{ab},\mathit{ba})$ has inner witnesses $(ab)^*a$ and outer witnesses $(ba)^*b$. Similarly, the pair $(\mathit{ac},\mathit{ca})$ has inner witnesses $(ac)^*a$ and outer witnesses $(ca)^*c$. According to \Cref{cwpattern}, the set $P$ has a unique common inner witness $a = (ab)^*a \cap (ac)^*a$, but it does not have any common outer witness since $(ba)^*b \cap (ca)^*c = \emptyset$.

The set $\{(\mathit{ab},\mathit{ba}),(\mathit{abab},\mathit{baba})\}$ has both common inner witnesses $$(ab)^*a = (ab)^*a \cap ({(abab)^*aba \cup (abab)^*a})$$ and common outer witnesses $$(ba)^*b = {(ba)^*b \cap ({(baba)^*b \cup (baba)^*bab})} \ .$$

However, the set $\{(\mathit{ab},\mathit{ba}),(\mathit{ba},\mathit{ab})\}$ has no common witnesses since $(ab)^*a \cap (ba)^*b = \emptyset$.
\end{example}

Next we analyse the number of common witnesses a set of pairs can have.
\begin{proposition}\label{infmanycwcorr}
Let $G$ be a set of conjugate pairs of words. The following are equivalent.
\begin{enumerate}
\item $G$ has more than one common witness.
\item $G$ has infinitely many common witnesses.
\item $G$ has infinitely many common inner witnesses.
\item $G$ has infinitely many common outer witnesses.
\item All the pairs in $G$ have the same primitive root.
\end{enumerate}
\end{proposition}

\begin{proof}
We prove $(1) \implies (2) \implies (1)$ and $(1) \implies (5) \implies (3),(4) \implies (1)$.

 $(2) \implies (1)$ and $(3),(4) \implies (1)$ is obvious. We first show $(5) \implies (3),(4)$. Since all the pairs in $G$ have the same primitive root, $\roots{G}$ is singleton. Thus, $\roots{G}$ has infinitely many common inner as well as common outer witnesses by \Cref{uz=zv}. Since $G$ and $\roots{G}$ have the same witnesses (\Cref{samewitnessext}), $G$ also has infinitely many common inner and common outer witnesses.

Now it remains to show that $(1) \implies (2)$ and $(1) \implies (5)$. If the set $G$ has two common witnesses, namely $z_1$ and $z_2$, then according to \Cref{samewitnessext}, $z_1$ and $z_2$ are also common witnesses of $\roots{G}$.  Let $\roots{G} = \{(u_i,v_i) \mid i \in I\}$ where $I$ is an Index set.

Suppose $z_1$ and $z_2$ are two common \emph{inner} witnesses of $\roots{G}$. We choose cuts $(x_i,y_i)$ for each $(u_i,v_i) \in \roots{G}$ such that both $z_1$ and $z_2$ belongs to $\bigcap_{i \in I}(x_iy_i)^*x_i$. Since $\roots{G}$ is a set of primitive pairs, $(x_i,y_i)$ is the unique cut of $(u_i,v_i)$ when $u_i \neq v_i$ by \Cref{primitive}. When $u_i=v_i$, it has only two empty cuts, namely $(\epsilon, u_i)$ and $(u_i,\epsilon)$ (if $u_i = v_i$). The inner witnesses obtained using cut $(\epsilon, u_i)$ is a superset of inner witnesses obtained using $(u_i,\epsilon)$. Hence, we can choose cut $ (x_i,y_i) = (\epsilon, u_i)$ for pair $(u_i,v_i)$ when $u_i = v_i$. Therefore, as stated in \Cref{cwpattern}, both $z_1$ and $z_2$ belongs to $\bigcap_{i \in I}(x_iy_i)^*x_i$. 

Without loss of generality, assume that $|z_1| < |z_2|$. As depicted in \Cref{infmany}, a common factor $w \in \bigcap_{i \in I}(y_ix_i)^{\geq 1}$ exists for each $u_i^\omega$ that can be repeated one after another in $u_i^\omega$ to get longer and longer common inner witnesses.
\begin{figure}[t]
\centering

\begin{tikzpicture}

\draw [thick,decorate,
decoration = {calligraphic brace,
raise=5pt,
amplitude= 5pt,
}] (0,4.5) -- (2,4.5);
\draw[blue] (1,5) node{$z_1$};
\draw [thick,decorate,
decoration = {calligraphic brace,
raise=5pt,
amplitude= 5pt,
}] (2,4.5) -- (5,4.5);
\draw (3.5,5) node{$w$};
\draw [thick,decorate,
decoration = {calligraphic brace,
raise=5pt,
amplitude= 5pt,
}] (0,5) -- (5,5);
\draw[magenta] (2.5,5.5) node{$z_2$};
\draw (-0.5,3) node {$u_2^\omega$};
\draw[gray, line width=3pt, line cap=round, dash pattern=on 0pt off 2\pgflinewidth][-](0,3) -- (2,3);
\draw [fill = cyan, opacity = 0.2, draw = none] (-0.10,2.90) rectangle (2.05,3.10);
\draw [fill = magenta!70, opacity = 0.1, draw = none] (-0.15,2.85) rectangle (5,3.15);
\draw[blue] (1,3.3) node{$\scriptstyle (x_2y_2)^{m_2}x_2$};
\draw [gray, line width=3pt, line cap=round, dash pattern=on 0pt off 2\pgflinewidth][-](2.15,3) -- (5,3);
\draw (3.5,3.3) node{$\scriptstyle (y_2x_2)^{\geq 1}$};
\draw [gray, line width=3pt, line cap=round, dash pattern=on 0pt off 2\pgflinewidth][-](5.1,3) -- (6,3);
\draw [magenta] (2.5,2.7) node{$\scriptstyle (x_2y_2)^{n_2}x_2$};

\draw (-0.5,4) node {$u_1^\omega$};
\draw[gray, line width=3pt, line cap=round, dash pattern=on 0pt off 2\pgflinewidth][-](0,4) -- (2,4);
\draw [fill = cyan, opacity = 0.2, draw = none] (-0.10,3.90) rectangle (2.05,4.10);
\draw [fill = magenta!70, opacity = 0.1, draw = none] (-0.15,3.85) rectangle (5,4.15);
\draw[blue] (1,4.3) node{$\scriptstyle (x_1y_1)^{m_1}x_1$};
\draw [gray, line width=3pt, line cap=round, dash pattern=on 0pt off 2\pgflinewidth][-](2.15,4) -- (5,4);
\draw (3.5,4.3) node{$\scriptstyle (y_1x_1)^{\geq 1}$};
\draw [gray, line width=3pt, line cap=round, dash pattern=on 0pt off 2\pgflinewidth][-](5.1,4) -- (6,4)[anchor=south, xshift=-1.5cm, yshift=.4cm];
\draw [magenta] (2.5,3.7) node{$\scriptstyle (x_1y_1)^{n_1}x_1$};

\draw (3,2.25) node{$\vdots$};

\end{tikzpicture}

\caption{When there are at least two common inner witnesses $z_1, z_2$.}
\label{infmany}
\end{figure}
By symmetry, when $z_1$ and $z_2$ are both common \emph{outer} witnesses of  $\roots{G}$, we get infinitely many common outer witnesses.

Now, WLOG assume that $z_1$ is a common \emph{inner} witness and $z_2$ is a common \emph{outer} witness of $\roots{G}$, where $z_1 \neq z_2$. Here also we choose cuts $(x_i,y_i)$ for each $(u_i,v_i) \in \roots{G}$ such that $z_1$ belongs to $\bigcap_{i \in I}(x_iy_i)^*x_i$ and $z_2$ belongs to $\bigcap_{i \in I}(y_ix_i)^*y_i$.
For each distinct primitive pair in $\roots{G}$, there exists a unique cut. However, for identical primitive pairs, we fix a cut based on the values of $z_1$ and $z_2$. We consider two cases: either $z_1, z_2 \neq \epsilon$, or exactly one of $z_1$ and $z_2$ is equal to $\epsilon$.
In the case where $z_1, z_2 \neq \epsilon$, for primitive pairs $(u_i,v_i)$ such that $u_i = v_i$, we can choose either of the two empty cuts as $(x_i,y_i)$, resulting in $z_1 \in (x_iy_i)^*x_i$ and $z_2 \in (y_ix_i)^*y_i$.
In the second case, if $z_1 = \epsilon$, we select the cut $(x_i,y_i) = (\epsilon, u_i)$. This choice ensures that $z_1 \in (x_iy_i)^*x_i$ and $z_2 \in (y_ix_i)^*y_i$ (since $z_2 \neq \epsilon$). Similarly, if $z_2 = \epsilon$, we choose the cut $(x_i,y_i) = (u_i, \epsilon)$.
Consequently, we can conclude that $z_1$ belongs to $\bigcap_{i \in I}(x_iy_i)^*x_i$ and $z_2$ belongs to $\bigcap_{i \in I}(y_ix_i)^*y_i$.

As shown in \Cref{infmany1}, concatenating $z_1 \cdot z_2 \cdot z_1$ in $u_i^\omega$, we get one more common inner witness $z_3$ for $\roots{G}$. By the above argument, $\roots{G}$ has infinitely many common witnesses. Since witnesses of $\roots{G}$ are also witnesses of $G$ (\Cref{samewitnessext}), it implies that $G$ itself has infinitely many common witnesses. This completes the proof of $(1) \implies (2)$.

\begin{figure}[htbp]
\centering

\begin{tikzpicture}
\draw [thick,decorate,
decoration = {calligraphic brace,
raise=5pt,
amplitude= 5pt,
}] (0,4.5) -- (2,4.5);
\draw[blue] (1,5) node{$z_1$};
\draw [thick,decorate,
decoration = {calligraphic brace,
raise=5pt,
amplitude= 5pt,
}] (2,4.5) -- (5,4.5);
\draw[magenta] (3.5,5) node{$z_2$};
\draw [thick,decorate,
decoration = {calligraphic brace,
raise=5pt,
amplitude= 5pt,
}] (5,4.5) -- (7,4.5);
\draw[blue] (6,5) node{$z_1$};

\draw (-0.5,4) node {$u_1^\omega$};
\draw[gray, line width=3pt, line cap=round, dash pattern=on 0pt off 2\pgflinewidth][-](0,4) -- (2,4);
\draw [fill = blue, opacity = 0.2, draw = none] (-0.10,3.90) rectangle (2,4.10);
\draw [fill = magenta!70, opacity = 0.1, draw = none] (2,3.90) rectangle (5,4.10);
\draw [fill = blue, opacity = 0.2, draw = none] (5,3.90) rectangle (7.07,4.10);
\draw[blue] (1,4.3) node{$\scriptstyle (x_1y_1)^{m_1}x_1$};
\draw [gray, line width=3pt, line cap=round, dash pattern=on 0pt off 2\pgflinewidth][-](2.15,4) -- (5,4);
\draw[magenta] (3.5,4.3) node{$\scriptstyle (y_1x_1)^{n_1}y_1$};
\draw [gray, line width=3pt, line cap=round, dash pattern=on 0pt off 2\pgflinewidth][-](5.1,4) -- (8,4);
\draw [blue] (6,4.3) node{$\scriptstyle (x_1y_1)^{m_1}x_1$};

\draw (3,2.25) node{$\vdots$};

\draw (-0.5,3) node {$u_2^\omega$};
\draw[gray, line width=3pt, line cap=round, dash pattern=on 0pt off 2\pgflinewidth][-](0,3) -- (2,3);
\draw [fill = blue, opacity = 0.2, draw = none] (-0.10,2.90) rectangle (2,3.10);
\draw [fill = magenta!70, opacity = 0.1, draw = none] (2,2.90) rectangle (5,3.10);
\draw [fill = blue, opacity = 0.2, draw = none] (5,2.90) rectangle (7.07,3.10);
\draw[blue] (1,3.3) node{$\scriptstyle (x_2y_2)^{m_2}x_2$};
\draw [gray, line width=3pt, line cap=round, dash pattern=on 0pt off 2\pgflinewidth][-](2.15,3) -- (5,3);
\draw[magenta] (3.5,3.3) node{$\scriptstyle (y_2x_2)^{n_2}y_2$};
\draw [gray, line width=3pt, line cap=round, dash pattern=on 0pt off 2\pgflinewidth][-](5.1,3) -- (8,3);
\draw [blue] (6,3.3) node{$\scriptstyle (x_2y_2)^{m_2}x_2$};

\draw (3,2.25) node{$\vdots$};

\end{tikzpicture}

\caption{When there are 1 common inner witness $z_1$ and 1 common outer witness $z_2$.}
\label{infmany1}
\end{figure}

In both the cases, we get that $(x_1y_1)^\omega = (x_2y_2)^\omega = \cdots$ and $(y_1x_1)^\omega = (y_2x_2)^\omega = \cdots$. Hence from Fine and Wilf \Cref{finewilf}, all $u_i$'s has the same primitive root. Similarly, all $v_i$'s has the same primitive root. This proves $(1) \implies (5)$.
\end{proof}
 
Therefore, a set of pairs can have no common witness, a unique common witness, or infinitely many common witnesses.
\begin{corollary}\label{1-infinity}
    For all set of pairs $G$, if $W(G) \neq \emptyset$, then one of the following is true.
    \begin{enumerate}
        \item $|W(G)|=\infty$ and $\roots{G}$ is singleton.
        \item $|W(G)|=1$ and $\roots{G}$ is not singleton.
    \end{enumerate}
\end{corollary}

\subsection{Common Witness Theorem for Kleene Closure}
Lyndon-Sch\"utzenberger theorem characterises conjugacy of a pair of words. We generalise the notion in \Cref{uz=zv} to an infinite set of pairs closed under concatenation. The question we ask is:

``Given an arbitrary set of pairs $G$, is $G^*$ conjugate?''

If $\spans{G}$ has a common witness, then each pair in $\spans{G}$ has a witness and is conjugate. Hence $\spans{G}$ is conjugate. We prove the converse, namely, if $\spans{G}$ is conjugate, then it has a common witness. 
The below theorem characterises conjugacy of a freely generated set of pairs of words.

\begin{theorem}[Common Witness Theorem for Kleene Closure]\label{conjugacy}
Let $G$ be an arbitrary conjugate set of pairs of words. The following are equivalent.
\begin{enumerate}
\item $\spans{G}$ is conjugate.
\item $\spans{G}$ has a common witness $z$.
\item $G$ has a common witness $z$.
\item $\roots{G}$ has a common witness $z$.
\end{enumerate}
\end{theorem}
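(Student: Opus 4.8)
The plan is to establish the cycle of implications $(1)\Rightarrow(4)\Rightarrow(3)\Rightarrow(2)\Rightarrow(1)$, with the bulk of the work concentrated in $(1)\Rightarrow(4)$ (or, equivalently, a version that directly produces a common witness for $G$). The three easy implications are essentially bookkeeping: $(2)\Rightarrow(1)$ is immediate since a common witness of $\spans{G}$ witnesses conjugacy of every pair in $\spans{G}$ by \Cref{uz=zv}; $(3)\Rightarrow(2)$ follows from \Cref{powerconj} together with the fact that a witness $z$ of each generator $(u,v)\in G$ is a witness of every product $(u_1\cdots u_m, v_1\cdots v_m)$ — one chains $u_1\cdots u_m z = u_1\cdots u_{m-1} z v_m = \cdots = z v_1\cdots v_m$ exactly as in the proof of \Cref{powerconj}, so the ``moreover'' clause (a word is a common witness of $\spans{G}$ iff it is a common witness of $G$) drops out for free; and $(4)\Rightarrow(3)$ holds because if $z$ is, say, a common inner witness of $\roots{G}$ then $u_i z = \rho_{u_i}^{n_i} z = \rho_{u_i}^{n_i - 1}\rho_{u_i} z = \cdots = z \rho_{v_i}^{n_i} = z v_i$ for each $i$, so $z$ is a common inner witness of $G$ itself. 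So the crux is: \emph{if $\spans{G}$ is conjugate, then $\roots{G}$ has a common witness.}

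For $(1)\Rightarrow(4)$ I would first reduce to $\roots{G}$, which is a set of conjugate \emph{primitive} pairs whose Kleene closure is still conjugate (since $\spans{\roots{G}}\supseteq\spans{G}$ — wait, I must be careful: $\spans{\roots{G}}$ need not be contained in $\spans{G}$, so I actually need to argue that $\spans{\roots{G}}$ is conjugate, which requires reproving conjugacy for the new, larger set; the natural route is to note each pair $(\rho_{u_i}^{a_1}\cdots\rho_{u_m}^{a_m}, \ldots)$ and show conjugacy directly using that each $(\rho_{u_i},\rho_{v_i})$ is conjugate and that products of conjugates in $\spans{G}$ being conjugate forces strong structural constraints). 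Concretely, the key combinatorial step is: take two distinct generators $(u,v),(u',v')\in\roots{G}$; since $(uu', vv')$ and $(u'u, v'v)$ (and all longer mixed products) lie in $\spans{\roots{G}}$ and must be conjugate, and since by \Cref{primitive} each primitive pair has an essentially unique cut, one shows that the cuts of all the generators are ``compatible'', i.e. there is a single word $z$ lying in $\bigcap_i (x_i y_i)^* x_i$ where $(x_i,y_i)$ is the cut of the $i$-th pair. The mechanism is to look at the cut position of a long product $u^N u'$: on one hand conjugacy of this product gives it a cut; on the other hand the cut structure of $u^N$ (governed by $x,y$) pins down where that cut can be, and comparing with the cut structure forced by $u'$ yields, via \Cref{finewilf} / \Cref{commonfactor}, that $(xy)^\omega$ and $(x'y')^\omega$ share a long enough factor to conclude their primitive roots are conjugate and the cuts align. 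Iterating over all pairs of generators and invoking \Cref{cwpattern} produces the common witness $z$.

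I expect the main obstacle to be exactly this alignment argument — showing that pairwise compatibility of cuts (obtained from looking at two generators at a time) upgrades to a single \emph{global} common witness across the whole (possibly infinite) index set $I$. The subtlety is that $z\in(x_iy_i)^*x_i$ requires not just that the infinite words $(x_iy_i)^\omega$ all coincide, but that a common \emph{finite} prefix ending at a cut position works simultaneously for every $i$; when the generators have different lengths one must check that the cut positions are congruent modulo the appropriate periods, and the empty-cut cases ($u_i = v_i$) need the same care as in \Cref{infmanycw} about choosing $(x_i,y_i) = (\epsilon, u_i)$ versus $(u_i,\epsilon)$ depending on whether we are building an inner or an outer witness. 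I would handle this by first treating the case where $\spans{G}$ is \emph{distinct} (no $u=v$) so that all cuts are genuinely unique, establish that all $(x_iy_i)^\omega$ are equal to a common infinite word $w$ and all cut positions project to the same residue, take $z$ to be the prefix of $w$ of that length, and then separately fold in the identical pairs. The auxiliary results promised for \Cref{sec:analysis} (in particular a ``cut correspondence'' lemma generalizing \Cref{primitive} to non-primitive pairs) are presumably what make this clean, so I would defer the heaviest case analysis there and keep the present argument at the level of: reduce to primitive roots, apply pairwise compatibility, globalize via Fine–Wilf, conclude by \Cref{cwpattern}.
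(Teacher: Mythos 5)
Your three easy implications $(2)\Rightarrow(1)$, $(3)\Rightarrow(2)$, $(4)\Rightarrow(3)$ are correct and essentially identical to the paper's. The hard direction $(1)\Rightarrow(4)$, however, has two genuine gaps.

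First, you form products of generators of $\roots{G}$ (such as $uu'$ and $u^N u'$) and invoke their conjugacy because they lie in $\spans{\roots{G}}$ --- but conjugacy of $\spans{\roots{G}}$ is not given. Hypothesis $(1)$ only says $\spans{G}$ is conjugate, and $\spans{\roots{G}}$ is typically a strict superset. You flagged the problem yourself, but the proposed repair (``show conjugacy directly \ldots\ forces strong structural constraints'') is circular: via the easy implications, conjugacy of $\spans{\roots{G}}$ is equivalent to $\roots{G}$ having a common witness, which is $(4)$ itself. The paper sidesteps this by always building products out of elements of $G$, choosing exponents so that every block $u_i^{j_i}$ has the same length $N$, and only then re-reading each $u_i$ as $\rho_{u_i}^{n_i}$ for the cut analysis; the constructed pair therefore stays inside $\spans{G}$, where conjugacy is genuinely available.

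Second, your globalization step asserts that all $(x_iy_i)^\omega$ coincide with a single infinite word $w$. This is false whenever $G$ contains generators whose primitive roots are not conjugate to each other: for $G=\{(ab,ba),(acd,cda)\}$, the set $\spans{G}$ is conjugate with unique common inner witness $a$, yet $(ab)^\omega\neq(acd)^\omega$. The \Cref{commonfactor} argument cannot, and should not, force these primitive roots to be conjugate; when they are not, the paper uses it in the opposite direction, to \emph{exclude} cut positions that would create a long overlap, thereby squeezing the cut of the constructed product into its first or last block. Accordingly the paper's proof of the finite case (\Cref{commonwitnessfinite}) splits on the number $d$ of conjugacy classes of primitive roots: the single-class (equal-length) case is handled by the Equal Length Lemma (\Cref{eqlengthprimitives}) together with its inductive aggregation (\Cref{eqlengthprimitivescorr}) --- which is exactly the missing mechanism for upgrading pairwise cut compatibility to a global one --- and the multi-class case is handled by the big-product cut argument. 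Finally, the infinite case is reduced to the finite one by a separate compactness argument (\Cref{compactness}), rather than by running the alignment over a possibly infinite index set directly.
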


\begin{proof}
We prove $(4) \implies (3) \implies (2) \implies (1) \implies (4)$. 
The only nontrivial direction is $(1) \implies (4)$ that is proved in \Cref{sec:proofcw}.
\begin{description}
\item[$(4) \implies (3)$]
Follows from \Cref{samewitnessext}.
\item[$(3) \implies (2)$]
Suppose there exists a common inner witness $z$ of the set $G$. Hence $\forall (u,v) \in G$, $u z=z v$. Let $(u',v')$ be any arbitrary element from $\spans{G}$, i.e., $(u',v') = (u_1\cdots u_n,v_1\cdots v_n)$ for some $n \geq 1$ and $(u_i,v_i) \in G$ for $1 \leq i \leq n$. By induction on $n$, we equate $u'z= zv'$ as follows. Thus, $z$ is a common inner witness of $G^*$. The proof for common outer witness is symmetric. 
\begin{align*}
u'z &= u_1\cdots u_{n-1}u_n z \\
&=  u_1\cdots u_{n-1} z v_n && \text{(Since $u_n z = z v_n$)}\\
&= zv_1\cdots v_{n-1}v_n  && \text{(Inductive Hypothesis)}\\
&= zv'
\end{align*}
\item[$(2) \implies (1)$] Follows from \Cref{uz=zv}.
\end{description}
\end{proof}

\begin{remark}
Note that whenever $G^\ast$ is conjugate even though $G$ and $\roots{G}$ have the same common witnesses, $G^\ast$ need not be equal to $\roots{G}^\ast$. Indeed, $  G^\ast \subseteq \roots{G}^\ast $.
\end{remark}

As a corollary, we get that for any rational expression $E$, the expression $E^*$ is conjugate if, and only if, $E$ has a common witness. Below is an illustration of the common witness theorem for a set of pairs that is not rational.

\begin{example}
Let $G= \{(ab^p,b^pa) \mid p \text{ is a prime number}\}$. The word $a \in \bigcap_{p \in \mathbb{N} \text{, $p$ is a prime}} (ab^p)^*a$ is a common inner witness of the set $G$. It is also easy to verify that $G^*$ is conjugate and $a$ is a common inner witness of $G^*$.
\end{example}

\subsection{Conjugacy and Common Witness Theorem for Monoid Closure}
Next we state the conjugacy theorem for monoid closures, i.e., sumfree sets of the form $$M = (\as_0,\bs_0){G_1}^*(\as_1,\bs_1){G_2}^*\cdots (\as_{k-1},\bs_{k-1}){G_k}^*(\as_k,\bs_k), k > 0 \ .$$ where $G_1^*,G_2^*, \ldots, G_k^*$ are arbitrary nonempty sets of conjugate pairs.

\begin{theorem}\footnote{We thank the participants of Autoboz 2025 (Grenaa, Denmark) for identifying a missing case in the conjugacy theorem for monoid closure (Item 2 of Theorem 7).}[Conjugacy Theorem for Monoid Closure]\label{thm:conjugateMonoidClosure}
    Let $M = (\alpha_0,\beta_0)(\prod_{i=1}^{k}G_i^*(\as_i,\bs_i))$, $k \geq 1$, be a sumfree set. $M$ is conjugate if, and only if, one of the following holds.
\begin{enumerate}
    \item  $M$ has a common witness.
    \item For each $i \in [1,k]$, $G_i^*$ is conjugate and there exist a primitive word $\rho$ such that $\fst(G_i^*)\sim \snd(G_i^*) \subsetsim \rho^*$,  and there exists a word $v \sim \as_0 \cdots \as_k \sim \bs_0 \cdots \bs_k$ such that $\fst(M) \sim \snd(M) \subsetsim v\rho^*$.
\end{enumerate}
    \end{theorem}
     
We give the proof of the above theorem in \Cref{sec:conjugateMonoid}. Note that the above result does not generalise to arbitrary sets of pairs, in particular, rational sets using sum.
\begin{example}
$(ab,ba)^* + (baa,aab)^*$ is an infinite conjugate set with \emph{no} common witness. Also, it does not satisfy Item $(2)$ of \Cref{thm:conjugateMonoidClosure}.
\end{example}

\begin{definition}[Redux, Singleton Redux]
Let $M$ be the sumfree set $$(\as_0,\bs_0){G_1}^*(\as_1,\bs_1){G_2}^*\cdots (\as_{k-1},\bs_{k-1}){G_k}^*(\as_k,\bs_k) \ .$$ The \emph{redux} of $M$ is the pair $(\as_0\as_1\cdots \as_k,\bs_0\bs_1\cdots \bs_k)$ obtained by substituting each $G_i^*$ by the empty pair $(\epsilon, \epsilon)$.
A \emph{singleton redux} of $M$ is a set obtained by substituting all but one of the $G_i^*$'s by the empty pair $(\epsilon, \epsilon)$. They are of the form $(\as_0 \cdots \as_{i-1},\bs_0 \cdots \bs_{i-1}){G_i}^*(\as_i \cdots \as_k,\bs_i \cdots \bs_k)$ where $1 \leq i \leq k$.
\end{definition}

\begin{example}
Consider the set $M = (a,a)(\mathit{baa},\mathit{aba})^*(b,a)(\mathit{aab},\mathit{baa})^*(a,b)$. The redux of $M$ is $(\mathit{aba},\mathit{aab})$, and its singleton reduxes are $(a,a)(\mathit{baa},\mathit{aba})^*(\mathit{ba},\mathit{ab})$ and $(\mathit{ab},\mathit{aa})(\mathit{aab},\mathit{baa})^*(a,b)$.
\end{example}

If a sumfree set has a common witness, it is conjugate. The converse need not be true as illustrated in the below example.

\begin{example}
    Consider the sumfree set \[M = \left( \begin{array}{c}\varepsilon\\
c\end{array}\right) 
\left( \begin{array}{c}ab\\
ba\end{array}\right)^*
\left( \begin{array}{c}
c\\
b\end{array}\right)
\left( \begin{array}{c}ba\\
ab\end{array}\right)^*
\left( \begin{array}{c}b\\
\varepsilon\end{array}\right) \]
The set $M = \{((ab)^nc(ba)^mb,c(ba)^nb(ab)^m) \mid n,m  \geq 0\}$ is conjugate and has no common witness. However, $G_1^* = (ab,ba)^*$ and $G_2^* = (ba,ab)^*$ is conjugate and there exist a primitive word $\rho = ab$ such that $\fst(G_i^*)\sim \snd(G_i^*) \subsetsim \rho^*=(ab)^*$,  and there exists a word $v = cb$ (which is a cyclic shift of the redux of $M$) such that $\fst(M) \sim \snd(M) \subsetsim v\rho^* = (cb)(ab)^*$.

Note that the $v$ and $\rho$ need not be unique. For $\rho= ba$ and $v=bc$, it is also true that $\fst(M) \sim \snd(M) \subsetsim v\rho^* = (bc)(ba)^*$.
\end{example}

For sumfree sets of the form $M = (\as_0,\bs_0)G^*(\as_1,\bs_1)$ with only one Kleene closure, we show that $M$ is conjugate iff $M$ has a common witness. Later, we show that if a general sumfree set has a common witness then that is in the intersection of the common witnesses of the singleton reduxes of the set. Towards this, we need the following definition.

\begin{definition}[Prefix Delay, Suffix Delay]\label{mutualquotient}
If $u$ and $v$ are words such that one of them is a prefix of another, we define the \emph{prefix delay} between $u$ and $v$ as
$$[u,v]_L = \begin{cases}u^{-1}v & \text{ if $u$ is a prefix of $v$}\\
v^{-1}u & \text{ if $v$ is a prefix of $u$}
\end{cases}$$
Similarly, the \emph{suffix delay} of two words $u$ and $v$ such that one of them is suffix of another, denoted by $[u,v]_R$, is $vu^{-1}$ if $u$ is a suffix of $v$ and $uv^{-1}$ if $v$ is a suffix of $u$.

For example, $[abaa,ab]_L = aa = [ab,abaa]_L$.
\end{definition}

Following is the common witness theorem for a sumfree set with only one Kleene star, i.e., sets of the form $(\as_0,\bs_0)G^*(\as_1,\bs_1)$. In short it states that such a set is conjugate if and only if it has a common witness that is determined by the common witnesses of $G \cup \{(\as_1\as_0,\bs_1\bs_0)\}$.
\begin{theorem}\label{singlemonomial}
Let $M= (\as_0,\bs_0)G^*(\as_1,\bs_1)$ be a sumfree set with nonempty redux. The following are equivalent.
\begin{enumerate}
\item $M$ is conjugate.
\item There exist a common witness  of $G \cup \{(\as_1\as_0,\bs_1\bs_0)\}$.
\item $M$ has a common witness. 
\end{enumerate}

Furthermore,  
\begin{enumerate}[label=(\alph*)]
\item\label{Thm7:1} If the set $G \cup \{(\as_1\as_0,\bs_1\bs_0)\}$ has a unique common inner witness, say $z'$,  then $M$ has a unique common witness $z = [\as_0z',{\bs_0}]_R = [\as_1,z'\bs_1]_L$. Moreover, if $|\as_0z'| \geq |\bs_0|$ or equivalently $|\as_1| \leq |z'\bs_1|$, then $z$ is a common inner witness, otherwise it is a common outer witness.

\item\label{Thm7:2} If   the set $G \cup \{(\as_1\as_0,\bs_1\bs_0)\}$ has a  unique common outer witness, say $z'$,  then $M$ has a unique common witness $z = [\as_0,\bs_0z']_R = [z'\as_1,\bs_1]_L$. Moreover, if $ |z'\as_1| \geq |\bs_1|$ or equivalently $|\as_0| \leq |\bs_0z'|$, then $z$ is a common outer witness, otherwise it is a common inner witness.

\item\label{Thm7:3} If $G \cup \{(\as_1\as_0,\bs_1\bs_0)\}$ has infinitely many common witnesses, then $M$ is a subset of powers of the primitive root of its redux. Thus, $M$ has infinitely many common witnesses.
\end{enumerate}

\end{theorem}

\begin{example}
Let $M = (\as_0,\bs_0)G^*(\as_1,\bs_1)$ be a sumfree set with one Kleene star where $$\(\begin{matrix}\as_0 \\\bs_0 \end{matrix}\) = \(\begin{matrix}ab \\b \end{matrix}\), G =\left\{ \(\begin{matrix}bab \\ abb \end{matrix}\)\right\}, \(\begin{matrix}\as_1 \\\bs_1 \end{matrix}\) = \(\begin{matrix}b \\ ab \end{matrix}\). $$ The redux of $M$ is $(\as_0\as_1,\bs_0\bs_1) = (abb,bab)$. The set $G \cup \{(\as_1\as_0,\bs_1\bs_0)\} = \{(bab,abb)\} \cup \{(bab,abb)\} =  \{(bab,abb)\}$ and, hence it has infinitely many common witnesses. By \Cref{singlemonomial} \ref{Thm7:3}, $M$ is a subset of powers of the primitive root of the redux, i.e., $M = (abb,bab)^{+}$. Therefore, $M$ has infinitely many witnesses same as those of $(abb,bab)$. 
\end{example}
A singleton redux of a sumfree set is nothing but a sumfree set with only one Kleene star. Given any sumfree set $M$, if $M$ is conjugate, each of its singleton reduxes are conjugate. From \Cref{singlemonomial}, a singleton redux of $M$ has a common witness. Further, we prove that $M$ has a common witness $z$ iff $z$ is the common witness of each of its singleton reduxes. The below theorem characterises the common witness of a general sumfree set.
\begin{theorem}[Common Witness Theorem for Monoid Closure]\label{generalmonomialwitness}
Let $M$ be a sumfree set. The following are equivalent.
\begin{enumerate}
\item $z$ is a common witness of each singleton redux of $M$.
\item $z$ is a common witness of $M$.
\end{enumerate}
\end{theorem}

\begin{example}
Let $M = (\as_0,\bs_0)G_1^*(\as_1,\bs_1)G_2^*(\as_2,\bs_2)$ be a sumfree set with two Kleene star where 
$$\(\begin{matrix}\as_0 \\\bs_0 \end{matrix}\) = \(\begin{matrix}b \\a \end{matrix}\), G_1 =\left\{ \(\begin{matrix}ac \\ ca \end{matrix}\)\right\}, \(\begin{matrix}\as_1 \\\bs_1 \end{matrix}\) = \(\begin{matrix}ab \\ b \end{matrix}\), G_2 =\left\{ \(\begin{matrix}bab \\ bab \end{matrix}\)\right\}, \(\begin{matrix}\as_2 \\\bs_2 \end{matrix}\) = \(\begin{matrix}\epsilon \\ b \end{matrix}\). $$
The redux of $M$ is $(\as_0\as_1\as_2,\bs_0\bs_1\bs_2) = (bab,abb)$. The set $M$ has two singleton reduxes, $$M_1 = \(\begin{matrix}\as_0 \\ \bs_0 \end{matrix}\) G_1^*\(\begin{matrix}\as_1\as_2 \\\bs_1\bs_2 \end{matrix}\) = \(\begin{matrix}b \\a \end{matrix}\){\(\begin{matrix}ac \\ ca \end{matrix}\)}^*\(\begin{matrix}ab \\bb \end{matrix}\)$$

and,
$$M_2 = \(\begin{matrix}\as_0\as_1 \\ \bs_0\bs_1 \end{matrix}\) G_2^*\(\begin{matrix}\as_2 \\ \bs_2 \end{matrix}\) = \(\begin{matrix}bab \\ab \end{matrix}\){\(\begin{matrix}bab \\ bab \end{matrix}\)}^*\(\begin{matrix}\epsilon \\b \end{matrix}\) .$$

The set $G_1 \cup \{(\as_1\as_2\as_0,\bs_1\bs_2\bs_0)\} = \{(ac,ca)\} \cup \{(abb,bba)\} =  \{(ac,ca), (abb,bba)\}$ has a unique common inner witness, say $z_1 = a = (ac)^*a \cap (abb)^*a$ and no common outer witness since $(ca)^*c \cap (bba)^*bb = \emptyset$. By \Cref{singlemonomial} \ref{Thm7:1}, the unique common inner witness of the singleton redux $M_1$ of $M$ is $[\as_0z_1,\bs_0]_R = [ba,a]_R = b$. 

The set $G_2 \cup \{(\as_2\as_0\as_1,\bs_2\bs_0\bs_1)\} =  \{(bab,bab)\}$ has infinitely many common witnesses. Thus the singleton redux $M_2$ is a subset of powers of the primitive root of the redux using \Cref{singlemonomial} \ref{Thm7:3}, i.e., $M_2 = (bab,abb)^+$.  Thus $M_2$ have infinitely many common inner witnesses $(bab)^*b$ and common outer witnesses $(abb)^*ab$.

By \Cref{generalmonomialwitness}, $M$ has a unique common inner witness $b \cap (bab)^*b = b$, that equals to the intersection of the common inner witness of its singleton reduxes $M_1$ and $M_2$. 
\end{example}

The proof of the common witness theorems of singleton redux as well as the monoid closure are given in \Cref{sec:conjugacy_monomial}. Finally, using those results we prove the conjugacy theorem for monoid closure (\Cref{thm:conjugateMonoidClosure}) in \Cref{sec:conjugateMonoid}


\section{Existence of Common Witness for Kleene Closure}\label{sec:proofcw}

In this section, we prove the direction $(1) \implies (4)$ of \Cref{conjugacy} recalled in the following lemma.

\begin{lemma}\label{1to4cw}
For a set of pairs $G$, if $\spans{G}$ is conjugate then $\roots{G}$ has a common witness.
\end{lemma}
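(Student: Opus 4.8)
The plan is to reduce immediately to the primitive-root setting, then exploit the equal-length lemma together with Fine--Wilf to build one common witness.

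\textbf{Setup and reduction.} By \Cref{samewitnessext}, it suffices to show that $\roots{G}$ has a common witness, so assume from now on that every pair in $G$ is primitive; write $G = \{(u_i,v_i)\mid i\in I\}$. If some $u_i = v_i$, then since the pair is primitive, every nontrivial power $(u_i^n, v_i^n)$ appearing in $\spans{G}$ is still equal on both coordinates, which constrains the witnesses only to lie in $u_i^*$; the interesting content is when $u_i\neq v_i$, so by \Cref{primitive} each such pair has a \emph{unique} cut $(x_i,y_i)$. The candidate common witnesses are, by \Cref{cwpattern}, the words lying in $\bigcap_{i\in I}(x_iy_i)^*x_i$ (for a common inner witness) or in $\bigcap_{i\in I}(y_ix_i)^*y_i$ (for a common outer witness). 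So the goal becomes: \emph{if $\spans{G}$ is conjugate, then either all the $x_i$ share a common value that extends each $(x_iy_i)^\omega$ correctly, or the symmetric statement holds for the $y_i$} --- concretely, I want to produce a single word $z$ that is simultaneously in $(x_iy_i)^*x_i$ for all $i$, or simultaneously in $(y_ix_i)^*y_i$ for all $i$.

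\textbf{Grouping by length and applying the Equal Length Lemma.} The key structural observation is that conjugate primitive pairs of the \emph{same} length are heavily constrained. Partition the index set $I$ into classes according to $|u_i|$. Within a single length class, \Cref{eqlengthprimitivescorr} (the corollary of the Equal Length Lemma) gives that, since $\spans{G}$ is conjugate, either all the $x_i$ in that class are equal or all the $y_i$ in that class are equal. This already handles a single length class. The work is to glue the classes together. For two pairs $(u_i,v_i)$, $(u_j,v_j)$ of \emph{different} lengths, consider the element $(u_i,v_i)^{\ell_i}(u_j,v_j)^{\ell_j}\in\spans{G}$ with $\ell_i,\ell_j$ chosen large; conjugacy of this element, together with a Fine--Wilf style argument (\Cref{finewilf}, \Cref{commonfactor}), forces long common factors of $u_i^\omega$ and $u_j^\omega$ and of $v_i^\omega$ and $v_j^\omega$, pinning down how the cuts of different lengths must be compatible. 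The upshot I aim for is a global dichotomy: either the infinite words $(x_iy_i)^\omega$ all coincide and the $x_i$ are all prefixes of this common word (so a sufficiently long prefix is a common inner witness), or the infinite words $(y_ix_i)^\omega$ all coincide and a long prefix gives a common outer witness.

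\textbf{Building the witness; the main obstacle.} Once the dichotomy is established, the common witness is constructed explicitly: in the ``inner'' case let $w$ be the common value of $(x_iy_i)^\omega$ restricted to a length that is a multiple of $\operatorname{lcm}$ of all $|x_iy_i|$ plus enough to cover the longest $x_i$; one checks $w\cdot(\text{appropriate suffix})\in(x_iy_i)^*x_i$ for each $i$ using \Cref{finewilf}, and likewise in the ``outer'' case. The step I expect to be the genuine obstacle is the gluing across length classes --- ruling out the ``mixed'' configuration where one length class forces $x_i=x_j$ while another forces $y_k=y_\ell$ with no common refinement. Handling this requires carefully combining, for cross-class pairs, the Equal-Length-style reasoning after passing to powers (so that lengths can be made comparable, e.g.\ comparing $(u_i,v_i)^{|u_j|/\gcd}$ against $(u_j,v_j)^{|u_i|/\gcd}$) with the Cut Lemma (\Cref{cutcorr}) to exclude the spurious alignments, and this is where the bulk of the case analysis promised in \Cref{sec:analysis} will be consumed. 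I would also need to treat carefully the degenerate pairs with $u_i=v_i$ by choosing, as in the proof of \Cref{infmanycw}, the empty cut $(\epsilon,u_i)$ in the inner case and $(u_i,\epsilon)$ in the outer case so they never obstruct either branch of the dichotomy.
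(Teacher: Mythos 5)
Your reduction via \Cref{samewitnessext} and your use of the Equal Length Lemma within a fixed-length class are exactly right — this is what the paper does when all primitive roots lie in a single conjugacy class (the $d=1$ case of \Cref{commonwitnessfinite}). But there are two genuine gaps.

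\textbf{The cross-class gluing is not resolved.} You correctly flag the gluing as ``the genuine obstacle,'' but the proposal stops short of actually performing it, and the pairwise strategy you sketch does not obviously work. Suppose three classes pairwise determine ``common inner,'' ``common outer,'' ``common inner'': a pairwise Fine--Wilf argument does not by itself tell you which global branch of the dichotomy wins, nor does it produce a single word $z$ lying in $\bigcap_i (x_iy_i)^*x_i$ (the required witness lives in this intersection, not merely in the agreement $x_i = x_j$, and across classes of different lengths the exponents $m_i$ can genuinely differ). The paper resolves this by a \emph{global} construction rather than a pairwise one: it picks one representative $(u_1,v_1),\ldots,(u_d,v_d)$ from each conjugacy class of primitive roots, multiplies large equal-length blocks of their powers into one pair $(u,v)\in\spans{G}$, and analyzes where the cut $(p,q)$ of that single conjugate pair can land. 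Conjugate Fine--Wilf (\Cref{commonfactor}) forces the cut into the first half of the first block or the second half of the last; an induction on blocks then shows $p=(x_1y_1)^{m_1}x_1=\cdots=(x_dy_d)^{m_d}x_d$ (or the symmetric outer statement, obtained by reversal via \Cref{revcut}). The single cut position coherently commits every class to the same $z$ and to the same inner/outer choice simultaneously; pairwise reasoning gives you no such coherence.

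\textbf{The infinite case is not addressed.} You write $G=\{(u_i,v_i)\mid i\in I\}$ and then argue as if arbitrary $I$ were fine, but all the combinatorial steps — building a product $(u_1,v_1)^{\ell_1}\cdots(u_d,v_d)^{\ell_d}$, applying \Cref{eqlengthprimitivescorr} (proved by induction on the number of pairs) — are finitary. The paper proves the finite case (\Cref{commonwitnessfinite}) and then lifts to infinite $G$ with a separate compactness argument (\Cref{compactness}), which itself relies on the dichotomy of \Cref{infmanycwcorr} (a set either has a unique common witness or infinitely many, and the latter forces a single primitive root). This lifting step needs to appear explicitly.

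A smaller point: partitioning by $|u_i|$ rather than by conjugacy class of primitive roots is a coarser grouping; it happens to be harmless inside a class (the Equal Length Lemma only needs equal lengths), but the cross-class argument really hinges on the classes being non-conjugate, since that is precisely what lets \Cref{commonfactor} exclude a long cut in the middle of the product word.
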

We prove the lemma when $G$ is finite by case analysis and then extend it for a countably infinite set of pairs using a compactness argument. Towards proving this, we need the following auxiliary result proven using \nameref{eqlengthprimitives}. A corollary of the proposition below is that if $G^*$ is conjugate and $\roots{G}$ only consists of pairs of identical length, then $\roots{G}$ has a common witness. 
\begin{proposition}\label{eqlengthprimitivescorr}
Let $G = \{(u_i,v_i) \mid i \in \{1, \ldots, k\}\}$ be a set of $k \geq 2$ conjugate pairs such that all pairs in $\roots{G}$ are of equal length. If $(u_1,v_1)^*(u_2,v_2)^* \cdots (u_k,v_k)^*$ is conjugate then either $x_1 = x_2 = \cdots = x_k $ or $y_1 = y_2 = \cdots = y_k$ where $(x_i,y_i)$ is a cut of the primitive pair $(\rho_{u_i},\rho_{v_i})$ for $i \in [k]$.
\end{proposition}

\begin{proof}
Proof is by induction on the number of pairs in $\roots{G}$.
\begin{description}
    \item[Base Case:] When $\roots{G}$ has only 2 pairs, i.e., let $\roots{G} = \{(\rho_{u_1},\rho_{v_1}),(\rho_{u_2},\rho_{v_2})\}$. There exist $\ell_1,\ell_2$ such that $\ell_2 > \ell_1 + 2, \ell_1 > 2$ and $(\rho_{u_1},\rho_{v_1})^{l_1} (\rho_{u_2},\rho_{v_2})^{l_2} \in (u_1,v_1)^*(u_2,v_2)^*$ and hence it is conjugate. From \nameref{eqlengthprimitives} we get either $x_1 = x_2$ or $y_1 = y_2$.

We also consider the case when $\roots{G}$ has exactly 3 pairs, i.e., let $\roots{G} = \{(\rho_{u_1},\rho_{v_1}),(\rho_{u_2},\rho_{v_2}), (\rho_{u_3},\rho_{v_3})\}$. For  $i,j\in [3]$, there exists  set of pairs $\{(\rho_{u_i},\rho_{v_i})^{\ell_i}(\rho_{u_{j}},\rho_{v_{j}})^{\ell_{j}} \mid \ell_{j} > \ell_i + 2, \ell_i > 2 \} \subset (u_1,v_1)^*(u_2,v_3)^*(u_{3},v_{3})^*$ that is conjugate. Thus, the pairs $(\rho_{u_i},\rho_{v_i})$ and $(\rho_{u_{j}},\rho_{v_{j}})$ satisfy either $x_i = x_{j}$ or $y_i = y_{j}$ by \nameref{eqlengthprimitives} where $(x_i,y_i)$ and $(x_j,y_j)$ are cuts of $(\rho_{u_1},\rho_{v_1})$ $(\rho_{u_j},\rho_{v_j})$ respectively. There are 3 cases.
\begin{description}
    \item[$1.$] All pairs in $\roots{G}$ are identical. Then there exists cuts $(x_i,y_i)$ for each pair $(\rho_{u_i},\rho_{v_j}) \in \roots{G}$ for $i \in [3]$ such that $x_1 = x_2=x_3 = \varepsilon$. Hence, the proposition holds. 

    \item[$2.$] Exactly one pair in $\roots{G}$ is identical. Let $(\rho_{u_1},\rho_{v_1})$ and $(\rho_{u_2},\rho_{v_2})$ are distinct pairs with unique nonempty cuts $(x_1,y_1)$ and $(x_2,y_2)$ respectively. By \Cref{eqlengthprimitives}, either $x_1=x_2$ or $y_1=y_2$. Assume that $x_1=x_2$ Also, since $(\rho_{u_3},\rho_{v_3})$ is identical pair, either $\rho_{u_3}=x_i$ or $\rho_{u_3}=y_i$ for $i\in [1,2]$. If there exists an $i \in [1,2]$ such that $\rho_{u_3}=x_i$, then $x_1=x_2 = \rho_{u_3} = x_3$ where $(x_3,y_3) = (\rho_{u_3},\varepsilon)$ is a cut of $(\rho_{u_3},\rho_{v_3})$. Suppose for all $i \in [1,2]$, $\rho_{u_3}=y_i$, then $y_1=y_2 = \rho_{u_3} = y_3$ where $(x_3,y_3) = (\varepsilon,\rho_{u_3})$ is a cut of $(\rho_{u_3},\rho_{v_3})$. Hence, the proposition holds.

    \item[$3.$] Exactly one pair in $\roots{G}$ is non-identical.  WLOG assume that $(\rho_{u_1},\rho_{v_1})$ and $(\rho_{u_3},\rho_{v_3})$ are the identical pairs,  $(\rho_{u_2},\rho_{v_2})$ is non-identical with unique nonempty cut $(x_2,y_2)$. 
   
    \begin{claim}\label{claim:samerho}
        $\rho_{u_1}=\rho_{u_3}$.
    \end{claim}
    \begin{proof}
        Consider the pair $(u,v)=(u_1,v_1)^j(u_2,v_2)^{l}(u_3,v_3)^j$ where $j$ and $l$ are set such that $|u_2^l| > 4 \cdot |u_1^j| = 4 \cdot |u_3^j|$. Substituting with their primitive roots, we get
        \begin{align*}
            u &= \overbrace{u_1 \cdots u_1}^{B_1}\overbrace{u_2 \cdots u_2}^{B_2}\overbrace{u_3 \cdots u_3}^{B_3} = \rho_{u_1} \cdots \rho_{u_1}x_2y_2 \cdots x_2y_2 \cdots x_2y_2\rho_{u_3} \cdots \rho_{u_3}\\
            v &= v_1 \cdots v_1v_2 \cdots v_2v_3 \cdots v_3 =\rho_{u_1} \cdots \rho_{u_1}y_2x_2 \cdots y_2x_2 \cdots y_2x_2\rho_{u_3} \cdots \rho_{u_3}
        \end{align*}
        Let $B_1,B_2,B_3$ denote the block of $u_1$'s, $u_2$'s and $u_3$'s. Similarly, $B_1',B_2',B_3'$ denote the block of $v_1$'s, $v_2$'s and $v_3$'s.
        Since $(u,v)$ is conjuagte, it has a cut $(p,q)$. We do a case analysis on $p$.
        \begin{enumerate}
            \item $p$ ends within $B_1(\rho_{u_1})^{-1}$. In this case, $p\in \rho_{u_1}^*$ by Case~\ref{same} of \nameref{cutcorr} since there exist atleast a factor $\rho_{u_1}$ of $q$ in $u$ to match with the prefix $\rho_{u_1}$ of $q$ in $v$.  On further matching $q$ in $u$ and $v$, since $|\rho_{u_1}|=|x_2y_2| = |y_2x_2| = |\rho_{u_3}|$, we get $\rho_{u_1}=x_2y_2 = y_2x_2 = \rho_{u_3}$.
             \item $p$ ends after block $B_2\rho_{u_3}$. This case is symmetric to the above case, where we get $q \in \rho_{u_3}^*$ by Case~\ref{same} of \nameref{cutcorr}. On further matching $p$ in $u$ and $v$, since $|\rho_{u_3}|=|x_2y_2| = |y_2x_2| = |\rho_{u_1}|$, we get $\rho_{u_3}=x_2y_2 = y_2x_2 = \rho_{u_1}$.
            \item  $p$ ends within $\rho_{u_1}B_2\rho_{u_3}$. WLOG assume that $p$ ends within the first half of $\rho_{u_1}B_2\rho_{u_3}$. Since $|\rho_{u_1}B_2\rho_{u_3}|/2 > |B_2|/2 \geq 2. |u_1^j|$, on matching $q$ in $u$ and $v$, the prefix $u_1^j$ of $v$ matches entirely with block $B_2$ and there exist at least one factor of $x_2y_2$ following that. Using \nameref{cutcorr}, the prefix $p (u_1^j)$ in $u$ ends in $x_2$. Since $|\rho_{u_1}| = |x_2y_2| = |y_2x_2|$, we get $\rho_{u_1} = y_2x_2 = x_2y_2$, and this contradicts that $(u_2,v_2)$ is a non-identical pair. Similarly, we get a contradiction when $p$ ends withing second half of $\rho_{u_1}B_2\rho_{u_3}$. \qedhere
           
        \end{enumerate}
        
    \end{proof}
    Thus, we get $\rho_{u_1}=\rho_{u_3}$ when exactly one pair in $\roots{G}$ is non-identical. Since $(x_2,y_2)$ is the unique nonempty cut of $(u_2,v_2)$, by \Cref{eqlengthprimitives}, either $x_2=\rho_{u_1} = \rho_{u_3}$ or $y_2 = \rho_{u_1}= \rho_{u_3}$. In either case the proposition holds. 
     \end{description}

     \item[Inductive Case:]
Assume that the statement holds for $k$ pairs in $\roots{G}$, i.e.,
let
\[
\roots{G}= \{(\rho_{u_1},\rho_{v_1}), \ldots, (\rho_{u_k},\rho_{v_k})\}.
\]
By the induction hypothesis, if $(u_1,v_1)^* \cdots (u_k,v_k)^*$
is conjugate, then either $x_1=\cdots=x_k \quad \text{or} \quad y_1=\cdots=y_k$. Without loss of generality assume that $x_1=\cdots=x_k$.
We prove that the statement also holds for $k+1$ pairs.

Let $G' \supseteq G$ such that
$\roots{G'}=\roots{G}\cup \{(\rho_{u_{k+1}},\rho_{v_{k+1}})\},
$ where $(\rho_{u_{k+1}},\rho_{v_{k+1}})$ is the conjugate primitive root of
$(u_{k+1},v_{k+1})\in G'$ and has the same length as the pairs in $\roots{G}$.
Assume that $(u_1,v_1)^*\cdots (u_k,v_k)^*(u_{k+1},v_{k+1})^*$ is conjugate. Then for every $i\in[k]$ there exists a set of pairs
\[
\{(\rho_{u_i},\rho_{v_i})^{\ell_i}(\rho_{u_{k+1}},\rho_{v_{k+1}})^{\ell_{k+1}}
\mid
\ell_{k+1}>\ell_i+2,\ \ell_i>2
\}
\subseteq
(u_1,v_1)^*\cdots (u_k,v_k)^*(u_{k+1},v_{k+1})^*
\]
that is conjugate. Hence, by \nameref{eqlengthprimitives},
for every $i\in[k]$ the pairs
$(\rho_{u_i},\rho_{v_i})$ and $(\rho_{u_{k+1}},\rho_{v_{k+1}})$ satisfy
either
\[
x_i=x_{k+1}
\quad\text{or}\quad
y_i=y_{k+1},
\]
where $(x_i,y_i)$ and $(x_{k+1},y_{k+1})$ are cuts of
$(\rho_{u_i},\rho_{v_i})$ and $(\rho_{u_{k+1}},\rho_{v_{k+1}})$ respectively.

We partition $\roots{G}$ into two sets: $I=\{(\rho,\rho')\in\roots{G}\mid \rho=\rho'\}$
containing all identical pairs, and $J=\roots{G}\setminus I$ containing all non-identical pairs.
We distinguish the following cases.

\begin{enumerate}

\item If $J=\emptyset$, then all pairs in $\roots{G}$ are identical.
Hence there exist cuts $(x_i,y_i)$ such that
\[
x_1=\cdots=x_k=\varepsilon .
\]
If $(\rho_{u_{k+1}},\rho_{v_{k+1}})$ is also identical, we can take
$x_{k+1}=\varepsilon$ and the statement follows.

If $(\rho_{u_{k+1}},\rho_{v_{k+1}})$ is non-identical, it has a unique
nonempty cut $(x_{k+1},y_{k+1})$.
By \Cref{eqlengthprimitives} together with \Cref{claim:samerho},
for all $i,j\in[k]$ either
\[
x_{k+1}=\rho_{u_i}=\rho_{u_j}
\quad\text{or}\quad
y_{k+1}=\rho_{u_i}=\rho_{u_j}.
\]
Thus
\[
\rho_{u_1}=\cdots=\rho_{u_k}=x_{k+1}
\quad\text{or}\quad
\rho_{u_1}=\cdots=\rho_{u_k}=y_{k+1},
\]
and the proposition holds.

\item Suppose $J\neq\emptyset$.

\begin{enumerate}

\item If $I=\emptyset$, then every pair in $\roots{G}$ is non-identical and
has a unique nonempty cut $(x_i,y_i)$ for $i\in[k]$.
If $(\rho_{u_{k+1}},\rho_{v_{k+1}})$ is also non-identical with unique cut
$(x_{k+1},y_{k+1})$, then by \Cref{eqlengthprimitives}
either $x_i=x_{k+1}$ for some $i\in[k]$ or
$y_i=y_{k+1}$ for all $i\in[k]$.
In the former case we obtain $x_1=\cdots=x_k=x_{k+1}$ 
by the induction hypothesis, and in the latter case $y_1=\cdots=y_k=y_{k+1}$.

If $(\rho_{u_{k+1}},\rho_{v_{k+1}})$ is identical, then
\Cref{eqlengthprimitives} implies that either
$x_i=\rho_{u_{k+1}}$ for some $i\in[k]$ or
$y_i=\rho_{u_{k+1}}$ for all $i\in[k]$.
Using the induction hypothesis we again obtain the desired equality.

\item Suppose both $I$ and $J$ are nonempty.
Without loss of generality let
\[
I=\{(\rho_{u_1},\rho_{v_1}),\ldots,(\rho_{u_{j-1}},\rho_{v_{j-1}})\}  \text{ and }
\]
\[
J=\{(\rho_{u_j},\rho_{v_j}),\ldots,(\rho_{u_k},\rho_{v_k})\}
\]
for some $1<j\le k$.
Each pair in $J$ is non-identical and hence has a unique nonempty cut.

By the induction hypothesis we obtain
\begin{equation}\label{eq:IJ}
\rho_{u_1}=\cdots=\rho_{u_{j-1}}=x_j=\cdots=x_k .
\end{equation}

If $(\rho_{u_{k+1}},\rho_{v_{k+1}})$ is non-identical with cut
$(x_{k+1},y_{k+1})$, then applying the base case for three pairs yields
either
\[
\rho_{u_1}=x_i=x_{k+1}
\quad\text{for some } i\in\{j,\ldots,k\}, \text{ or}
\]
\[
\rho_{u_1}=y_i=y_{k+1}
\quad\text{for all } i\in\{j,\ldots,k\}.
\]
Using \eqref{eq:IJ} we obtain the desired equality of all $x$'s
or all $y$'s.

If $(\rho_{u_{k+1}},\rho_{v_{k+1}})$ is identical, then
\Cref{eqlengthprimitives} implies that either
$x_i=\rho_{u_{k+1}}$ for some $i\in\{j,\ldots,k\}$ or
$y_i=\rho_{u_{k+1}}$ for all such $i$.
Combining this with \eqref{eq:IJ} and
\Cref{claim:samerho} again yields the required equality.

\end{enumerate}

\end{enumerate}
\end{description}
\end{proof}

\subsection{For a Finite Set of Pairs}
The common witness theorem for a finite set is a straightforward corollary of the below lemma.
\begin{lemma}\label{commonwitnessfinite}
Let $G = \{(u_i,v_i) \mid 1\leq i \leq k\}$ be a finite set of $k \in \N$ pairs. If the set $(u_1,v_1)^*(u_2,v_2)^* \cdots (u_k,v_k)^*$ is conjugate then $\roots{G}$ as well $G$ has a common witness.
\end{lemma}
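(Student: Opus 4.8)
The plan is to reduce to the primitive case and induct on the number of distinct lengths appearing among the primitive roots. By \Cref{conjugacy}'s equivalences (the parts $(4)\Leftrightarrow(3)$ and the fact that common witnesses of $G$ and $\roots{G}$ coincide, cf.\ \Cref{samewitnessext}), it suffices to exhibit a common witness for $\roots{G}$. Write $\roots{G} = \{(u_1,v_1),\dots,(u_m,v_m)\}$, a finite set of conjugate \emph{primitive} pairs; by \Cref{primitive} each distinct pair has a unique nonempty cut $(x_i,y_i)$ (and the identical pairs $u_i=v_i$ contribute the empty cuts, which I handle as in \Cref{infmanycw} by fixing a convenient choice). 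Since $\spans{G}$ is conjugate, so is $\spans{\roots{G}}$ restricted to any sub-family, in particular any $(u_i,v_i)^{\ell_i}(u_j,v_j)^{\ell_j} \in \spans{\roots{G}}$ is conjugate. First I would group the pairs by the length $|u_i|=|v_i|$ of their primitive roots; within each length class, \Cref{eqlengthprimitivescorr} (Equal Length Lemma) forces either all the $x_i$ to agree or all the $y_i$ to agree, so each length class already has an obvious common witness ($x$ or $y$ respectively, using \Cref{uz=zv}: $x$ is an inner witness of every $(xy,yx)$, $y$ an outer one).

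Next I would treat two primitive pairs $(u_1,v_1)$, $(u_2,v_2)$ of \emph{unequal} lengths, say $|u_1|<|u_2|$. Consider the conjugate pair $(u_1^{\ell_1}u_2^{\ell_2}, v_1^{\ell_1}v_2^{\ell_2})$ for $\ell_1,\ell_2$ large. A cut of this pair yields a long common factor of $u_1^\omega$ and $u_2^\omega$ (and of $v_1^\omega$, $v_2^\omega$); taking $\ell_1,\ell_2$ large enough that this common factor exceeds $|u_1|+|u_2|-\gcd(|u_1|,|u_2|)$, \Cref{commonfactor} (Conjugate Fine and Wilf) forces $\rho_{u_1}\sim\rho_{u_2}$ — but $u_1,u_2$ are already primitive, so $u_1\sim u_2$, and similarly $v_1\sim v_2$. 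This is the crucial rigidity: distinct lengths cannot coexist in a conjugate Kleene closure unless the roots are cyclic shifts of one another. Chasing the cut positions carefully (using \nameref{cutlemmaprop} and the \nameref{cutcorr} to pin down where the cuts can lie in $u_1^{\ell_1}u_2^{\ell_2}$) then shows that the common witness forced by the first factor — the prefix $x$ of $u_1$ at which its cut lies — is also, via the cyclic-shift relation, a valid inner witness of $(u_2,v_2)$: concretely, $x$ sits at a cut position of $u_2$ as well because $u_2$ is a rotation of $u_1$ and the rotation amount is constrained by the conjugacy of the mixed product. Symmetrically $y$ works as a common outer witness in the other sub-case.

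Assembling these: I would do induction on $m$. For the inductive step, by the hypothesis $\spans{G'}$ (with $G'$ the first $m-1$ pairs) has a common witness $z$; WLOG $z$ is a common inner witness, so $z\in\bigcap_{i<m}(x_iy_i)^*x_i$, meaning all $u_i^\omega$ ($i<m$) share arbitrarily long common prefixes with $z^\omega$, hence (Fine–Wilf, \Cref{finewilf}) all $\rho_{u_i}^\omega$ coincide — so all $u_i$, $i<m$, are conjugate with a common cut, and likewise all $v_i$. Now $(u_m,v_m)$ together with any one earlier pair, via the unequal/equal-length analysis above, either shares the cut-prefix $x$ (giving a common inner witness extending $z$) or the cut-suffix $y$; a short argument rules out the mixed situation because it would make an earlier pair both have $x$ as cut-prefix and $y=y_m$ as cut-suffix with $x\ne y$, forcing that pair non-primitive unless $x_i=y_m$ and lengths collapse. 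The main obstacle I anticipate is exactly this bookkeeping of cut positions in the mixed products $(u_i,v_i)^{\ell_i}(u_j,v_j)^{\ell_j}$ — showing that the witness produced by one block is simultaneously a witness for the other block requires carefully tracking \emph{which} rotation relates $u_i$ to $u_j$ and verifying it is compatible with a single choice of $z$; the Cut Lemma's case list \ref{cuta}--\ref{cute} and \Cref{cutlemmaprop} are the tools that make this finite and checkable, but the case analysis is where essentially all the work lies.
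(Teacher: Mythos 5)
Your proposal has a genuine gap, and it is at the step you flagged as the ``crucial rigidity.'' You claim that for primitive pairs $(u_1,v_1)$, $(u_2,v_2)$ of unequal lengths, any cut of $(u_1^{\ell_1}u_2^{\ell_2},\,v_1^{\ell_1}v_2^{\ell_2})$ yields a long common factor of $u_1^\omega$ and $u_2^\omega$, forcing $\rho_{u_1}\sim\rho_{u_2}$ by Conjugate Fine--Wilf, and hence that distinct primitive-root lengths cannot coexist in a conjugate Kleene closure. This is false. Take $G=\{(aba,baa),(ab,ba)\}$: both pairs are primitive, of lengths $3$ and $2$, with non-conjugate primitive roots, and $a$ is a common inner witness ($aba\cdot a=a\cdot baa$ and $ab\cdot a=a\cdot ba$), so $G^*$ is conjugate by \Cref{conjugacy}. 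In the mixed product the cut $(p,q)$ has $p=a$, which sits right at the block boundary; the overlap it creates between the $u_2$-block of $u$ and the $v_1$-block of $v$ has length $|p|=1$, far below the Fine--Wilf threshold, so no conjugacy is forced. In general, the length of the cross-block overlap is controlled by the cut position, not by $\ell_1,\ell_2$: cuts near the boundary produce only short overlaps. The paper's proof is organized around exactly this point: it partitions the pairs into equivalence classes by conjugacy of primitive roots (not by length); within one class it uses the Equal Length Lemma; across classes it builds a product with blocks of \emph{equal total length} (choosing exponents via an $\mathrm{lcm}$) and then shows, by case analysis, that a cut must land near the first or last block boundary --- any middle position would create a long cross-block overlap and force two classes to collapse --- and that a boundary cut yields the desired common witness.

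A secondary problem is in your assembling step: from $z\in\bigcap_{i<m}(x_iy_i)^*x_i$ you conclude that ``all $u_i^\omega$ share arbitrarily long common prefixes with $z^\omega$, hence all $\rho_{u_i}^\omega$ coincide.'' But the exponents $n_i$ in $z=(x_iy_i)^{n_i}x_i$ may all be $0$, in which case the shared prefix is just $z=x_1=x_2=\cdots$, which need not be long relative to $|u_i|+|u_j|$. So Fine--Wilf does not apply, and the $u_i$ need not be conjugate to each other. (Again, $G=\{(aba,baa),(ab,ba)\}$ with $z=a$ is a counterexample.) The inductive step therefore does not go through as stated.

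What your proposal gets right and shares with the paper: reducing to the primitive case via \Cref{samewitnessext}, and invoking the Equal Length Lemma when primitive roots have the same length. But the treatment of non-conjugate (and in particular unequal-length) primitive roots must go through the cut-position case analysis of the paper's \emph{Cases 1--3}, not through a Fine--Wilf impossibility argument.
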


\begin{proof}
When $k= 1$, $G$ has only one pair $(u,v)$ and by assumption it is conjugate. By \Cref{uz=zv}, $(u,v)$ has a witness. From \Cref{samewitness}, we obtain that the witnesses of $\roots{G} = \{(\rho_{u},\rho_{v})\}$ are same as that of $(u,v)$.

Next we assume that $k >1$. Let $\approx$ be the equivalence relation on $G$ whereby $(u,v) \approx (u^\prime,v^\prime)$ if $\rho_{u} \sim \rho_{u^\prime}$, i.e., the primitive roots of the pairs are conjugates. Assume that $\approx$ has $d$ equivalence classes. Clearly $1 \leq d \leq k$. We do a cases analysis on whether $d=1$ or otherwise.

If $\approx$ has only one equivalence class, then the primitive roots of all the pairs in $G$ are conjugates. Consequently, their lengths are identical. Since $(u_1,v_1)^* \cdots (u_k,v_k)^*$ is conjugate and all the pairs in $\roots{G}$ have identical lengths, by \Cref{eqlengthprimitivescorr}, $\roots{G}$ has a common witness.

Now we assume that $d > 1$. Choose $d$ pairs $(u_1, v_1), (u_2, v_2), \ldots, (u_d, v_d)$ from each equivalence class. We construct a pair $(u,v) \in (u_1,v_1)^*(u_2,v_2)^* \cdots (u_d,v_d)^* \subseteq (u_1,v_1)^* \cdots (u_k,v_k)^*$ and show that $(u,v)$ is conjugate only if $\roots{G}$ has a common witness.

Let $m$ be the least common multiple of $|u_1|, \ldots, |u_d|$. Let $\ell_{ij} = |u_i| + |u_j| - \mathit{gcd}(|u_i|,|u_j|) > 0$ for $1 \leq i,j \leq d$ and $i \neq j$. Let $\ell = \max \ \{\ell_{ij} \mid 1 \leq i,j \leq d, i \neq j\}$. Let $N$ be a multiple of $m$ that is $ > 2\ell$.

Let $(u,v) = (u_1,v_1)^{j_1}(u_2,v_2)^{j_2} \cdots (u_d,v_d)^{j_d}$ such that $j_1, \ldots, j_d > 2$ and $|u_i^{j_i}| = N$ for each $1 \leq i \leq d$.
\begin{align*}
u &= \overbrace{u_1 \cdots u_1}^{N}\overbrace{u_2 \cdots u_2}^{N} \cdots \overbrace{u_d \cdots u_d}^{N}\\
v &= v_1 \cdots v_1\,v_2 \cdots v_2\, \cdots v_d \cdots v_d
\end{align*}

Since $(u,v)$ is conjugate, it has a cut, say $(p,q)$. Substituting each pair in $(u,v)$ with their primitive roots, we get
\begin{align*}
u &= \rho_{u_1} \cdots \rho_{u_1}\rho_{u_2} \cdots \rho_{u_2} \cdots \rho_{u_d} \cdots \rho_{u_d} = pq\\
v &= \rho_{v_1} \cdots \rho_{v_1}\rho_{v_2} \cdots \rho_{v_2} \cdots \rho_{v_d} \cdots \rho_{v_d} = qp
\end{align*}
Let $(x_i,y_i)$ be a cut of $(\rho_{u_i},\rho_{v_i})$ for $1 \leq i \leq d$. 

Let $B_1, B_2, \ldots, B_d$ represent the blocks in $u$, and let $B_1^\prime, B_2^\prime, \ldots, B_d^\prime$ represent the blocks in $v$.

The cut in $u$ can be either within the first block $B_1$, or the last block $B_d$, or anywhere between the first and the last blocks in $u$. We do a case analysis on all the possible cuts $(p,q)$ of $(u,v)$ and show that there exists a common witness of $\roots{G}$ in each of the cases. We can assume that both $p$ and $q$ are nonempty. When either $p$ or $q$ is empty, then all pairs $(u_i,v_i)$ in $(u,v)$ are identical, and hence either $x_1 = \cdots = x_d = \varepsilon$ or $y_1 = \cdots =y_d= \varepsilon$. 

\medskip
\paragraph*{Case 1: When the cut in $u$ is in the first block $B_1$}

We make a further case analysis depending upon if the cut is within the first half or the second half of the first block.

Suppose the cut in $u$ is within the first half of the block, i.e., $p$ is of length at most $N/2$. In this case, since the length of each block are equal, the cut in $v$ is within the second half of the last block $B_d^\prime$, i.e., $p$ is a suffix of $v$ of length at most $N/2$.
\begin{align*}
u &= \overbrace{\rho_{u_1} \cdots}^{p}\overbrace{\cdots \rho_{u_1}\rho_{u_2} \cdots \rho_{u_2} \cdots \rho_{u_d} \cdots \rho_{u_d}}^{q}\\
v &= \underbrace{\rho_{v_1} \cdots \cdots \rho_{v_1}\rho_{v_2} \cdots \rho_{v_2} \cdots \rho_{v_d} \cdots}_{q} \underbrace{\cdots \rho_{v_d}}_{p}
\end{align*}
We obtain $q = p^{-1}(B_1B_2 \cdots B_d) = (B_1^\prime B_2^\prime \cdots B_d^\prime)p^{-1} \ .$
\begin{claim}
The following holds for each $1 \leq i \leq d$.
\begin{enumerate}
\item $B_i$ is of the form $pq_i$ and $B_i^\prime$ is of the form $q_ip$, and
\item $p$ is of the form $(x_iy_i)^{m_i}x_i$ and $q_i$ is of the form $(y_ix_i)^{m_i}y_i$ for some $m_i \geq 0$.
\end{enumerate}
\end{claim}

\begin{proof}
Proof is by induction on $i$.
\begin{enumerate}
\item \emph{Base Case:} when $i=1$. We compare the prefixes of $q$ in $u$ and $v$. Since $|p| \leq N/2$, the prefix of $q$ in $u$ must begin within the first block $B_1$. Also, there must be at least one occurrence of the factor $\rho_{u_1} = x_1y_1$ following the cut. Since $q$ in $v$ starts with $\rho_{v_1} = y_1x_1$, the cut should be at the end of $x_1$ by Case~\ref{distinct} of \nameref{cutcorr}. Note that by Case~\ref{same} of \nameref{cutcorr}, the cut can end after $y_1$ but then $\rho_{u_1}=x_1y_1=y_1x_1 = \rho_{v_1}$ and hence either $x_1 = \varepsilon$ or $y_1 = \varepsilon$. Hence, in both case, we get $p=(x_1y_1)^{m_1}x_1$ for some integer $m_1 \geq 0$. Consequently, the prefix of $q$ in the block $B_1$, denoted as $q_1$, is of the form $y_1(x_1y_1)^{n_1}$, for some $n_1 > 0$. After matching $q_1$ in $v$, we observe that a factor equal to $p$ appears in the suffix of the block $B_1^\prime$, as shown below.
\begin{align*}
u &= \overbrace{(x_1y_1)^{m_1}x_1}^{p}\overbrace{y_1(x_1y_1)^{n_1}}^{q_1}\overbrace{\rho_{u_2} \cdots \rho_{u_2} \cdots \rho_{u_d} \cdots \rho_{u_d}}^{{q_1}^{-1}q}\\
v &= \underbrace{y_1(x_1y_1)^{n_1}}_{q_1}\underbrace{(x_1y_1)^{m_1}x_1}_{=p}\rho_{v_2} \cdots \rho_{v_2} \cdots \rho_{v_d} \cdots \rho_{v_d} = qp
\end{align*}

\item \emph{Inductive Case:} Assume the claim is true for first $i$ blocks where $1 \leq i < k$.
\begin{align*}
u &= \overbrace{(x_1y_1)^{m_1}x_1}^{p}\overbrace{y_1(x_1y_1)^{n_1}}^{q_1}\cdots \overbrace{(x_iy_i)^{m_i}x_i}^{p}\overbrace{y_i(x_iy_i)^{n_i}}^{q_i}\overbrace{\rho_{u_{i+1}} \cdots \rho_{u_{i+1}} \cdots \rho_{u_d} \cdots \rho_{u_d}}^{\suff{q} = (q_1p \cdots q_{i-1}pq_i)^{-1}q} = pq\\
v &= \underbrace{y_1(x_1y_1)^{n_1}}_{q_1}\underbrace{(x_1y_1)^{m_1}x_1}_{p}\cdots \underbrace{y_i(x_iy_i)^{n_i}}_{q_i}\underbrace{(x_iy_i)^{m_i}x_i}_{p}\rho_{v_{i+1}} \cdots \rho_{v_{i+1}} \cdots \rho_{v_d} \cdots \rho_{v_d} = qp
\end{align*}
Let $\suff{q}$ denote the remaining suffix of $q$ in $u$ after the $i$-th block $B_i$. We obtain $\suff{q} = (q_1p \cdots q_{i-1}pq_i)^{-1}q$. By comparing the prefixes of $\suff{q}$ in $u$ and $v$, we get that the suffix of the block $B_i^\prime$ in $v$, that is equal to $p$, matches within the first half of the block $B_{i+1}$ in $u$ since $|p| < N/2$. Moreover, the matching should end at $x_{i+1}$ by Case~\ref{distinct} 
of \nameref{cutcorr}. Note that by Case~\ref{same} of \nameref{cutcorr}, the matching can end after $y_{i+1}$ but then $\rho_{v_{i+1}}=y_{i+1}x_{i+1}=x_{i+1}y_{i+1} = \rho_{u_{i+1}}$ and hence either $x_{i+1} = \varepsilon$ or $y_{i+1} = \varepsilon$. Hence, $p = (x_{i+1}y_{i+1})^{m_{i+1}}x_{i+1}$ for some integer $m_{i+1} \geq 0$. Consequently, the remaining suffix of the block $B_{i+1}$, denoted by $q_{i+1}$, is of the form $(y_{i+1}x_{i+1})^{n_{i+1}}y_{i+1}$ for some integer $n_{i+1} > 0$. After matching $q_{i+1}$ in $B_{i+1}^\prime$, we observe that a factor equal to $p$ appears in the suffix of the block $B_{i+1}^\prime$. Hence, $B_{i+1} = pq_{i+1}$ and $B_{i+1}^\prime = q_{i+1}p$.

\end{enumerate}

\end{proof}

From the above claim, it follows that $q = q_1pq_2p \cdots pq_d$ and \[p = (x_1y_1)^{m_1}x_1 = (x_2y_2)^{m_2}x_2 = \cdots = (x_dy_d)^{m_d}x_d\]
for $m_1, \ldots, m_d \geq 0$. 

From \Cref{Klthm:claim2}, the above equation holds for any two pairs between the equivalence classes. By \Cref{cwpattern}, we obtain $p$ is a common inner witness of $\roots{G}$.

Next we assume the cut in $u$ is in the second half of $B_1$. We compare the prefixes of $q$ in $u$ and $v$ and deduce that there exist a common factor of length at least $N/2 > \ell > \ell_{12}$ between block $B_1^\prime$ and block $B_2$. From \Cref{commonfactor}, $\rho_{v_1}$ is conjugate to $\rho_{u_{2}}$, that is in turn is conjugate to $\rho_{v_{2}}$. From transitivity of conjugacy, $\rho_{v_1}$ is conjugate to $\rho_{v_2}$, which contradicts the fact that $(u_1,v_1)$ and $(u_2,v_2)$ belong to different equivalence classes. Hence cut in second half of $B_1$ is not possible.

\medskip
\paragraph*{Case 2: When the cut in $u$ is in the last block $B_d$}

We make a further case analysis depending upon if the cut in $u$ is in the first half or second half of the last block $B_d$. The proof is symmetric to that of the previous case.  In particular, when the cut is in the second half of the last block, \[q = (y_1x_1)^{m_1}y_1 = (y_2x_2)^{m_2}y_2 = \cdots = (y_dx_d)^{m_d}y_d \ .\]
By  \Cref{Klthm:claim2} and \Cref{cwpattern}, $q$ is a common outer witness of $\roots{G}$. 

\medskip
\paragraph*{Case 3: When the cut in $u$ is within the block $B_j$ for $1 < j < d$}

WLOG, assume that the cut in $u$ is within the first half of the block $B_j$. In this case the cut in $v$ will be within the second half of the block $B_{d-j+1}$.
\begin{align*}
u &= \overbrace{\rho_{u_1} \cdots \rho_{u_1} \cdots \rho_{u_j} \cdots}^{p}\overbrace{\cdots \rho_{u_j} \cdots \cdots \cdots \cdots \rho_{u_d} \cdots \rho_{u_d}}^{q}\\
v &= \underbrace{\rho_{v_1} \cdots \rho_{v_1} \cdots \cdots \cdots \rho_{v_{d-j+1}} \cdots}_{q}\underbrace{\cdots \rho_{v_{d-j+1}} \cdots \rho_{u_d} \cdots \rho_{u_d}}_{p}
\end{align*}
By matching $q$ in $u$ and $v$, we get $\rho_{v_1}$ and $\rho_{u_j}$ shares a common factor of length at least $N/2 > \ell > \ell_{1j}$ and hence they are conjugates to each other by \Cref{commonfactor}. Since $\rho_{u_j}$ is conjugate to $\rho_{v_j}$, by transitivity of conjugacy we obtain $\rho_{v_1}$ and $\rho_{v_j}$ are conjugates. This contradicts the fact that $(u_1,v_1)$ and $(u_j,v_j)$ belongs to different equivalence classes. Hence cut in $B_j$ where $1 < j < d$ is not possible.

\begin{claim}\label{Klthm:claim2}
Let $(u_i,v_i)$ and $(u_j,v_j)$ be two pairs in different equivalence class  where $i,j \in [d]$. Let $(x_i,y_i)$ and $(x_j,y_j)$ be a cut of the primitive root of $(u_i,v_i)$ and $(u_j,v_j)$ respecively. The following holds.
\begin{enumerate}
\item If $(x_iy_i)^{m_i}x_i = (x_jy_j)^{m_j}x_j$, then for any pair $(u_{j'},v_{j'})$ that belongs to the same equivalence class as $(u_j,v_j)$, $(x_iy_i)^{m_i}x_i = (x_{j'}y_{j'})^{m_j}x_{j'}$ with $(x_{j'},y_{j'})$ being a cut of the primitive root of $(u_{j'},v_{j'})$.
\item  If $(y_ix_i)^{m_i}y_i = (y_jx_j)^{m_j}y_j$, then for any pair $(u_{j'},v_{j'})$ that belongs to the same equivalence class as $(u_j,v_j)$, $(y_ix_i)^{m_i}y_i = (y_{j'}x_{j'})^{m_j}y_{j'}$ with $(x_{j'},y_{j'})$ being a cut of the primitive root of $(u_{j'},v_{j'})$.
\end{enumerate}
\end{claim}

\begin{proof}
We prove Item 1. The proof of Item 2 is symmetric. 

For contradiction, assume that $(x_iy_i)^{m_i}x_i = (x_jy_j)^{m_j}x_j \neq  (x_{j'}y_{j'})^{m_j}x_{j'}$. 
Construct a pair $(\bar{u},\bar{v})$ similar to $(u,v)$ by keeping only the pairs $(u_i,v_i), (u_j,v_j)$ and $(u_{j'},v_{j'})$ such that the total length of each pairs are equal to $N$. WLOG, assume, $i < j < j'$
$$\bar{u} = \overbrace{u_{i} \cdots u_{i}}^{N} \overbrace{u_{j} \cdots u_{j}}^{N} \overbrace{u_{j'} \cdots u_{j'}}^{N}$$
$$\bar{v} = v_{i} \cdots v_{i} v_{j} \cdots v_{j} v_{j'} \cdots v_{j'}$$

Since $(\bar{u},\bar{v}) \subseteq (u_1,v_1)^* \cdots (u_k,v_k)^*$, it is conjugate with a cut. After doing a case analysis on the cut, we get three possible cases.
\begin{enumerate}
\item There exist integers $n_{i}, n_{j}, n_{j'} \geq 0$ such that $(x_{i}y_{i})^{n_{i}}x_{i} = (x_{j}y_{j})^{n_{j}}x_{j} = (x_{j'}y_{j'})^{n_{j'}}x_{j'}$ .

Both $n_{i} = m_i$ and $n_{j} = m_j$. Otherwise if either $n_{i} \neq m_i$ or $n_{j} \neq m_j$, then the set of pairs $\{(u_{i},v_{i}), (u_{j},v_{j})\}$ has more than one common witness. By \Cref{infmanycwcorr}, they have infinitely many common witnessess and their primitive roots are the same. This condradicts that $(u_i,v_i)$ and $(u_j,v_j)$ are in different equivalence class. 
Since $(u_j,v_j)$ and $(u_{j'},v_{j'})$ belongs to the same equivalence class, $|x_{j}y_{j}| = |x_{j'}y_{j'}|$. Hence , we get $n_{j'} = n_j = m_j$. Thus, $(x_iy_i)^{m_i}x_i = (x_jy_j)^{m_j}x_j = (x_{j'}y_{j'})^{m_j}x_{j'}$

\item There exist integers $n_{i}, n_{j}, n_{j'} \geq 0$ such that $(y_{i}x_{i})^{n_{i}}y_{i} = (y_{j}x_{j})^{n_{j}}y_{j} = (y_{j'}x_{j'})^{n_{j'}}y_{j'}$

Since $(x_{i}y_{i})^{m_i}x_{i} = (x_{j}y_{j})^{m_j}x_{j}$, we get there is more than one common witness for the set of pairs $\{(u_{i},v_{i}), (u_{j},v_{j})\}$ and from \Cref{infmanycwcorr}, we get they have infinitely many common witnesses. Thus the primitive roots are the same, and condradicts that $(u_i,v_i)$ and $(u_j,v_j)$ are in different equivalence class. 

\item The block of $u_{i}$'s, $u_{j}$'s and $u_{j'}$'s share a common factor of length atleast at least $N/2 \geq \ell$ and via \Cref{commonfactor}, we get their primitive roots are conjugates to each other. This contradicts that $(u_i,v_i)$ and $(u_j,v_j)$ are in different equivalence class. \qedhere
\end{enumerate} 
\end{proof}

Hence, for a finite set of pairs $G$, $(u_1,v_1)^* \cdots (u_k,v_k)^*$ is conjugate only if $\roots{G}$ has a common witness. By \Cref{samewitnessext}, we also conclude that $G$ has a common witness.
\end{proof}

Hence, we proved \Cref{1to4cw} for the finite case.

\subsection{For an Infinite Set of Pairs}
We now extend \Cref{1to4cw} from a finite set to an infinite set of pairs.

\begin{lemma}[Compactness Theorem]\label{compactness}
Let $G$ be an infinite set of pairs. If every finite subset of $G$ has a common witness, then $G$ has a common witness.
\end{lemma}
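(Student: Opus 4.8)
The plan is to deduce \Cref{compactness} from the rigidity provided by \Cref{infmanycwcorr}: a set of pairs has either \emph{at most one} common witness, or \emph{infinitely many}, the latter occurring exactly when all of its pairs share a single primitive root. The argument splits on whether some finite subset of $G$ already pins the witness down.

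\textbf{Case A: every finite subset of $G$ has at least two common witnesses.} Fix any pair $P_0 \in G$. For every $Q \in G$, the finite set $\{P_0, Q\}$ has at least two common witnesses, so by \Cref{infmanycwcorr} ($(1)\Rightarrow(3)$) the pairs $P_0$ and $Q$ have the same primitive root. As $Q$ was arbitrary, all pairs of $G$ have the same primitive root, and \Cref{infmanycwcorr} ($(3)\Rightarrow(2)$) applied to $G$ itself yields that $G$ has (infinitely many, hence in particular at least one) common witness.

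\textbf{Case B: some finite $G_0 \subseteq G$ has at most one common witness.} By hypothesis $G_0$ has at least one, so it has a \emph{unique} common witness $z_0$; I claim $z_0$ is a common witness of all of $G$. First, for each $P \in G$ the finite set $G_0 \cup \{P\}$ has a common witness $z_P$; since $z_P$ is a common (inner, resp.\ outer) witness of $G_0 \cup \{P\}$ it is in particular a common witness of $G_0$, whence $z_P = z_0$. Thus $z_0$ is a witness --- inner or outer --- of every pair of $G$. Write $G = G_{\mathrm{in}} \cup G_{\mathrm{out}}$, where $G_{\mathrm{in}}$ (resp.\ $G_{\mathrm{out}}$) is the set of pairs $P$ for which $z_0$ is an inner (resp.\ outer) witness; the previous sentence says this is indeed a cover. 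If both inclusions were strict we could choose $P \in G_{\mathrm{out}} \setminus G_{\mathrm{in}}$ and $Q \in G_{\mathrm{in}} \setminus G_{\mathrm{out}}$; then the finite set $G_0 \cup \{P, Q\}$ has a common witness, which by the argument just given equals $z_0$, and which is therefore either a common inner witness of $G_0 \cup \{P, Q\}$ --- impossible, since then $z_0$ would be an inner witness of $P \notin G_{\mathrm{in}}$ --- or a common outer witness of it --- impossible, since then $z_0$ would be an outer witness of $Q \notin G_{\mathrm{out}}$. Hence $G_{\mathrm{in}} = G$ or $G_{\mathrm{out}} = G$, so $z_0$ is a common inner (resp.\ outer) witness of $G$, and in particular $G$ has a common witness.

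\textbf{Main obstacle.} The delicate point is the last one in Case B: a priori the witness $z_0$ forced by $G_0$ could serve some pairs of $G$ ``from the inside'' ($u z_0 = z_0 v$) and others ``from the outside'' ($z_0 u = v z_0$), in which case it would be a common witness of neither kind. Ruling this out is exactly what the three-element anchor $G_0 \cup \{P, Q\}$ together with the \emph{uniqueness} of $z_0$ accomplishes --- which is why it matters to have isolated a finite subset with a single common witness rather than just any finite subset. Case A, by contrast, is routine, being essentially a restatement of \Cref{infmanycwcorr}.
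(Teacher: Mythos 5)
Your proof is correct, and it follows the same high-level strategy as the paper: split on whether some finite subset $G_0$ of $G$ pins down a unique common witness, treat the ``all finite subsets have multiple witnesses'' case via \Cref{infmanycwcorr} (shared primitive root), and propagate the unique witness $z_0$ of $G_0$ to all of $G$ using $G_0 \cup \{P\}$.

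Where you genuinely improve on the paper's write-up is the final step of Case B. The paper concludes ``This implies that $z$ is a witness for any pair in $G$. Hence $z$ is a common witness of $G$,'' which silently passes from \emph{pointwise} witnessing (each pair has $z$ as an inner \emph{or} outer witness) to the stronger \emph{uniform} claim required by \Cref{def:commonwitness} (that $z$ is an inner witness of \emph{every} pair, or an outer witness of \emph{every} pair). In principle $z_0$ could serve some pairs from the inside and others from the outside and thus be a common witness of neither kind; your three-element set $G_0 \cup \{P,Q\}$, combined with the uniqueness of $z_0$ on $G_0$, is exactly the right device to rule this out. (One can also check directly that the mixed situation cannot arise --- if $z_0$ is both a common inner and a common outer witness of $G_0$ one forces $z_0 = \epsilon$, for which inner and outer coincide; if it is of only one kind on $G_0$, that kind propagates to every superset --- but your argument is cleaner and avoids this case analysis.) So: same route as the paper, with a more careful closing step that fills in an elided justification.
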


\begin{proof}
From \Cref{infmanycwcorr}, if a set has a witness, it has exactly one common witness or infinitely many common witnesses. Given that every finite subset of $G$ has a common witness, there are two possible cases: a finite subset of $G$ with a unique witness exists, or every finite subset of $G$ has infinitely many witnesses.
\begin{enumerate}
\item Assume that there exists a finite subset $G_f$ of $G$ with exactly one common witness, say $z$. We claim that $z$ is a common witness of $G$ as well. By assumption, the finite set $G_f \cup \{(u,v)\}$ has a common witness, for any pair $(u,v) \in G$. Moreover, the witness for this set must be $z$; otherwise, it contradicts the uniqueness of the witness of $G_f$. This implies that $z$ is a witness for any pair in $G$. Hence $z$ is a common witness of $G$.

\item Next we assume that every finite subset of $G$ has infinitely many common witnesses. Take any pair $(u_i,v_i)$ and $(u_j,v_j)$ from $G$. The set $\{(u_i,v_i),(u_j,v_j)\}$ is a finite set with infinitely many witnesses by assumption. Therefore, from \Cref{infmanycwcorr}, both $(u_i,v_i)$ and $(u_j,v_j)$ have the same primitive root. Since primitive roots are unique, the primitive root of every pair in $G$ is the same. From \Cref{samewitness}, the witnesses of the primitive root is same as that of the witnesses of each pair in $G$. Hence, $G$ has a common witness.
\end{enumerate}
\end{proof}
The proof of \Cref{1to4cw} is a straightforward corollary of the \nameref{compactness}. If $\spans{G}$ is conjugate, then the closure of every finite subset of $G$ is also conjugate. Using \Cref{commonwitnessfinite}, every finite subset of $G$ has a common witness. Using \nameref{compactness}, $G$ has a common witness. From \Cref{samewitnessext}, we conclude that $\roots{G}$ has a common witness.

This concludes the proof of Common Witness Theorem (\Cref{conjugacy}).


\section{Existence of Common Witness for Monoid Closure}\label{sec:conjugacy_monomial}
In this section, we prove the equivalence between conjugacy and the presence of a common witness in sumfree sets. We begin by proving \Cref{singlemonomial} for sumfree sets that contain only one Kleene star. Subsequently, we establish \Cref{generalmonomialwitness} that extends the result to general sumfree sets.

\subsection{Common Witness of a Singleton Redux}

We prove \Cref{singlemonomial} by showing $(1) \implies (2) \implies (3) \implies (1)$.  In this subsection, we assume that the sumfree set has a nonempty redux as stated in \Cref{singlemonomial}. It is trivial that $(3) \implies (1)$.

Now we proceed to prove $(1) \implies (2)$, namely, if a sumfree set $M =(\as_0,\bs_0)G^*(\as_1,\bs_1)$ is conjugate, then there exists a common witness of $G \cup \{(\as_1\as_0,\bs_1\bs_0)\}$. We first prove this direction when $G$ is just a singleton set and later generalise it to any arbitrary set of pairs $G$.

\begin{proposition}\label{monomialcut}
If a set $M = (\as_0,\bs_0)(u,v)^*(\as_1,\bs_1)$ is conjugate then the set $\{(u,v), (\as_1\as_0,\bs_1\bs_0)\}$ has a common witness.
\end{proposition}

\begin{proof}
Since $M$ is conjugate, $(u,v)$ is also conjugate by \Cref{boundedkleenestar}. 
Consider the pair $(u^\prime,v^\prime) = (\as_0,\bs_0)(u,v)^{4n}(\as_1,\bs_1)$ where $n|u| \geq 2 \cdot \textit{(length of the redux of $M$)}$. Let $(x,y)$ denote a cut of the primitive root of the conjugate pair $(u,v)$. 

We now examine the possible cuts, say $(p,q)$, of $(u^\prime,v^\prime)$ and show that in each case, a common witness of $\{(u,v), (\as_1\as_0,\bs_1\bs_0)\}$ exists.

\medskip
\paragraph*{Case 1: When the cut in $u^\prime$ is within $\as_0$} Then $p = \pref{\as_0}$ is a prefix of $\as_0$ and $\as_0=\pref{\as_0}\suff{\as_0}$ for some word $\suff{\as_0}$. Substituting $(u,v)$ with powers of $(xy,yx)$,
\begin{align*}
u^\prime &= \overbrace{\pref{\as_0}}^{p}\overbrace{\suff{\as_0}xy\cdots xy \as_1}^{q}\\
v^\prime &= \bs_0 yx \cdots yx \bs_1
\end{align*}
Comparing prefixes of $q$ in $u^\prime$ and $v^\prime$, we obtain three possible cases for $\bs_0$.
\begin{enumerate}[label=(\alph*)]
\item \label{pcase1a} $\bs_0$ is a proper prefix of $\suff{\as_0}$:
After matching $\bs_0$ with the prefix of $q$ in $u^\prime$, we find that the remaining suffix of $\suff{\as_0}$ matches with the prefix of the block $yx \cdots yx$ in $v^\prime$. Since the total length of the block $yx \cdots yx$ is greater than $4|\as_0|$, it follows that $\suff{\as_0}$ must end within the first half of the block. Furthermore, using Cases \ref{cuta}, \ref{cutb}, \ref{cute} and \ref{same} of \nameref{cutcorr}, it should end after a $y$ since there exists an $xy$ after $\suff{\as_0}$ in $u^\prime$. Thus, we can express $\suff{\as_0}$ as
\begin{equation}
\suff{\as_0} = \bs_0 (yx)^m y \label{sing:eq1}
\end{equation}
for some integer $m \geq 0$. Continuing to match $q$ in $u^\prime$ and $v^\prime$, we obtain
\begin{align}
u^\prime &= \pref{\as_0}\suff{\as_0}\overbrace{xy \cdots x}^{=}(yx)^{m}y{\as_1} = pq\\
v^\prime &= \underbrace{\bs_0 (yx)^{m}y}_{\suff{\as_0}}\underbrace{xy \cdots x}_{=}\bs_1 = qp = \suff{\as_0}\overbrace{xy \cdots x}^{=}(yx)^{m}y{\as_1} \pref{\as_0} \label{singleton:eq1}
\end{align}
By equating the sets for $v^\prime$ on both sides of \Cref{singleton:eq1}, we get
\begin{equation}
\bs_1 = (yx)^{m}y\as_1\pref{\as_0} \ . \label{sing:eq2}
\end{equation}
Concatenating \Cref{sing:eq1} and \Cref{sing:eq2}, we obtain
\[(yx)^{m}y\as_1\as_0 =\bs_1\bs_0(yx)^{m}y \ .\]
From \Cref{uz=zv}, we get $(yx)^m y$ is a outer witness for $(\as_1\as_0, \bs_1\bs_0)$, and it is also a outer witness of $(u,v)$ using \Cref{samewitness}. Therefore, $(yx)^m y$ is a common outer witness of $\{(u,v), (\as_1\as_0, \bs_1\bs_0)\}$.
\item \label{pcase1b} The case when $\bs_0 = \suff{\as_0}$:
\begin{align}
u^\prime &= \pref{\as_0}\suff{\as_0}\overbrace{xy \cdots xy}^{=}{\as_1} = pq\\
v^\prime &= \underbrace{\bs_0}_{\suff{\as_0}}\underbrace{yx \cdots yx}_{=}\bs_1 = qp = \suff{\as_0}\overbrace{xy \cdots xy}^{=}{\as_1} \pref{\as_0} \label{singleton:eq2}
\end{align}
Equating $v^\prime$ on both sides of the \Cref{singleton:eq2}, we get that $xy=yx$ and
\begin{equation}
\bs_1 = {\as_1} \pref{\as_0} \ . \label{sing:eq3}
\end{equation}
Appending the equation $\bs_0 = \suff{\as_0}$ to \Cref{sing:eq3}, we get $\as_1\as_0 = \bs_1\bs_0$ . Hence $(\as_1\as_0,\bs_1\bs_0)$ is conjuagte with $\varepsilon$ as a witness. Since $xy=yx$, we can also deduce that $(u,v)$ is an identical pair with $\varepsilon$ as a witness. Therefore, $\varepsilon$ is a common witness of $\{(u,v), (\as_1\as_0,\bs_1\bs_0)\}$.

\item \label{pcase1c} $\suff{\as_0}$ is a proper prefix of $\bs_0$: Since the total length of block $xy \cdots xy$ is at least 4$|\bs_0|$, it follows that $\bs_0$ must end within the first half of the block of $xy$. Moreover, it should end after an $x$ by Cases \ref{cutc}, \ref{cutd}, \ref{cute} and \ref{same} of \nameref{cutcorr} since there is at least one $yx$ after $\bs_0$ in $v^\prime$. Therefore,
\begin{equation}
\bs_0 = \suff{\as_0}(xy)^{m}x \label{sing:4}
\end{equation}
for some integer $m \geq 0$. Continuing with the analysis, we have:
\begin{align}
u^\prime &= \pref{\as_0}\overbrace{\suff{\as_0}(xy)^{m}x}^{\bs_0}\overbrace{yx \cdots y}^{=}{\as_1} = pq\\
v^\prime &= \bs_0\underbrace{yx \cdots y}_{=}(xy)^{m}x\bs_1 = qp = \overbrace{\suff{\as_0}(xy)^{m}x}^{\bs_0}\overbrace{yx \cdots y}^{=}{\as_1} \pref{\as_0} \label{singleton:eq3}
\end{align}
By equating the sets for $v^\prime$ on both sides of \Cref{singleton:eq3}, we get
\begin{equation}
(xy)^{m}x\bs_1 = {\as_1} \pref{\as_0} \ . \label{sing:5}
\end{equation}
Concatenating \Cref{sing:4} and \Cref{sing:5}, we obtain
\[\as_1\as_0(xy)^{m}x = (xy)^{m}x \bs_1\bs_0 \ .\]
From \Cref{uz=zv}, we get $(xy)^m x$ is an inner witness of $(\as_1\as_0,\bs_1\bs_0)$, and it is also an inner witness for $(u,v)$ using \Cref{samewitness}. Thus, $(xy)^m x$ is a common inner witness of $\{(u,v), (\as_1\as_0,\bs_1\bs_0)\}$.
\end{enumerate}

\medskip
\paragraph*{Case 2: When the cut in $u^\prime$ is within the block of $(u,v) \cdots (u,v)$}
Then $p$ ends within the block of $(u,v)$'s. There are two cases based on whether the cut in $u^\prime$ is within the first half or the second half of the block of $(u,v) \cdots (u,v)$.
\begin{enumerate}[label=(\alph*)]
\item \label{pcase2a} When $p$ ends within the first half of the block of $(u,v)$'s:
\begin{align*}
u^\prime &= \as_0 \overbrace{xy \cdots xy}^{\text{cut region}}\overbrace{xy\cdots xy}^{\geq 2n \text{ times}}\as_1 = pq\\
v^\prime &= \bs_0yx \cdots yxyx\cdots yx\bs_1 =qp
\end{align*}
We compare the prefixes of $q$ in $u^\prime$ and $v^\prime$. Since the length of the remaining half of the block of $xy$'s is still greater than $2n|u| > 2 |\bs_0|$, it follows that $\bs_0$ in $v^\prime$ matches within the block of $xy$'s in $u^\prime$ and there is at least one $xy$ occurring after it. Moreover, it ends after an $x$ by Cases \ref{cutc}, \ref{cutd}, \ref{cute} and \ref{same} of \nameref{cutcorr}, as there is at least one $yx$ in $v^\prime$ after $\bs_0$. Therefore,
\begin{equation}
p\bs_0 = \as_0 (xy)^{m}x \label{sing:eq6}
\end{equation}
for some integer $m \geq 0$.
\begin{align}
u^\prime &= \overbrace{\as_0(xy)^{m}x}^{p\bs_0}\overbrace{yx\cdots y}^{=}\as_1 = pq\\
v^\prime &= \bs_0\underbrace{yx \cdots y}_{=}(xy)^{m}x\bs_1 = qp = \bs_0\overbrace{yx\cdots y}^{=}\as_1p \label{singleton:eq4}
\end{align}
By equating the sets for $v^\prime$ on both sides of \Cref{singleton:eq4}, we get
\begin{align*}
(xy)^{m}x\bs_1 = \as_1p ~ & \implies ~ (xy)^{m}x\bs_1\bs_0 = \as_1p\bs_0 &&\text{(Appending $\bs_0$)}\\
& \implies ~ (xy)^{m}x\bs_1\bs_0 = \as_1\as_0(xy)^{m}x &&\text{(Substituting \Cref{sing:eq6})}
\end{align*}

Therefore we obtain,
\[\as_1\as_0(xy)^{m}x = (xy)^{m}x\bs_1\bs_0 \ .\]
From \Cref{uz=zv}, $(xy)^m x$ is an inner witness of $(\as_1\as_0,\bs_1\bs_0)$, and it is also an inner witness for $(u,v)$ using \Cref{samewitness}. Therefore, $(xy)^m x$ is a common inner witness of $\{(u,v), (\as_1\as_0,\bs_1\bs_0)\}$.
\item \label{pcase2b} When $p$ ends within the second half of the block of $(u,v)$'s:
\begin{align*}
u^\prime &= \as_0 \overbrace{xy \cdots xy}^{\geq 2n \text{ times}}\overbrace{xy\cdots xy}^{\text{cut region}}\as_1 = pq\\
v^\prime &= \bs_0yx \cdots yxyx\cdots yx\bs_1 = qp
\end{align*}
We compare the suffixes of $p$ in $u^\prime$ and $v^\prime$. Since the suffix of $p$ within the block $xy$ is still greater than $2n|u| > 2 |\bs_1|$, it follows that the suffix $\bs_1$ in $v^\prime$ matches within the block of $xy$'s in $u^\prime$ and there is at least one $xy$ occurring before it. Moreover, it starts with a $y$ by Cases \ref{cutc}, \ref{cutd}, \ref{cute} and \ref{same} of \nameref{cutcorr} since there is at least one $yx$ before $\bs_1$. Hence,
\begin{equation}
\bs_1q = (yx)^{m}y\as_1 \label{sing:eq8}
\end{equation}
for some integer $m \geq 0$.
\begin{align}
u^\prime &= \as_0\overbrace{xy\cdots x}^{=}\overbrace{(yx)^{m}y\as_1}^{\bs_1q} = pq\\
v^\prime &= \bs_0(yx)^{m}y\underbrace{xy \cdots x}_{=}\bs_1 = qp = q\as_0\overbrace{xy\cdots x}^{=}\bs_1 \label{singleton:eq5}
\end{align}
By equating $v^\prime$ on both sides of the \Cref{singleton:eq5}, we get
\begin{align*}
\bs_0(yx)^{m}y = q\as_0 ~ & \implies ~ \bs_1\bs_0(yx)^{m}y = \bs_1q\as_0 &&\text{(Concatenating $\bs_1$ on the left side)}\\
& \implies ~ \bs_1\bs_0(yx)^{m}y = (yx)^{m}y\as_1\as_0 &&\text{(Substituting \Cref{sing:eq8})}
\end{align*}
Therefore, we obtain
\[(yx)^{m}y\as_1\as_0 =\bs_1\bs_0(yx)^{m}y \ .\]
From \Cref{uz=zv}, we get $(yx)^m y$ is an outer witness of $(\as_1\as_0,\bs_1\bs_0)$, and it is also an outer witness for $(u,v)$ using \Cref{samewitness}. Therefore, $(yx)^{m}y$ is a common outer witness of $\{(u,v), (\as_1\as_0,\bs_1\bs_0)\}$.
\end{enumerate}

\medskip
\paragraph*{Case 3: When the cut in $u^\prime$ is within $\as_1$} Then $q = \suff{\as_1}$ is a suffix of $\as_1$ and $\as_1=\pref{\as_1}\suff{\as_1}$ for some word $\pref{\as_1}$.
\begin{align*}
u^\prime &= \overbrace{\as_0xy\cdots xy \pref{\as_1}}^{p}\overbrace{\suff{\as_1}}^{q}\\
v^\prime &= \bs_0 yx \cdots yx \bs_1
\end{align*}
This case is symmetric to \emph{Case 1}, where the cut in $u^\prime$ is within $\as_0$.

Thus, the pair $(u',v') \in M$ is conjugate only if there exist a common witness for the set $\{(u,v),(\as_1\as_0,\bs_1\bs_0)\}$.
\end{proof}

Now we extend the above proposition to an arbitrary set $G$.

\begin{lemma}\label{gmonomialcut}
If a set $M= (\as_0,\bs_0)G^*(\as_1,\bs_1)$ is conjugate then one of the following is true:
\begin{enumerate}
\item \label{gmonomialcut:1} $|\calW(G)|=\infty$ and $R(G) \cup \{(\as_1\as_0, \bs_1\bs_0)\})$ has a common witness.
\item \label{gmonomialcut:2} $|\calW(G)|=1$ and there exists a pair $(u,v) \in G$ such that $W(G)=W(G \cup \{(\as_1\as_0,\bs_1\bs_0)\})=W(\{(u,v),(\as_1\as_0,\bs_1\bs_0)\}$.
\end{enumerate}
\end{lemma}

\begin{proof}
Since $M$ is conjugate, $G^*$ is conjugate by \Cref{boundedkleenestar}. Further, $G^*$ is conjugate iff $G$ has a common witness by \Cref{conjugacy}. The set $G$ has two possibilities: it either has a unique common witness or infinitely many common witnesses using \Cref{infmanycwcorr}.

Assume that $G$ has infinitely many common witnesses. From \Cref{infmanycwcorr}, each pair in $G$ have the same primitive root, say $(\rho,\rho^\prime)$. The common witnesses of $G$ are the same as that of the witnesses of $(\rho,\rho^\prime)$ using \Cref{samewitnessext}. Since $M$ is conjugate, $(\as_0,\bs_0){(\rho^n,{\rho^\prime}^n)}^*(\as_1,\bs_1)$ is conjugate for some $n \geq 1$. From \Cref{monomialcut}, there exists a common witness of $\{(\rho^n,{\rho^\prime}^n), (\as_1\as_0,\bs_1\bs_0)\}$. Furthermore, the witness of $(\rho,\rho^\prime)$ is the same as the witness of $(\rho^n,{\rho^\prime}^n)$ by \Cref{samewitness}. Therefore, we can conclude that there exists a common witness of $\{(\rho,\rho^\prime),(\as_1\as_0,\bs_1\bs_0)\}$, and thus of $G \cup \{(\as_1\as_0,\bs_1\bs_0)\}$.

Next we assume that $G$ has a unique common witness. WLOG, assume that $G$ has a unique common inner witness $z$. It suffices to show that $z$ is an inner witness of $(\as_1\as_0,\bs_1\bs_0)$. Every pair in $G^*$ has a common witness with $(\as_1\as_0,\bs_1\bs_0)$ using \Cref{monomialcut}. If there exists a pair $(u,v)$ in $G^*$ such that $\{(u,v), (\as_1\as_0,\bs_1\bs_0)\}$ has infinitely many common witnesses, then both $(u,v)$ and $(\as_1\as_0,\bs_1\bs_0)$ have the same primitive root by \Cref{infmanycwcorr}. Since $z$ is an inner witness of $(u,v)$, $z$ is an inner witness of its primitive root, and thus also an inner witness of $(\as_1\as_0,\bs_1\bs_0)$ using \Cref{samewitness}. Suppose instead that every pair in $G^*$ has a unique common witness with $(\as_1\as_0,\bs_1\bs_0)$.  
Since $G$ has a unique common inner witness $z$, there exists two distinct pairs $(u,v)$ and $(u',v')$ of length greater than $z$ such that $z$ is the unique common inner witness for the set $\{(u,v),(u',v')\}$. These pairs can be expressed as $(u,v) = (zr,rz)$ and $(u',v') = (zr',r'z)$, where $r \neq r'$. Consider the following three pairs formed using $(u,v)$ and $(u',v')$: 
\[(u_1,v_1) = (u'uu,v'vv) = (zr'zrzr,r'zrzrz),\]
\[(u_2,v_2) = (uu'u,vv'v) = (zrzr'zr,rzr'zrz),\] 
\[(u_3,v_3) = (uuu',vvv') = (zrzrzr',rzrzr'z).\] 
For $1 \leq i,j \leq 3$ with $i \neq j$, the following properties hold:
\begin{enumerate}
\item $|(u_i,v_i)| = |(u_j,v_j)|$.
\item $|(\rho_{u_i},\rho_{v_i})| = |(\rho_{u_j},\rho_{v_j})|$. Since $u_i$ and $u_j$ (resp. $v_i$ and $v_j$) are conjugates, their primitive roots are conjugates, and hence are of equal length, i.e., $|\rho_{u_i}| = |\rho_{u_j}|$ (resp.~$|\rho_{v_i}| = |\rho_{v_j}|$).
\item $(u_i,v_i) \neq (u_j,v_j)$. Otherwise, if $u_i= u_j$ (or $v_i=v_j$), it would follow that $r'zr=rzr'$ (or, $r'zrzr=rzrzr'$), implying that $r'$ is a outer witness of $(zr,rz) = (u,v)$. Since $r'$ is also an outer witness of $(u',v') = (zr',r'z)$, it contradicts the assumption that $z$ is a unique common witness of the set $\{(u,v),(u',v')\}$. Hence, $u_i\neq u_j$ and $v_i \neq v_j$.
\item $(\rho_{u_i},\rho_{v_i}) \neq (\rho_{u_j},\rho_{v_j})$. Since $u_i \neq u_j$ (resp. $v_i \neq v_j$) and $|(\rho_{u_i}| = |\rho_{u_j}|$ (resp. $|\rho_{v_i}| = |\rho_{v_j})|)$, we get that $\rho_{u_i} \neq \rho_{u_j}$ (resp. $\rho_{v_i} \neq \rho_{v_j})$.
\end{enumerate}

Let $(x_i,y_i)$ be the cut of $(\rho_{u_i},\rho_{v_i})$ for $1 \leq i \leq 3$. The set $\{(u_1,v_1),(u_2,v_2), (u_3,v_3)\}$ has a unique common witness, otherwise their primitive roots will be same by \Cref{infmanycwcorr} which is a contradiction. Since $z$ is an inner witness of each $(u_i,v_i)$ , $z$ is the unique common inner witnesss of $\{(u_1,v_1),(u_2,v_2), (u_3,v_3)\}$ , and thus of $\{(\rho_{u_1},\rho_{v_1}), (\rho_{u_2},\rho_{v_2}), (\rho_{u_2},\rho_{v_2})\}$ by \Cref{samewitnessext}. Since $(u_1,v_1)^*(u_2,v_2)^*(u_3,v_3)^* \subset G^*$ is conjugate and $|(\rho_{u_1},\rho_{v_1})| = |(\rho_{u_2},\rho_{v_2})| = |(\rho_{u_3},\rho_{v_3})| $, $z=x_1 =x_2 = x_3$ using \Cref{eqlengthprimitivescorr}.

By assumption, each pair $(u_i,v_i)$ has a unique common witness with $(\as_1\as_0,\bs_0\bs_1)$. Thus, at least two of these three pairs, say $(u_i,v_i)$ and $(u_j,v_j)$, either both have unique common inner witnesses with $(\as_1\as_0,\bs_1\bs_0)$, or both have unique common outer witnesses with $(\as_1\as_0,\bs_1\bs_0)$. Assume that the set $\{(u_i,v_i),(\as_1\as_0,\bs_1\bs_0)\}$ and $\{(u_j,v_j),(\as_1\as_0,\bs_1\bs_0)\}$ both has a unique common inner witness, say $z_i$ and $z_j$ respectively. Let $(\alpha,\beta)$ be the primitive root of $(\as_1\as_0,\bs_1\bs_0)$. There exists integers $m_i,m_j,n_i,n_j \in \N$ such that
\begin{equation}
z_i = (x_iy_i)^{m_i}x_i = (\alpha\beta)^{n_i}\alpha
\end{equation}
\begin{equation}
z_j = (x_jy_j)^{m_j}x_j = (\alpha\beta)^{n_j}\alpha
\end{equation}
Suppose both $m_i, m_j \geq 1$, then $x_iy_i$ and $x_jy_j$ are both equal length prefixes of $(\alpha\beta)^*\alpha$, and hence $x_iy_i = x_jy_j$, which contradicts that $\rho_{u_i} \neq \rho_{u_j}$. Hence, we get that either $m_i = 0$ or $m_j = 0$. WLOG, assume that $m_i=0$, i.e., $x_i = (\alpha\beta)^{n_i}\alpha$ and since $z = x_i = x_j$, we get $z \in (\alpha\beta)^*\alpha$, and hence is an inner witness of $(\as_1\as_0,\bs_1\bs_0)$. Infact, $z$ is the unique common inner witness of $(u_i,v_i)$ and $(\as_1\as_0,\bs_1\bs_0)$. Since $z$ (the unique common inner witness of $G$) is also an inner witness of $(\as_1\as_0,\bs_1\bs_0)$, we get that $z$ is a common inner witness of $G \cup \{(\as_1\as_0,\bs_1\bs_0)\}$.

The case where the set $\{(u_i,v_i),(\as_1\as_0,\bs_1\bs_0)\}$ and $\{(u_j,v_j),(\as_1\as_0,\bs_1\bs_0)\}$ both has a unique common outer witness is invalid and leads to a contradaction, since we get $y_i$ and $y_j$ are equal length prefixes of $(\beta\alpha)^*\beta$, and hence $y_i = y_j$. This contradicts $\rho_{u_i} \neq \rho_{u_j}$ since $x_i = x_j$.
\end{proof}

The remaining direction to prove is $(2) \implies (3)$ in \Cref{singlemonomial}. This direction states that if there exists a common witness for both $G$ and $(\as_1\as_0,\bs_1\bs_0)$ for a given set $M = (\as_0,\bs_0)G^*(\as_1,\bs_1)$, then $M$ has a common witness. It is a straightforward corollary of the below lemma.

\begin{lemma}\label{g1monomialwitness}
Let $M = (\as_0,\bs_0)G^*(\as_1,\bs_1)$ be a sumfree set with nonempty redux. If there exists a common witness $z'$ for $G \cup \{(\as_1\as_0,\bs_1\bs_0)\}$, then one of the following cases is true:
\begin{enumerate}[label=(\alph*)]
\item\label{Thm7:1} If $z'$ is a unique common inner witness, then $M$ has a unique common witness $z = [\as_0z',{\bs_0}]_R = [\as_1,z'\bs_1]_L$. Moreover, if $|\as_0z'| \geq |\bs_0|$ or equivalently $|\as_1| \leq |z'\bs_1|$, then $z$ is a common inner witness, otherwise it is a common outer witness.

\item\label{Thm7:2} If $z'$ is a unique common outer witness, then $M$ has a unique common witness $z = [\as_0,\bs_0z']_R = [z'\as_1,\bs_1]_L$. Moreover, if $ |z'\as_1| \geq |\bs_1|$ or equivalently $|\as_0| \leq |\bs_0z'|$, then $z$ is a common outer witness, otherwise it is a common inner witness.

\item\label{Thm7:3} If $G \cup \{(\as_1\as_0,\bs_1\bs_0)\}$ have infinitely many common witnesses, then $M$ is a subset of powers of the primitive root of its redux. Thus, $M$ has infinitely many witnesses.
\end{enumerate}
\end{lemma}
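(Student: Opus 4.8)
The plan is to move back and forth between common witnesses of $M=(\as_0,\bs_0)G^*(\as_1,\bs_1)$ and common witnesses of $G\cup\{(\as_1\as_0,\bs_1\bs_0)\}$ by absorbing the affixes $\as_0,\bs_0$ on the left and $\as_1,\bs_1$ on the right, and to show that this correspondence is a bijection; all three cases then fall out. Assume first that $z'$ is a common \emph{inner} witness (the outer case is obtained by the component swap $(u,v)\mapsto(v,u)$, which turns $M$ into $(\bs_0,\as_0)\widetilde G^*(\bs_1,\as_1)$, swaps its redux, and exchanges common inner and common outer witnesses, so case (b) is literally case (a) read through this swap). By the easy direction of \Cref{conjugacy}, a common inner witness of $G$ is a common inner witness of $\spans G$, so $uz'=z'v$ for every $(u,v)\in\spans G$, and a generic pair of $M$ is $(\as_0u\as_1,\bs_0v\bs_1)$ with $(u,v)\in\spans G$.

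For the forward direction, use that $z'$ is an inner witness of $(\as_1\as_0,\bs_1\bs_0)$, i.e.\ $\as_1\as_0z'=z'\bs_1\bs_0$. Comparing lengths gives $|\as_0|+|\as_1|=|\bs_0|+|\bs_1|$, hence $|\as_0z'|-|\bs_0|=|z'\bs_1|-|\as_1|$, so the two length conditions in case (a) are equivalent. Since $\as_0z'$ and $\bs_0$ are both suffixes of the common word $\as_1\as_0z'=z'\bs_1\bs_0$, one of them is a suffix of the other; dually, one of $\as_1$, $z'\bs_1$ is a prefix of the other. Put $z=[\as_0z',\bs_0]_R$. If $|\as_0z'|\ge|\bs_0|$, then $\as_0z'=z\bs_0$ and, since $\as_1$ is then a prefix of $z'\bs_1$, also $z'\bs_1=\as_1z$, so $z=[\as_1,z'\bs_1]_L$; the two presentations agree precisely because $\as_1\as_0z'=z'\bs_1\bs_0$. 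Now
\[(\as_0u\as_1)z=\as_0u(\as_1z)=\as_0uz'\bs_1=\as_0z'v\bs_1=(z\bs_0)v\bs_1=z(\bs_0v\bs_1),\]
so $z$ is a common inner witness of $M$. If instead $|\as_0z'|<|\bs_0|$, then $\bs_0=z\as_0z'$ and $\as_1=z'\bs_1z$, and the mirror computation gives $z(\as_0u\as_1)=(\bs_0v\bs_1)z$, i.e.\ $z$ is a common outer witness of $M$. This proves the ``forward'' half of cases (a) and (b).

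For the backward direction, let $z$ be any common witness of $M$; taking $(u,v)=(\epsilon,\epsilon)$ gives the redux equation ($\as_0\as_1z=z\bs_0\bs_1$ when $z$ is inner), and peeling off $\as_0$ on the left and $\bs_1$ on the right — after the same length bookkeeping — recovers a word $z'$ with $\as_0z'=z\bs_0$ and $z'\bs_1=\as_1z$; one checks $uz'=z'v$ for all $(u,v)\in G$ and $\as_1\as_0z'=z'\bs_1\bs_0$, so $z'$ is a common witness of $G\cup\{(\as_1\as_0,\bs_1\bs_0)\}$, and this map is inverse to the forward one (the length comparisons decide which of the finitely many sub-cases, hence the inner/outer type, occurs, and these can swap). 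Consequently the common witnesses of $M$ and of $G\cup\{(\as_1\as_0,\bs_1\bs_0)\}$ are in bijection. Therefore: if $G\cup\{(\as_1\as_0,\bs_1\bs_0)\}$ has a unique common witness $z'$ — inner in case (a), outer in case (b) — then $M$ has exactly the single common witness $z$ given by the displayed formula; and if $G\cup\{(\as_1\as_0,\bs_1\bs_0)\}$ has infinitely many common witnesses, then so does $M$, whence by \Cref{infmanycwcorr} all pairs of $M$ share one primitive root, which — since the redux $(\as_0\as_1,\bs_0\bs_1)$ lies in $M$ (take the identity of $\spans G$) — is the primitive root of the redux, giving case (c). That these three alternatives exhaust the hypothesis is \Cref{infmanycwcorr} applied to $G\cup\{(\as_1\as_0,\bs_1\bs_0)\}$.

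The hard part will be the bookkeeping that makes the two maps mutually inverse: at each step ``one of two words is a prefix/suffix of the other'' branches, one must check that the inner-versus-outer type of $z$ is governed \emph{exactly} by the stated length comparison, and the degenerate situations ($z'=\epsilon$, or some of the $\as_i,\bs_i$ empty, or $G=\emptyset$) have to be verified separately. Phrasing everything through the delays $[\cdot,\cdot]_L$ and $[\cdot,\cdot]_R$, instead of writing out the four sign combinations by hand, is what keeps the treatment of cases (a)--(c) uniform.
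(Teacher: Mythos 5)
The forward half of your argument (deriving $z=[\as_0z',\bs_0]_R=[\as_1,z'\bs_1]_L$ from $z'$, and verifying the cancellation that makes $z$ an inner or outer witness of $M$ according to the sign of $|\as_0z'|-|\bs_0|$) is essentially what the paper does for items (a)--(b); likewise the use of the component swap to reduce (b) to (a). Where you genuinely diverge is on uniqueness and item (c). The paper proves uniqueness by invoking \Cref{gmonomialcut} to find a single pair $(u,v)\in G$ having $z'$ as its unique common witness with $(\as_1\as_0,\bs_1\bs_0)$, and then running \Cref{singletonCut} on the long concrete pair $P=(\as_0,\bs_0)(u,v)^{4n}(\as_1,\bs_1)$: any alleged common witness $z_m$ of $M$ forces a cut of $P$, hence a common witness of $\{(u,v),(\as_1\as_0,\bs_1\bs_0)\}$, hence must equal $z$. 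You instead build an explicit inverse ``backward'' map from common witnesses of $M$ to common witnesses of $G\cup\{(\as_1\as_0,\bs_1\bs_0)\}$ and claim the two maps are mutually inverse, so the witness sets are in bijection; uniqueness and (c) (via \Cref{infmanycwcorr} and the observation that the redux lies in $M$) then follow at once without ever touching \Cref{singletonCut} or \Cref{gmonomialcut}. This is a cleaner, more self-contained route. But be aware of what still needs to be done: your backward step as written only handles one branch ($\as_0$ a prefix of $z\bs_0$, giving $z'$ inner); when instead $z\bs_0$ is a proper prefix of $\as_0$ you must set $z'=(z\bs_0)^{-1}\as_0$ and then $z'$ comes out as an \emph{outer} witness (and the verification $z'u=vz'$ and $z'\as_1\as_0=\bs_1\bs_0 z'$ is a different cancellation), and symmetrically for $z$ outer. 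Checking that these four sub-cases pair up correctly with the four sub-cases of the forward map, together with the degenerate situations you flag ($z'=\epsilon$, empty affixes, $G=\emptyset$), is the real content of the bijection claim, and you leave it open. So the plan is sound and arguably preferable, but it is not yet a complete proof.
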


\begin{proof}
There are two cases to consider depending upon if $G \cup \{(\as_1\as_0,\bs_1\bs_0)\}$ has a common inner or outer witness.
\paragraph*{Case $(a)$: When $z'$ is a common inner witness of $G \cup \{(\as_1\as_0,\bs_1\bs_0)\}$} The following equations hold:
\begin{align}
\as_1\as_0 z' &= z' \bs_1\bs_0 \label{g1monomialwitness:eq1}\\
 u z' &= z' v \text{ ~for any pair } (u,v) \in G^*
\end{align}
We claim that $z = [\as_0z',\bs_0]_R = [\as_1,z'\bs_1]_L$ is a common witness for $M$. There are two cases depending upon whether $\bs_0$ is a suffix of $\as_0z'$ or vice-versa in \Cref{g1monomialwitness:eq1}.

\begin{enumerate}[label=(\alph*)]
\item When $\bs_0$ is a suffix of $\as_0z'$ or equivalently, when $\as_1$ is a prefix of $z'\bs_1$. We get $z = \as_0z' {\bs_0}^{-1}= \as_1^{-1}z'\bs_1$, or equivalently $\as_0z' =z\bs_0$ and $\as_1z = z'\bs_1$.
We show that $z$ is a common inner witness for $M$.

For any $(u,v) \in G^*$,
\begin{align*}
\as_0u\as_1 z &= \as_0uz'\bs_1 &&\text{(Substituting $\as_1z = z'\bs_1$)}\\
&=\as_0z'v\bs_1 &&\text{($z'$ is an inner witness of $(u,v)$)}\\
&=z\bs_0v\bs_1 &&\text{(Substituting $\as_0z' =z\bs_0$)}
\end{align*}

\item When $\as_0z'$ is a suffix of $\bs_0$ or equivalently $z'\bs_1$ is a prefix of $\as_1$. We get $z = \bs_0(\as_0z')^{-1} = (z'\bs_1)^{-1}\as_1$, or equivalently $\as_1=z'\bs_1z$ and $z\as_0z' = \bs_0$. We show that $z$ is a common outer witness for $M$.

For any $(u,v) \in G^*$,
\begin{align*}
z\as_0u\as_1 &= z\as_0uz'\bs_1z &&\text{(Substituting $\as_1=z'\bs_1z$)}\\
&= z\as_0z'v\bs_1z &&\text{($z'$ is an inner witness of $(u,v)$)}\\
&= \bs_0v\bs_1z &&\text{(Substituting $z\as_0z' = \bs_0$)}
\end{align*}
\end{enumerate}

\medskip
\paragraph*{Case $(b)$: When $z'$ is a common outer witness of $G$ and $(\as_1\as_0,\bs_1\bs_0)$} Therefore, the following equations hold:
\begin{align}
z' \as_1\as_0 &= \bs_1\bs_0 z' \label{g1monomialwitness:eq2}\\
z' u &= v z' \text{ ~for any pair } (u,v) \in G^*
\end{align}
We claim that $z = [\as_0,\bs_0z']_R = [z'\as_1,\bs_1]_L$ is a witness for $M$. There are two cases depending upon if $\as_0$ is a suffix of $\bs_0z'$ or vice-versa in \Cref{g1monomialwitness:eq2}.

\begin{enumerate}[label=(\alph*)]
\item When $\as_0$ is a suffix of $\bs_0z'$ or equivalently, $\bs_1$ is a prefix of $z'\as_1$. We get $z = \bs_0z' \as_0^{-1} = \bs_1^{-1}z'\as_1$, or equivalently $z\as_0=\bs_0z'$ and $z'\as_1 =\bs_1z$.
We show that $z$ is a common outer witness for $M$.

For any $(u,v) \in G^*$,
\begin{align*}
z\as_0u\as_1 &= \bs_0z'u\as_1 &&\text{(Substituting $z\as_0=\bs_0z'$)}\\
&=\bs_0vz'\as_1 &&\text{($z'$ is an outer witness of $(u,v)$)}\\
&=\bs_0v\bs_1z &&\text{(Substituting $z'\as_1 =\bs_1z$)}
\end{align*}

\item If $\bs_0z'$ is a suffix of $\as_0$ or equivalently, $z'\as_1$ is a prefix of $\bs_1$. Therefore, $z = \as_0(\bs_0z')^{-1} = (z'\as_1)^{-1}\bs_1$, or equivalently $\as_0=z\bs_0z'$ and $z'\as_1z=\bs_1$.
We show that $z$ is a common inner witness for $M$.

For any $(u,v) \in G^*$,
\begin{align*}
\as_0u\as_1z &= z\bs_0z'u\as_1z &&\text{(Substituting $\as_0=z\bs_0z'$)}\\
&=z\bs_0vz'\as_1z &&\text{($z'$ is an outer witness of $(u,v)$)}\\
&=z\bs_0v\bs_1 &&\text{(Substituting $z'\as_1z=\bs_1$)}
\end{align*}
\end{enumerate}
Therefore, if there exists a common witness $z'$ for $G$ and $(\as_1\as_0,\bs_1\bs_0)$, then there also exists a common witness $z$ for $M$.
Furthermore, since the redux remains constant in the word equation that relates the common witness of $M$ and those of $G \cup \{(\as_1\as_0,\bs_1\bs_0)\}$ , each distinct common witness of $M$ corresponds to a distinct common witness of $G \cup \{(\as_1\as_0,\bs_1\bs_0)\}$, and vice-versa. Consequently, if $z'$ is the unique common witness for $G \cup \{(\as_1\as_0,\bs_1\bs_0)\}$ then $z$ is the unique common witness of $M$, and conversely.

Suppose the set $G \cup \{(\as_1\as_0,\bs_1\bs_0)\}$ has infinitely many witnesses.
According to \Cref{infmanycwcorr}, all the pairs in that set have the same primitive root, let us say $(\rho,\rho^\prime)$. Therefore, $G^*(\as_1,\bs_1)(\as_0,\bs_0)$ is a subset of powers of $(\rho,\rho^\prime)$ and is conjugate. Since $M$ is a cyclic shift of $G^*(\as_1,\bs_1)(\as_0,\bs_0)$ and is also conjugate, it is also a subset of powers of a primitive root, let us say $(\rho_m,\rho_m^\prime)$, that is a cyclic shift of $(\rho,\rho^\prime)$. Moreover, $\as_1$ (\textit{resp.}~$\bs_1$) is an inner (\textit{resp.}~outer) witness of $(\rho,\rho_m)$ (\textit{resp.}~$(\rho^\prime,\rho_m^\prime)$). We observe that $(\rho_m,\rho_m^\prime)$ is the primitive root of the redux of $M$. Hence, $M$ is a subset of powers of the primitive root of its redux.
\end{proof}

\subsection{Common Witness of a Sumfree Set}
We prove \Cref{generalmonomialwitness} in this subsection.

\begin{lemma}\label{mcommonwitness}
Given a sumfree set $M = (\as_0,\bs_0){G_1}^*(\as_1,\bs_1) \cdots (\as_{k-1},\bs_{k-1}){G_k}^*(\as_k,\bs_k)$. The following are equivalent.
\begin{enumerate}
\item $z$ is a common inner (\textit{resp.}~outer) witness of $M$.
\item $z$ is a common inner (\textit{resp.}~outer) witness of each of its singleton redux.
\end{enumerate}
\end{lemma}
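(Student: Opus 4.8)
The plan is to prove the two implications separately. Since a common witness of a set is by definition a common inner witness \emph{or} a common outer witness, and the two types behave symmetrically, it suffices to prove the equivalence for common \emph{inner} witnesses; the outer case follows by the same argument with the roles of the two coordinates swapped. Write $\as = \as_0\as_1\cdots\as_k$ and $\bs = \bs_0\bs_1\cdots\bs_k$; then $(\as,\bs)$ is the redux of $M$ and also of every singleton redux of $M$. For $(1)\Rightarrow(2)$: each singleton redux $M_i=(\as_0\cdots\as_{i-1},\bs_0\cdots\bs_{i-1})\,G_i^*\,(\as_i\cdots\as_k,\bs_i\cdots\bs_k)$ is a subset of $M$, being obtained from $M$ by selecting the empty iteration $(\epsilon,\epsilon)\in G_j^0\subseteq G_j^*$ in every factor $G_j^*$ with $j\neq i$; hence a common inner witness of $M$ is a common inner witness of each $M_i$.

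For $(2)\Rightarrow(1)$ I would induct on $k$. The case $k=1$ is vacuous since $M=M_1$. For $k\geq 3$ the inductive step reduces to smaller instances together with the case $k=2$: fix $(p_1,q_1)\in G_1^*$ and set $M^{(1)}=(\as_0 p_1\as_1,\bs_0 q_1\bs_1)\,G_2^*(\as_2,\bs_2)\cdots G_k^*(\as_k,\bs_k)$, a sumfree set with $k-1$ stars. Since $M=\bigcup_{(p_1,q_1)\in G_1^*}M^{(1)}$, it is enough that $z$ be a common inner witness of each $M^{(1)}$, and by the induction hypothesis it is enough that $z$ be a common inner witness of each singleton redux $N_i$ $(2\le i\le k)$ of $M^{(1)}$. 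Each such $N_i=(\as_0 p_1\as_1\as_2\cdots\as_{i-1},\ \bs_0 q_1\bs_1\bs_2\cdots\bs_{i-1})\,G_i^*\,(\as_i\cdots\as_k,\bs_i\cdots\bs_k)$ is the restriction of the ``double redux'' $M_{1,i}=(\as_0,\bs_0)\,G_1^*\,(\as_1\cdots\as_{i-1},\bs_1\cdots\bs_{i-1})\,G_i^*\,(\as_i\cdots\as_k,\bs_i\cdots\bs_k)$ of $M$ obtained by fixing $G_1^*$ to $(p_1,q_1)$, so $N_i\subseteq M_{1,i}$. The two singleton reduxes of $M_{1,i}$ are precisely $M_1$ and $M_i$, each of which has $z$ as a common inner witness by hypothesis, so the case $k=2$ applied to $M_{1,i}$ gives that $z$ is a common inner witness of $M_{1,i}$, hence of $N_i$. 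This closes the induction.

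The crux is therefore the case $k=2$: assuming $z$ is a common inner witness of $M_1=(\as_0,\bs_0)G_1^*(\as_1\as_2,\bs_1\bs_2)$ and of $M_2=(\as_0\as_1,\bs_0\bs_1)G_2^*(\as_2,\bs_2)$, show $uz=zv$ for every $(u,v)\in M=(\as_0,\bs_0)G_1^*(\as_1,\bs_1)G_2^*(\as_2,\bs_2)$. I would argue combinatorially in the spirit of \Cref{speciallemma}: fix $(p_1,q_1)\in G_1^*$ and $(p_2,q_2)\in G_2^*$, choose $n$ larger than the length of the redux and than the relevant Fine--Wilf indices, and analyse the conjugate pair $P=(\as_0,\bs_0)(p_1,q_1)^{4n}(\as_1,\bs_1)(p_2,q_2)^{4n}(\as_2,\bs_2)\in M$. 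Using the equations supplied by $M_1$ and $M_2$ (applied to the pumped single-star pairs sitting inside $P$), together with the Cut Lemma (\Cref{cutcorr}) to force the position of a cut of $P$ and the Equal Length Lemma / \Cref{commonfactor} to exclude ``mismatched'' positions (where the primitive roots of the two blocks would be forced to be conjugate), one pins down the cut and reads off $uz=zv$. I expect this cut analysis --- controlling \emph{all} possible positions of the cut of $P$ when the two blocks $(p_1,q_1)^{4n}$ and $(p_2,q_2)^{4n}$ need not share a primitive root --- to be the main obstacle; the remainder is bookkeeping.

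An alternative route for $k=2$ would avoid a fresh cut analysis: fix $(p_2,q_2)\in G_2^*$, regard $\hat M=(\as_0,\bs_0)G_1^*(\as_1 p_2\as_2,\bs_1 q_2\bs_2)$ as a single-star sumfree set, and transport the witness $z$ from $M_1$ to $\hat M$ through the explicit prefix/suffix-delay formulas of \Cref{g1monomialwitness} --- using the $M_2$-equations to certify that the intermediate word is a common witness of $G_1\cup\{(\as_1 p_2\as_2\as_0,\bs_1 q_2\bs_2\bs_0)\}$. This trades the cut analysis for equally delicate delay computations, so either way the $k=2$ case carries the technical weight of the lemma.
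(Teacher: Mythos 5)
Your $(1)\Rightarrow(2)$ direction matches the paper (subset inclusion), and your reduction of the general case to $k=2$ --- via the "double reduxes" $M_{1,i}$, noting that the singleton reduxes of $M_{1,i}$ are exactly $M_1$ and $M_i$ --- is a correct and genuinely different decomposition from the paper's proof. The paper instead handles all $k$ at once: it proves $(2)\Rightarrow(1)$ by a direct algebraic induction on the block index $i$ (showing $\as_0 u_1\as_1\cdots u_i\as_i\cdots\as_k z = z\bs_0v_1\bs_1\cdots v_i\bs_i\cdots\bs_k$), in each step substituting the explicit prefix/suffix-delay expression of $z$ in terms of $z_i$ supplied by \Cref{g1monomialwitness} and treating the three possible types of $z_i$ separately. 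No cut analysis is used at all; the lemma is a purely algebraic identity once \Cref{g1monomialwitness} is available.

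The gap is in your primary plan for $k=2$: you propose to "analyse the conjugate pair $P=(\as_0,\bs_0)(p_1,q_1)^{4n}(\as_1,\bs_1)(p_2,q_2)^{4n}(\as_2,\bs_2)\in M$" and use the Cut Lemma to "force the position of a cut of $P$." This is circular. Under the hypothesis $(2)$ you only know conjugacy of the pairs that lie in $M_1$ and $M_2$; the pair $P$ lies in $M\setminus(M_1\cup M_2)$, and whether it is conjugate is precisely part of what has to be proved. There is no cut of $P$ to pin down until after $uz=zv$ has been established, and the hypothesis does not give you that. (The cut-analysis arguments in \Cref{speciallemma} and \Cref{2kleenemonomial}, which you are modelling this on, legitimately take a cut of an element of $M$ because there the \emph{hypothesis} is that $M$ is conjugate; here it is not.) Your "alternative route" --- transporting $z$ through the delay formulas of \Cref{g1monomialwitness}, using the $M_2$-equations to certify the intermediate witness --- is the right idea and is essentially what the paper's inductive step does in the single-block update; that route avoids the circularity and is the one to pursue. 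Also note in passing that the "symmetry" reduction to the common inner witness case is slightly glib: a priori $z$ could be an inner witness of one singleton redux and an outer witness of another, and ruling this out (or handling the degenerate case where $z$ is simultaneously inner and outer for the redux) deserves a sentence.
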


\begin{proof}
Let  $M = (\as_0,\bs_0){G_1}^*(\as_1,\bs_1) \cdots (\as_{k-1},\bs_{k-1}){G_k}^*(\as_k,\bs_k)$ be a sumfree set.  For $i \in \{1, \ldots, k\}$, let $M_i$ denote the singleton redux of $M$ keeping only the Kleene star $G_i^*$, i.e., \[M_i = (\as_0 \cdots \as_{i-1},\bs_0 \cdots \bs_{i-1})G_i^*(\as_i \cdots \as_k,\bs_i \cdots \bs_k)\] The proof of $(1) \implies (2)$ is trivial, i.e., if $z$ is a common witness of $M$, then $z$ is also a common witness of $M_i \subseteq M$.

We prove $(2) \implies (1)$. WLOG, assume that $z$ is a common \emph{inner} witness of each $M_i$'s. If the redux is an empty pair $(\varepsilon,\varepsilon)$, then $M = {G_1}^*{G_2}^*\cdots {G_k}^*$. It is easy to prove that $z$ is a common inner witness of $M$ by showing that for any arbitrary pair $(u_i,v_i) \in G_i^*$, $u_1 u_2 \cdots u_k z = z v_1v_2 \cdots v_k$. Proof by induction on $k$.
\begin{align*}
u_1 \cdots u_{k-1}u_k z &= u_1 \cdots u_{k-1}z v_k &&\text{(Since $u_kz = zv_k$)}\\
&=z v_1 \cdots v_{k-1} v_k &&\text{(By Induction Hypothesis)}
\end{align*}

Now, assume that the redux is nonempty. Since each $M_i$ has a common witness, there exist a common witness, say $z_i$, for $G_i \cup \{(\as_i \cdots \as_k\as_0\cdots \as_{i-1},\bs_i \cdots \bs_k\bs_0\cdots \bs_{i-1})\}$ by \Cref{singlemonomial}. There are 3 possible cases for $z_i$.
\begin{enumerate}[label=(\alph*)]
\item $z_i$ is a unique common inner witness. Therefore, for any pair $(u_i,v_i) \in G_i^*$,
\begin{align}
u_i z_i &= z_i v_i\\
\as_i \cdots \as_k\as_0\cdots \as_{i-1}z_i &= z_i\bs_i \cdots \bs_k\bs_0\cdots \bs_{i-1}\\
z=\as_0 \cdots \as_{i-1}z_i(\bs_0 \cdots \bs_{i-1})^{-1} &= (\as_i \cdots \as_k)^{-1}z_i\bs_i \cdots \bs_k \text{ (By \Cref{g1monomialwitness} \ref{Thm7:1})} \label{z_m i.w}
\end{align}

\item $z_i$ is a unique common outer witness. Therefore, for any pair $(u_i,v_i) \in G_i^*$,
\begin{align}
z_i u_i &= v_i z_i\\
z_i\as_i \cdots \as_k\as_0\cdots \as_{i-1} &= \bs_i \cdots \bs_k\bs_0\cdots \bs_{i-1}z_i\\
z = \as_0\cdots \as_{i-1}(\bs_0 \cdots \bs_{i-1}z_i)^{-1} &= (z_i\as_i \cdots \as_k)^{-1}\bs_i \cdots \bs_k \text{ (By \Cref{g1monomialwitness} \ref{Thm7:2})} \label{z_m o.w}
\end{align}

\item When $G_i \cup \{(\as_i \cdots \as_k\as_0\cdots \as_{i-1},\bs_i \cdots \bs_k\bs_0\cdots \bs_{i-1})\}$ has infinitely many witnesses, by \Cref{infmanycwcorr}, it is a set of powers of the same primitive root say $(\rho_i,\rho_i^\prime)$. Therefore $z_i$ belongs to witnesses of $(\rho_i,\rho_i^\prime)$. From \Cref{g1monomialwitness} \ref{Thm7:3}, the set $M_i$ reduces to a set of powers of the primitive root of the redux $(\as_0 \cdots \as_k,\bs_0 \cdots \bs_k)$, say $(\rho,\rho^\prime)$. Note that $\as_i \cdots \as_k$ (\textit{resp.}~$\as_0 \cdots \as_{i-1}$) is an inner (\textit{resp.}~outer) witness of $(\rho_i,\rho)$. Similarly $\bs_i \cdots \bs_k$ (\textit{resp.}~$\bs_0 \cdots \bs_{i-1}$) is an inner (\textit{resp.}~outer) witness of $(\rho_i^\prime,\rho^\prime)$.
\end{enumerate}

We show that $z$ is a common inner witness of $M$, i.e., for any arbitrary pair $(u_i,v _i) \in G_i^*$ (possibly empty), we prove $\as_0u_1\as_1u_2\as_2 \cdots \as_{k-1}u_k\as_kz =z\bs_0v_1\bs_1v_2\bs_2 \cdots \bs_{k-1}v_k\bs_k$. The proof is by induction on the number of singleton reduxes, $0 \leq i \leq k$.

\medskip
\paragraph*{Base Case}
When $i=0$, it is vacuously true since $z$ is a witness of the redux.

\medskip
\paragraph*{Inductive Case}
Assume for induction that it is true for the first $i-1$ singleton reduxes, i.e., \[\as_0u_1\as_1u_2\as_2 \cdots u_{i-1}\as_{i-1} \cdots \as_{k}z =z\bs_0v_1\bs_1v_2\bs_2 \cdots v_{i-1}\bs_{i-1}\cdots\bs_{k} \ .\]
We prove it for the first $i$ singleton reduxes, i.e., we show \[\as_0u_1\as_1u_2\as_2 \cdots u_{i-1}\as_{i-1}u_{i}\as_{i} \cdots \as_{k}z =z\bs_0v_1\bs_1v_2\bs_2 \cdots v_{i-1}\bs_{i-1}v_{i}\bs_{i}\cdots\bs_{k} \ .\] There are 3 possible cases for the common witness $z_i$ of $G_i \cup \{(\as_i \cdots \as_k\as_0\cdots \as_{i-1},\bs_i \cdots \bs_k\bs_0\cdots \bs_{i-1})\}$.
\begin{enumerate}
\item When $z_i$ is a unique common inner witness. From \Cref{z_m i.w}, $z = (\as_i \cdots \as_k)^{-1}z_i\bs_i \cdots \bs_k$.
\begin{align*}
\as_0&u_1\as_1 \cdots u_{i-1}\as_{i-1}u_i \as_i \as_{i+1} \cdots \as_k z\\
&= \as_0 u_1\as_1 \cdots u_{i-1}\as_{i-1}u_iz_i\bs_i \cdots \bs_k &&\text{ (Subs. $z$)}\\
&= \as_0 u_1\as_1 \cdots u_{i-1}\as_{i-1}z_iv_i\bs_i \cdots \bs_k &&\text{ ($u_iz_i = z_iv_i$)}\\
&= \as_0 u_1\as_1 \cdots u_{i-1}\as_{i-1}\as_i \cdots \as_kz(\bs_i \cdots \bs_k)^{-1}v_i \bs_i \cdots \bs_k &&\text{ (Subs. $z_i$)}\\
&= z\bs_0v_1\bs_1 \cdots v_{i-1}\bs_{i-1}\bs_i \cdots \bs_k (\bs_i \cdots \bs_k)^{-1}v_i \bs_i \cdots \bs_k &&\text{ (Inductive Hypothesis)}\\
&= z\bs_0v_1\bs_1 \cdots v_{i-1}\bs_{i-1}v_i \bs_i \cdots \bs_k &&\text{ (Simplifying)}
\end{align*}

\item When $z_i$ is a unique outer witness. By \Cref{z_m o.w}, $z = (z_i\as_i \cdots \as_k)^{-1}\bs_i \cdots \bs_k$.
\begin{align*}
\as_0 &u_1\as_1 \cdots u_{i-1}\as_{i-1}u_i \as_i \as_{i+1} \cdots \as_k z\\
&= \as_0 u_1\as_1 \cdots u_{i-1}\as_{i-1}u_i \as_i \as_{i+1} \cdots \as_k(z_i\as_i \cdots \as_k)^{-1}\bs_i \cdots \bs_k &&\text{(Subs. $z$)}\\
&= \as_0 u_1\as_1 \cdots u_{i-1}\as_{i-1}u_i z_i^{-1}\bs_i \cdots \bs_k &&\text{(Simplifying)}\\
&= \as_0 u_1\as_1 \cdots u_{i-1}\as_{i-1}{z_i}^{-1}v_i\bs_i \cdots \bs_k &&\text{($u_i = {z_i}^{-1}v_iz_i$)}\\
&= \as_0 u_1\as_1 \cdots u_{i-1}\as_{i-1}\as_i \cdots \as_kz(\bs_i \cdots \bs_k)^{-1}v_i \bs_i \cdots \bs_k &&\text{(Subs. $z_i^{-1}$)}\\
&= z\bs_0v_1\bs_1 \cdots v_{i-1}\bs_{i-1}\bs_i \cdots \bs_k (\bs_i \cdots \bs_k)^{-1}v_i \bs_i \cdots \bs_k &&\text{(Inductive Hypothesis)}\\
&= z\bs_0v_1\bs_1 \cdots v_{i-1}\bs_{i-1}v_i \bs_i \cdots \bs_k &&\text{(Simplifying)}
\end{align*}

\item When $z_i$ is a witness of the primitive root $(\rho_i,\rho_i^\prime)$ of $G_i \cup \{(\as_i \cdots \as_k\as_0\cdots \as_{i-1},\bs_i \cdots \bs_k\bs_0\cdots \bs_{i-1})\}$ (The case where $G_i \cup \{(\as_i \cdots \as_k\as_0\cdots \as_{i-1},\bs_i \cdots \bs_k\bs_0\cdots \bs_{i-1})\}$ have infinitely many witnesses). Here $(u_i,v_i)$ is some $m^\mathit{th}$ power of $(\rho_i,\rho_i^\prime)$. Since $z$ is a witness of a singleton redux, it is also a witness of the redux $(\as_0 \cdots \as_k,\bs_0 \cdots \bs_k)$, and hence a witness of its primitive root $(\rho,\rho^\prime)$. We also know that $\as_i \cdots \as_k$ is an inner witness of $(\rho_i,\rho)$ and $\bs_i \cdots \bs_k$ is an inner witness of $(\rho_i^\prime,\rho^\prime)$. 
\begin{align*}
\as_0 &u_1\as_1 \cdots u_{i-1}\as_{i-1}u_i \as_i \as_{i+1} \cdots \as_k z\\
&= \as_0 u_1\as_1 \cdots u_{i-1}\as_{i-1}(\rho_i)^{m} \as_i \as_{i+1} \cdots \as_k z &&\text{(Subs. $u_i$)}\\
&= \as_0 u_1\as_1 \cdots u_{i-1}\as_{i-1} \as_i \as_{i+1} \cdots \as_k(\rho)^{m} z &&\text{($\as_i \cdots \as_k$ is an i.w. of $(\rho_i,\rho)$)}\\
&= \as_0 u_1\as_1 \cdots u_{i-1}\as_{i-1} \as_i \as_{i+1} \cdots \as_kz (\rho^\prime)^{m} &&\text{($z$ is a witness of $(\rho,\rho^\prime)$)}\\
&= \as_0 u_1\as_1 \cdots u_{i-1}\as_{i-1} \as_i \as_{i+1} \cdots \as_kz (\bs_i \cdots \bs_k)^{-1}({\rho_i}^\prime)^{m}\bs_i \cdots \bs_k &&\text{($\bs_i \cdots \bs_k$ is an i.w. of $(\rho_i^\prime,\rho^\prime)$)}\\
&= \as_0 u_1\as_1 \cdots u_{i-1}\as_{i-1} \as_i \as_{i+1} \cdots \as_kz (\bs_i \cdots \bs_k)^{-1}v_i\bs_i \cdots \bs_k &&\text{(Subs. $v_i = (\rho_i^\prime)^{m}$)}\\
&= z\bs_0v_1\bs_1 \cdots v_{i-1}\bs_{i-1}\bs_i \cdots \bs_k (\bs_i \cdots \bs_k)^{-1}v_i \bs_i \cdots \bs_k &&\text{(Inductive Hypothesis)}\\
&= z\bs_0v_1\bs_1 \cdots v_{i-1}\bs_{i-1}v_i \bs_i \cdots \bs_k &&\text{(Simplifying)}
\end{align*}
\end{enumerate}
Thus $z$ is a common inner witness of $M$.
\end{proof}

\section{Conjugacy of Monoid Closures}
\label{sec:conjugateMonoid}

The entire section is devoted to the proof of \Cref{thm:conjugateMonoidClosure}.

The proof of $(\Longleftarrow)$ direction of \Cref{thm:conjugateMonoidClosure} is straightforward. For Item (1), if $M$ has a common witness, then there exist a witness for every pair of words produced by $M$ and hence $M$ is conjugate. For Item (2), since each $G_i^*$ is conjugate for $i \in [1,k]$ and $\as_0 \cdots \as_k \sim \bs_0 \cdots \bs_k$, we get that $M$ produces pairs of words of equal length. Moreover, each pair of words in $M$ is a cyclic shift of $v\rho^i$ for some $i \geq 0$, and hence it is conjugate.

    We prove the $(\Longrightarrow)$ direction in \Cref{thm:conjugateMonoidClosure} in the rest of the section.
    The case when $M$ is of the form $(\as_0,\bs_0)G_1^*(\as_1,\bs_1)$  holds already from \Cref{singlemonomial}. We now assume that $k \geq 2$. 
    
    Since $M$ is conjugate, each $G_i^*$ for $i \in [1,k]$ is conjugate by \Cref{boundedconj}. Using \Cref{conjugacy}, we get that each $G_i$ has a common witness. From \Cref{infmanycwcorr}, either $|\calW(G_i^*)|=\infty$ or $|\calW(G_i^*)|=1$.

    First, we show that when the redux of $M$ is empty, $M$ has a common witness. If the redux of $M$ is empty, then each singleton redux of $M$ is of the form $G_i^*, i \in [1,k]$ and $M= G_1^* G_2^* \cdots G_k^*$.  We define $P= \bigcup_{i=1}^{k}P_i$ where 
    \begin{itemize}
        \item $P_i = \{(\rho_i^{n_i},\rho_i'^{n_i})\}$ if $|\calW(G_i^*)|=\infty$, where $\{(\rho_i,\rho_i')\}=\roots{G_i^*}$ and $n_i>0$ is such that $(\rho_i^{n_i},\rho_i'^{n_i}) \in G_i^*$, and 
        \item $P_i= \{(u_{1},v_{1}), (u_{2},v_{2})\}$ if $|\calW(G_i^*)|=1$, where $(u_1,v_1),(u_2,v_2)\in G_i^*$ are such that $\calW(\{(u_1,v_1),(u_2,v_2)\}) = \calW(G_i^*)$. 
    \end{itemize}

    Observe that  
    \begin{align}
        \prod_{i=1}^{k}\left(\prod_{p \in P_i}p^*\right ) \subseteq M
    \end{align}
    is conjugate. By \Cref{commonwitnessfinite},  the set $P$ has a common witness $z$.  Therefore 
    $z$ is a common witness for each of the $G_i^*$ by construction and \Cref{samewitness}. Therefore, $z$ is a common witness for each of the singleton redux namely $G_i^*$ for $i\in [1,k]$. By virtue of \Cref{mcommonwitness}, $M$ has a common witness. Thus if redux of $M$ is empty $M$ has a common witness.

    Henceforth we assume that the redux of $M$ is nonempty. Let
    \begin{align}
        C=|\as_0\as_1 \cdots \as_k| = |\bs_0\bs_1 \cdots \bs_k|.
    \end{align}
    
    \begin{definition}[Canonical Pair]\label{def:diffpumpedcanonical}
    A pair $(u',v') \in M$ is \emph{canonical} if it is of the form:
    \begin{align}(u, v) &= (\as_0,\bs_0)(u_1,v_1)^{j_1} (\as_1,\bs_1)(u_2,v_2)^{j_2} \cdots (u_k,v_k)^{j_k}(\as_k,\bs_k) 
    \end{align}
where $(u_i,v_i) \in G_i^*$ and $j_i \in \N$, for $i\in[1,k]$, satisfies the following conditions.
    \begin{enumerate} 
    \item Let $i\in [1,k]$. Then,
        \begin{enumerate}
    \item $(u_i,v_i) = (\rho_i^n, \rho_i'^n)$   if $|W(G_i^*)|=\infty$, where $\roots{G_i} = \{(\rho_i, \rho_i')\}$ and $n \geq 1$ is such that $(\rho_i^n, \rho_i'^n) \in G_i^*$, and 

        \item $(u_i,v_i) = (u,v) \in G_i^*$ if $|W(G_i^*)|=1$, where  $(u,v) \in G_i^*$ is such that 
        \begin{align}
           W(\{(u,v) , (\as_i \cdots \as_k \as_1 \cdots \as_{i-1}, \bs_i \cdots \bs_k \bs_1 \cdots \bs_{i-1})\})&=W(G_i^*)
        \end{align}
        guaranteed by  \Cref{gmonomialcut}.
    \end{enumerate}

        \item If $|W(G_i^*)| =1$ for some $i\in [1,k]$, then  $\rho_{u_j} \not\sim \rho_{u_l}$ for some $j,l \in [1,k]$.
    \end{enumerate}    
    \end{definition}

    We next establish the existence of canonical pairs. 
\begin{lemma}\label{lemma:item2existDiffpump}
Let $\rho$ be a primitive word. If $|W(G_i^*)| = 1$,  then there exists a pair $(u_i, v_i) \in G_i^*$ such that $\rho_{u_i} \not\sim \rho$ and $W(\{(u_i,v_i), (\as_i \cdots \as_k \as_0 \cdots \as_{i-1}, \bs_i \cdots \bs_k \bs_0 \cdots \bs_{i-1})\})=W(G_i^*)$.
\end{lemma}
\begin{proof}

Assume that $G_i^*$ has a unique common witness and let $\rho$ be an arbitrary primitive word. 

Let $p=(\as_i \cdots \as_k \as_0 \cdots \as_{i-1},\; \bs_i \cdots \bs_k \bs_0 \cdots \bs_{i-1})$.
Since $|W(G_i^*)|=1$, there exist at least two pairs $(u_1,v_1), (u_2,v_2) \in G_i^*$ such that $\rho_{u_1} \neq \rho_{u_2}$.
Hence, by \Cref{prop:comb}, we can construct three pairs in $G_i^*$ that have pairwise distinct primitive roots. Therefore, there exist two pairs $(u_1,v_1), (u_2,v_2) \in G_i^*$ that have primitive roots different from that of $p$. Therefore any two pairs from the set $\{(u_1,v_1), (u_2,v_2), p\}$ has a unique common witness by \Cref{1-infinity}. Moreover,
by \Cref{gmonomialcut}, 
\begin{align}
W(\{(u_1,v_1), p\})=W(
\{(u_2,v_2),p\})=W(G_i^*).
\end{align}  If either $\rho_{u_1} \not\sim \rho$ or $\rho_{u_2} \not\sim \rho$, then we choose the corresponding pair as $(u_i,v_i)$, and hence the claim follows. 
Otherwise, assume that $\rho_{u_1} \sim \rho\sim \rho_{u_2}$. 

Next consider the words $(u_3,v_3) = (u_1u_2, v_1v_2)$ and $(u_4,v_4) = (u_3u_1, v_3v_1)$.
By \Cref{prop:nontriv-concat2}, we deduce that $\rho_{u_3} \not \sim \rho$ and 
$\rho_{u_4} \not \sim \rho$.
If $\rho_{u_3}$ differs from the primitive root of $\fst(p)$,
then $W(\{(u_3,v_3),p\} = W(G_i^*)$ and $\rho_{u_3} \not \sim \rho$. Hence the word $(u_i,v_i) = (u_3,v_3)$ satisfies the claim. 
Otherwise, $\rho_{u_3}$ is the primitive root of $\fst(p)$. By \Cref{prop:comb}, $\rho_{u_3} \not \in \{\rho_{u_1},\rho_{u_2}\}$ since $\rho_{u_1}\neq \rho_{u_2}$.  
By applying \Cref{prop:comb} again, we get $\rho_{u_4} \neq \rho_{u_3}$. Hence, the primitive roots of $u_4$ and $\fst(p)$ differ. Thus, $(u_i,v_i) = (u_4,v_4)$ satisfies the claim.
\end{proof}
\begin{corollary}
    If $M$ is conjugate then there exists a canonical pair in $M$.
\end{corollary}
\begin{proof}
      We construct a canonical pair $(u,v)$ in $M$ by choosing, for each $i \in [1,k]$, a pair $(u_i,v_i) \in G_i^*$ satisfying Item~$1$ of \Cref{def:diffpumpedcanonical}. Item~$1.(a)$ holds trivially, while Item~$1.(b)$ follows from \Cref{gmonomialcut}. It remains to verify Item~$2$ in \Cref{def:diffpumpedcanonical}. If $|W(G_i^*)|=\infty$ for every $i \in [1,k]$, then Item~$2$ is vacuously satisfied. Otherwise, suppose that there exists an index $i \in [1,k]$ such that $|W(G_i^*)|=1$. 
      If there exists $j\neq i \in [1,k]$such that $\rho_{u_i} \not\sim \rho_{u_j}$, then we are done. Otherwise, using 
\Cref{lemma:item2existDiffpump} we choose a pair $(u_i',v_i')\in G_i^*$ that satisfies $\rho_{u_i'} \not\sim \rho_{u_i}$ and   Item~$1.(b)$.
\end{proof}
        
    \begin{definition}[Differentially-Pumped Pair]\label{def:diffpumped}
    
A \emph{Differentially-Pumped Pair} is a canonical pair of the form
    \begin{align}(u, v) &= (\as_0,\bs_0)(u_1,v_1)^{j_1} (\as_1,\bs_1)(u_2,v_2)^{j_2} \cdots (u_k,v_k)^{j_k}(\as_k,\bs_k) 
    \end{align}
    where $j_1 > 2k \cdot \max(C,\ell)$ where 
$\ell = 2\cdot \max\{|u_i| \mid i \in [1,k]\}$, and $j_2,\ldots, j_k$ obeys

\begin{equation}\label{eq:diffpump}
        2 \cdot (|u_1^{j_1}| + |u_2^{j_2}| + \cdots + |u_{i-1}^{j_{i-1}}| + C) < |u_i^{j_i}|.
    \end{equation}
   \end{definition}
    
     Observe that $\ell > |u_i| + |u_j| - \mathit{gcd}(|u_i|,|u_j|) > 0$ (Fine~\&~Wilf index between $u_i$ and $u_j$) for $i,j \in [1,k]$.

   Since $M$ is conjugate, a differentialy-pumped pair $(u,v) \in M$ is conjugate and has a cut, say $(p,q)$. Towards proving \Cref{thm:conjugateMonoidClosure}, we do a case analysis on the cut depending whether the cut $(p,q)$ is short or long (defined below).

\begin{definition}[Short \& Long Cuts]
Let $(u, v) \in M$ be the differentially pumped pair from \Cref{def:diffpumped}. Let $(p,q)$ be a cut of $(u,v)$. We say $p$ is \emph{short} if the prefix $p\bs_0$ of $u$ is a prefix of $\as_0u_1^{j_1-1}$ as shown below.

\begin{align*}
  u &= \overbrace{\as_0\,u_1 \cdots }^{p\bs_0} \overbrace{\cdots u_1 \, \as_1\, u_2\cdots u_2 \, \as_2 \cdots u_k \cdots u_k \,\as_k}^{\bs_0^{-1}q}\\
v &= \underbrace{\bs_0\,v_1 \cdots \cdots v_1 \, \bs_1 \,v_2 \cdots v_2 \, \bs_2 \, \cdots v_k \cdots}_{q} \underbrace{\cdots v_k \, \bs_k}_{p}
\end{align*}

Similarly, $q$ is \emph{short} if the prefix $q\as_0$ of $v$ is a prefix of $\bs_0v_1^{j_1-1}$.
The cut $(p,q)$ is \emph{short} if either $p$ or $q$ is short. When $p,q$ or the cut $(p,q)$ is not short, they are \emph{long}.
    
\end{definition}

   \begin{definition}[Blocks of a Canonical Pair]

   Given a differentially-pumped pair from Definition,
    $B_1, B_2, \ldots, B_k$ represent the factors of $u_1^{j_1}$s, $u_2^{j_2}$, $\ldots u_k^{j_k}$ respectively in $u$. Likewise  $B_1^\prime, B_2^\prime, \ldots, B_k^\prime$ represent the factors $v_1^{j_1}$, $v_2^{j_2}$, $\ldots v_k^{j_k}$ respectively in $v$.

    \begin{align*}
  u &= \as_0\,\overbrace{u_1 \cdots u_1}^{j_1 \text{ times }}\as_1\overbrace{u_2 \cdots u_2}^{j_2 \text{ times }}\as_2 \cdots \overbrace{u_k \cdots u_k}^{j_k \text{ times }}\as_k &&= \as_0\,\overbrace{x_1y_1 \cdots x_1y_1}^{B_1} \, \as_1\, \overbrace{x_2y_2 \cdots x_2y_2}^{B_2} \, \as_2 \cdots \overbrace{x_ky_k \cdots x_ky_k}^{B_k} \,\as_k\\
v &= \bs_0\,v_1 \cdots v_1 \, \bs_1 \,v_2 \cdots v_2 \, \bs_2 \, \cdots v_k \cdots v_k \, \bs_k &&= \bs_0\,y_1x_1 \cdots y_1x_1 \, \bs_1 \,y_2x_2 \cdots y_2x_2 \, \bs_2 \, \cdots y_kx_k \cdots y_kx_k \, \bs_k
\end{align*}
   \end{definition}
 Also, let $(x_i,y_i)$ denote a cut of $(\rho_{u_i},\rho_{v_i})$ for each $i \in [1,k]$.

\medskip
\noindent

In the next two subsections, we prove that whenever a differentially-pumped pair has a short cut, $M$ has a common witness (See \Cref{prop:short}). Otherwise, if it has a long cut then we show that there exist words $v, \rho$ such that $\fst(G_i^*) \subsetsim \rho^*$ for all $i \in [1,k]$, and $\fst(M) \sim \snd(M) \subsetsim v\rho^*$ where either $v= \epsilon$ or $v \sim \as_0 \cdots \as_k$ (See \Cref{prop:long}). As a consequence, \Cref{thm:conjugateMonoidClosure} follows.

\subsection{Short cut for Differentially-pumped pair}

In this subsection, we prove that whenever a differentially-pumped pair has a short cut, then $M$ has a common witness.
\begin{proposition}\label{prop:short}
    Assume that $M$ is conjugate and let $(u,v) \in M$ be a differentially-pumped pair from \Cref{def:diffpumped}. If $(u,v)$ has a short cut then $M$ has a common witness.
\end{proposition}

First, we prove some preparatory lemmas.
     
\begin{lemma}\label{claimblock}
Assume that $(u,v)$ from \Cref{def:diffpumped} has a short cut $(p,q)$ such that $p$ is short. Then the following conditions hold.
\begin{enumerate}
    \item Let $w_1 = [\as_0,p\bs_0]_L$. If $\as_0$ is a prefix of $p\bs_0$, then 
$w_1$ is an inner witness of $(u_1,v_1)$, and $B_1=w_1q_1$ and $B_1'=q_1w_1$ for some word $q_1$. Otherwise, $p\bs_0$ is a prefix of $\as_0$ and  $w_1$ is an outer witness of $(u_1,v_1)$, and $B_1 = q_1w_1$ and $B_1' = w_1q_1$ for some word $q_1$.
\item For $i\in [2,k]$, 
\begin{enumerate}
    \item if $w_{i-1}$ is an inner witness of $(u_{i-1},v_{i-1})$, then  $w_i = [\as_{i-1},w_{i-1}\bs_{i-1}]_L$ satisfies the following:
           If $\as_{i-1}$ is a prefix of $w_{i-1}\bs_{i-1}$, then $w_i$ is an inner witness of $(u_i,v_i)$, and
$B_i = w_iq_i$ and $B_i' = q_i w_i$ for some word $q_i$.
              Otherwise, $w_{i-1}\bs_{i-1}$ is a prefix of $\as_{i-1}$ and $w_i$ is an outer witness of $(u_i,v_i)$, and  $B_i = q_iw_i$ and $B_i' = w_i q_i$ for some word $q_i$.

              \item  If $w_{i-1}$ is an outer witness of $(u_{i-1},v_{i-1})$, then $w_i = [w_{i-1}\as_{i-1},\bs_{i-1}]_L$ satisfies the following: 
          If $w_{i-1}\as_{i-1}$ is a prefix of $\bs_{i-1}$, then $w_i$ is an inner witness of $(u_i,v_i)$, and $B_i = w_i q_i$ and $B_i' = q_i w_i$ for some word $q_i$.
           Otherwise, $\bs_{i-1}$ is a prefix of $w_{i-1}\as_{i-1}$ and $w_i$ is an outer witness of $(u_i,v_i)$, and   $B_i = q_iw_i$ and $B_i' = w_i q_i$ for some word $q_i$.
\end{enumerate}
\item  If $w_k$ is an inner witness of $(u_k,v_k)$, then $w_1=[\as_{k}\as_0,w_{k}\bs_{k}\bs_0]_L$. Moreover, if $w_1$ is an inner witness $(u_1,v_1)$, then $w_1 = (\as_{k}\as_0)^{-1}w_{k}\bs_{k}\bs_0$, and $w_1 = (w_{k}\bs_{k}\bs_0)^{-1}\as_{k}\as_0$ otherwise. If $w_k$ is an outer witness of $(u_k,v_k)$, then $w_1 = [w_k\as_{k}\as_0,\bs_{k}\bs_0]_L$. Moreover, if $w_1$ is an inner witness $(u_1,v_1)$, then $w_1 = (w_k\as_{k}\as_0)^{-1}\bs_{k}\bs_0$, and $w_1 = (\bs_{k}\bs_0)^{-1}w_k\as_{k}\as_0$ otherwise.
\end{enumerate}

\end{lemma}     

\begin{proof}
 \noindent 1.\quad    
     Let $w_1 = [\as_0,p\bs_0]_L$. There are two cases: either $\as_0$ is a prefix of $p\bs_0$, or $p\bs_0$ is a prefix of $\as_0$. In either case, we show that $w_1$ is a witness of $(u_1,v_1)$.

    \begin{itemize}
        \item \emph{$\as_0$ is a prefix of $p\bs_0$}: We show that $w_1$ is an inner witness of $(u_1,v_1)$, and there exists a word $q_1$ such that $B_1 = w_1q_1$ and $B_1' = q_1w_1$.

          Since $\as_0$ is a prefix of $p\bs_0$ and $p$ is short, $w_1 = \as_0^{-1}p\bs_0$ is a prefix of $B_1(u_1)^{-1}$ in $u$. We match  $(p \bs_0)^{-1}u$ with $\bs_0^{-1}vp^{-1}$. Since $\bs_0^{-1}vp^{-1}$  starts with $y_1x_1$ and since there is at least one factor $x_1y_1$ following $p\bs_0$, either $w_1 \in (x_1y_1)^*x_1$ (by Case~\ref{distinct} of \nameref{cutcorr}) or $w_1 \in (x_1y_1)^*$ and $x_1y_1 = y_1x_1$ (by \ref{same} of \nameref{cutcorr}). In the former case, $w_1$ is an inner witness of $(x_1y_1,y_1x_1) = (\rho_{u_1},\rho_{v_1})$ and hence is also an inner witness of $(u_1,v_1)$ by \Cref{samewitness}. Similarly, when $w_1 \in (x_1y_1)^*$ and $x_1y_1 = y_1x_1$, we get $w_1$ is an inner witness of $(x_1y_1,y_1x_1)$ as well as $(u_1,v_1)$. 

          Let $q_1 = w_1^{-1}B_1$. Clearly, $q_1$ is of the form $(y_1x_1)^*y_1$ or $(x_1y_1=y_1x_1)^*$.
          In both cases,
          after matching $q_1$ in block $B_1'$ of $v$, we observe that a factor equal to $w_1$ appears in the suffix of the block of $B_1'$, i.e., $B_1' = q_1w_1$ (as shown below). 
 \begin{align*}
  u &= \overbrace{\as_0\,\overbrace{x_1y_1 \cdots x_1}^{w_1}}^{p\bs_0} \overbrace{\cdots x_1y_1}^{q_1} \, \as_1\, x_2y_2 \cdots x_2y_2 \, \as_2 \cdots x_ky_k \cdots x_ky_k \,\as_k\\
v &= \bs_0\,\underbrace{y_1x_1 \cdots y_1}_{q_1} \underbrace{x_1 \cdots y_1x_1}_{w_1} \, \bs_1 \,y_2x_2 \cdots y_2x_2 \, \bs_2 \, \cdots y_kx_k \cdots \cdots y_kx_k \, \bs_k
\end{align*}
          \item  \emph{When $p\bs_0$ is a prefix of $\as_0$}: We show that $w_1$ is an outer witness of $(u_1,v_1)$, and there exists a word $q_1$ such that $B_1 = q_1w_1$ and $B_1' = w_1q_1$.

           Since $p\bs_0$ is a prefix of $\as_0$ and $|\as_0| \leq C < j_1$ , $|w_1| = |(p\bs_0)^{-1}\as_0|<|v_1^{j_1}|$. Therefore $w_1$ is a prefix of $B_1'(v_1)^{-1}$ in $v$. On matching $(p \bs_0)^{-1}u$ and $\bs_0^{-1}vp^{-1}$, since the block $B_1$ in $u$ starts with $x_1y_1$ and since there is at least one factor $y_1x_1$ in $B_1'$ after matching $w_1$, either $w_1 \in (y_1x_1)^*y_1$ (by Case~\ref{distinct} of \nameref{cutcorr}) or $w_1 \in (y_1x_1)^*$ and $x_1y_1 = y_1x_1$ (by \ref{same} of \nameref{cutcorr}).
           In the former case, $w_1$ is an outer witness of $(x_1y_1,y_1x_1) = (\rho_{u_1},\rho_{v_1})$ and hence is also an outer witness of $(u_1,v_1)$ by \Cref{samewitness}. When $w_1 \in (y_1x_1)^*$, we get $x_1y_1 = y_1x_1$ and hence $w_1$ is an outer witness of $(x_1y_1,y_1x_1)$ as well as $(u_1,v_1)$. 

          Let $q_1= (w_1)^{-1}B_1'$. As in the case above, $q_1 \in (x_1y_1)^*x_1$ or $q_1 \in (x_1y_1=y_1x_1)^*$. In both cases, after matching $q_1$ in block $B_1$ of $u$, a factor equal to $w_1$ appears in the suffix of the block of $B_1$, i.e., $B_1 = q_1w_1$ (as shown below).

          \begin{align*}
  u &= \overbrace{\as_0}^{p\bs_0w_1}\overbrace{x_1y_1 \cdots x_1}^{q_1}\overbrace{y_1\cdots x_1y_1}^{w_1} \, \as_1\, x_2y_2 \cdots x_2y_2 \, \as_2 \cdots x_ky_k \cdots x_ky_k \,\as_k\\
v &= \bs_0\,\,\,\underbrace{y_1x_1 \cdots y_1}_{w_1} \underbrace{x_1 \cdots y_1x_1}_{q_1} \, \bs_1 \,y_2x_2 \cdots y_2x_2 \, \bs_2 \, \cdots y_kx_k \cdots \cdots y_kx_k \, \bs_k
\end{align*}
      \end{itemize}

\noindent 2.\quad Proof by induction on $i\in [2,k]$. 
          The base case and the inductive step are identical. Hence we treat only the inductive step. 
          
          WLOG assume that $w_{i-1}$ is an inner witness of $(u_{i-1},v_{i-1})$. The case where $w_{i-1}$ is an outer witness of $(u_{i-1},v_{i-1})$ is symmetric. In this case, $B_{i-1} = w_{i-1}q_{i-1}$ and $B_{i-1}' = q_{i-1}w_{i-1}$ (as shown below). Moreover, $|w_{i-1}| \leq |u_{i-1}^{j_{i-1}}|$.

          \begin{align*}
  u &= \cdots \as_{i-2}\overbrace{x_{i-1}y_{i-1} \cdots x_{i-1}}^{w_{i-1}} \overbrace{\cdots x_{i-1}y_{i-1}}^{q_{i-1}} \, \as_{i-1}\, x_iy_i \cdots x_iy_i \cdots \\
v &= \cdots \bs_{i-2}\underbrace{y_{i-1}x_{i-1} \cdots }_{q_{i-1}} \underbrace{x_{i-1} \cdots y_{i-1}x_{i-1}}_{w_{i-1}} \, \bs_{i-1} \,y_ix_i \cdots y_ix_i \cdots
\end{align*}

On matching further, there are two cases for $w_i = [\as_{i-1},w_{i-1}\bs_{i-1}]_L$ as follows.
\begin{itemize}
    \item \emph{When $\as_{i-1}$ is a prefix of $w_{i-1}\bs_{i-1}$}: Here, $w_i = (\as_{i-1})^{-1}w_{i-1}\bs_{i-1}$ begins at the block $B_i$ in $u$ as depicted in \Cref{fig:cas1.a}. Since $|w_i| \leq |w_{i-1}\bs_{i-1}| \leq |u_{i-1}^{j_{i-1}}| + C < |u_i^{j_i}|/2$, $w_i$ ends within block $B_i$ and there exists at least one $x_iy_i$ in $B_i$ immediately following $w_i$. Let $B_i = w_i q_i$ for some word $q_i$. On further matching $q$ in $u$ and $v$, since the block $B_i'$ in $v$ starts with $y_ix_i$ and since $q_i$ consists at least one factor $x_1y_1$, either $w_i \in (x_iy_i)^*x_i$ (by Case~\ref{distinct} of \nameref{cutcorr}) or $w_i \in (x_iy_i)^*$ and $x_iy_i=y_ix_i$ (by \ref{same} of \nameref{cutcorr}). In the former case, $w_i$ is an inner witness of $(x_iy_i,y_ix_i) = (\rho_{u_i},\rho_{v_i})$ and hence also an inner witness of $(u_i,v_i)$ by \Cref{samewitness}. When $w_i \in (x_iy_i)^*$, we get $x_iy_i = y_ix_i$ and hence $w_i$ is an inner witness of $(x_iy_i,y_ix_i)$ as well as $(u_i,v_i)$.

          Since $B_i = w_iq_i$, after matching $q_i$ in block $B_i'$ of $v$, we observe that a factor equal to $w_i$ appears in the suffix of the block of $B_i'$, i.e., $B_i' = q_iw_i$ (as shown below). 

\begin{figure}[h]
          \begin{align*}
  u &= \cdots \as_{i-2}\overbrace{x_{i-1}y_{i-1} \cdots x_{i-1}}^{w_{i-1}} \overbrace{\cdots x_{i-1}y_{i-1}}^{q_{i-1}} \, \as_{i-1}\, \overbrace{x_iy_i \cdots }^{w_i}\overbrace{\cdots x_iy_i}^{q_i}\, \as_{i} \cdots \\
v &= \cdots \bs_{i-2}\underbrace{y_{i-1}x_{i-1} \cdots }_{q_{i-1}} \underbrace{x_{i-1} \cdots y_{i-1}x_{i-1}}_{w_{i-1}} \, \bs_{i-1} \underbrace{y_ix_i \cdots}_{q_i} \underbrace{\cdots y_ix_i}_{w_i}\, \bs_{i} \cdots
\end{align*}
\caption{$w_i$ when $w_{i-1}$ is an inner witness and $\as_{i-1}$ is a prefix of $w_{i-1}\bs_{i-1}$}
\label{fig:cas1.a}
\end{figure}

    \item When $w_{i-1}\bs_{i-1}$ is a prefix of $\as_{i-1}$: Here, $w_i = (w_{i-1}\bs_{i-1})^{-1}\as_{i-1}$ begins at the block $B_i'$ in $v$. Since $|w_i| \leq |\as_{i-1}| \leq C < |u_i^{j_i}|/2$, $w_i$ ends within block $B_i'$ and there exists at least one $y_ix_i$ in $B_i'$ immediately following $w_i$. Let $B_i' = w_i q_i$ for some word $q_i$. On further matching $q$ in $u$ and $v$, since the block $B_i$ in $u$ starts with $x_iy_i$ and since $q_i$ contains at least one $y_ix_i$, either $w_i \in (y_ix_i)^*y_i$ (by Case~\ref{distinct} of \nameref{cutcorr}) or $w_i \in (y_ix_i)^*$  and $x_iy_i = y_ix_i$ (by \ref{same} of \nameref{cutcorr}). In the former case, $w_i$ is an outer witness of $(x_iy_i,y_ix_i) = (\rho_{u_i},\rho_{v_i})$ and hence also an outer witness of $(u_i,v_i)$ by \Cref{samewitness}. When $w_i \in (y_ix_i)^*$, we get $x_iy_i = y_ix_i$ and hence $w_i$ is an outer witness of $(x_iy_i,y_ix_i)$ as well as $(u_i,v_i)$.

         Since $B_i' = w_iq_i$, after matching $q_i$ in block $B_i$ of $u$, a factor equal to $w_i$ appears in the suffix of the block of $B_i$, i.e., $B_i = q_iw_i$ (as shown below).

\begin{figure}[h]
           \begin{align*}
  u &= \cdots \as_{i-2}\overbrace{x_{i-1}y_{i-1} \cdots x_{i-1}}^{w_{i-1}} \overbrace{\cdots x_{i-1}y_{i-1}}^{q_{i-1}} \, \as_{i-1}\, \overbrace{x_iy_i \cdots }^{q_i}\overbrace{\cdots x_iy_i}^{w_i}\, \as_{i} \cdots \\
v &= \cdots \bs_{i-2}\underbrace{y_{i-1}x_{i-1} \cdots }_{q_{i-1}} \underbrace{x_{i-1} \cdots y_{i-1}x_{i-1}}_{w_{i-1}} \, \bs_{i-1} \underbrace{y_ix_i \cdots}_{w_i} \underbrace{\cdots y_ix_i}_{q_i}\, \bs_{i} \cdots
\end{align*}
\caption{$w_i$ when $w_{i-1}$ is an inner witness and $w_{i-1}\bs_{i-1}$ is a prefix of $\as_{i-1}$}
\label{fig:cas1.b}
\end{figure}
\end{itemize}
Thus, the claim holds for $i \in \{2, \ldots, k\}$.\\
\newline
\noindent 3.\quad
          We prove the case when both $w_1$ and $w_k$ are inner witnesses of $(u_1,v_1)$ and $(u_k,v_k)$ respectively (as shown below). All other cases are symmetric.

\begin{align*}
  u &= \overbrace{\as_0\,\overbrace{x_1y_1 \cdots x_1}^{w_1}}^{p\bs_0} \overbrace{\cdots x_1y_1}^{q_1} \, \as_1\, x_2y_2 \cdots x_2y_2 \, \as_2 \cdots \overbrace{x_ky_k \cdots x_k}^{w_k}\overbrace{\cdots x_ky_k}^{q_k} \,\as_k\\
v &= \bs_0\,\underbrace{y_1x_1 \cdots y_1}_{q_1} \underbrace{x_1 \cdots y_1x_1}_{w_1} \, \bs_1 \,y_2x_2 \cdots y_2x_2 \, \bs_2 \, \cdots \underbrace{y_kx_k \cdots}_{q_k}\underbrace{x_k \cdots y_kx_k}_{w_k} \, \bs_k
\end{align*}

We get that
\begin{align*}
    w_k\bs_k &= \as_kp &&\text{ (By matching $u$ and $v$)}\\
    w_k\bs_k\bs_0 &= \as_kp\bs_0 &&\text{ (Appending $\bs_0$)}\\
     &= \as_k\as_0w_1. &&\text{ (Subs. $p\bs_0 = \as_0w_1$)} \qedhere
\end{align*}
          
      \end{proof}

\begin{lemma}\label{claimwi}
       For each $i\in [1,k]$, let $w_i$ be the witness of $(u_i,v_i)$ given by \Cref{claimblock}. Then for each $j\in [1,k]$ such that $j>i$, the following holds.
       \begin{enumerate}
           \item If $w_{i}$ is an inner witness of $(u_{i},v_{i})$, then $w_j = [\as_i \cdots \as_{j-1}, w_i\bs_i \cdots \bs_{j-1}]_L$. More precisely, if $w_j$ is an inner witness of $(u_j,v_j)$, then $w_j = (\as_i \cdots \as_{j-1})^{-1}w_i\bs_i \cdots \bs_{j-1}$, otherwise $w_j = (w_i\bs_i \cdots \bs_{j-1})^{-1}\as_i \cdots \as_{j-1}$. 
           \item If $w_{i}$ is an outer witness of $(u_{i},v_{i})$, then $w_j = [w_i\as_i \cdots \as_{j-1}, \bs_i \cdots \bs_{j-1}]_L$. More precisely, if $w_j$ is an inner witness of $(u_j,v_j)$, then $w_j = (w_i\as_i \cdots \as_{j-1})^{-1}\bs_i \cdots \bs_{j-1}$, otherwise $w_j = (\bs_i \cdots \bs_{j-1})^{-1}w_i\as_i \cdots \as_{j-1}$.
       \end{enumerate}

\end{lemma}
\begin{proof}
 Proof by induction on $j-i$. Base case is when $j = i+1$ and the claim holds by \Cref{claimblock}. WLOG assume $w_i$ is an inner witness of $(u_i,v_i)$ and the case 1.~holds for $j$ with $i < j < k$, and we show that it also holds for $j+1$. There are two cases:
        \begin{itemize}
            \item \emph{When $w_j$ is an inner witness of $(u_j,v_j)$}: $w_j = (\as_i \cdots \as_{j-1})^{-1}w_i\bs_i \cdots \bs_{j-1}$ (by inductive hypothesis). From \Cref{claimblock}, we get that $w_{j+1}$ is a witness of $(u_{j+1},v_{j+1})$ and
            \begin{align*}
            w_{j+1} &= [\as_j,w_j\bs_j]_L && \text{(By \Cref{claimblock})}\\
            &=[\as_j,(\as_i \cdots \as_{j-1})^{-1}w_i\bs_i \cdots \bs_{j-1}\bs_j]_L && \text{(Subs. $w_j$)}\\
            &=[\as_i \cdots \as_{j-1}\as_j,w_i\bs_i \cdots \bs_{j-1}\bs_j]_L &&\text{(Since $[u,v]_L = [wu,wv]_L$ for all $w$)}
        \end{align*}
        If $w_{j+1}$ is an inner witness of $(u_{j+1},v_{j+1})$, then 
        \begin{align*}
            w_{j+1}&= (\as_j)^{-1}w_j\bs_j\\
            &= (\as_j)^{-1}(\as_i \cdots \as_{j-1})^{-1}w_i\bs_i \cdots \bs_{j-1}\bs_j &&\text{ (Subs. $w_j$ by I.H.)}\\
            &=(\as_i \cdots \as_{j-1}\as_j)^{-1}w_i\bs_i \cdots \bs_{j-1}\bs_j
        \end{align*}
        Similarly, if $w_{j+1}$ is an outer witness of $(u_{j+1},v_{j+1})$, then 
        \begin{align*}
            w_{j+1}&= (w_j\bs_j)^{-1}\as_j\\
            &= ((\as_i \cdots \as_{j-1})^{-1}w_i\bs_i \cdots \bs_{j-1}\bs_j)^{-1}\as_j &&\text{ (Subs. $w_j$ by I.H.)}\\
            &=(w_i\bs_i \cdots \bs_{j-1}\bs_j)^{-1}\as_i \cdots \as_{j-1}\as_j
        \end{align*}
    
         \item \emph{When $w_j$ is an outer witness}: $w_j = (w_i\bs_i \cdots \bs_{j-1})^{-1}\as_i \cdots \as_{j-1}$ (by inductive hypothesis). From \Cref{claimblock}, we get that $w_{j+1}$ is a witness of $(u_{j+1},v_{j+1})$ and
            \begin{align*}
            w_{j+1} &= [w_j\as_j,\bs_j]_L && \text{(By \Cref{claimblock})}\\
            &=[(w_i\bs_i \cdots \bs_{j-1})^{-1}(\as_i \cdots \as_{j-1})\as_j,\bs_j]_L && \text{(Subs. $w_j$)}\\
            &=[\as_i \cdots \as_{j-1}\as_j,w_i\bs_i \cdots \bs_{j-1}\bs_j]_L &&\text{($[u,v]_L = [wu,wv]_L$ for all $w$)}
        \end{align*}
         If $w_{j+1}$ is an inner witness of $(u_{j+1},v_{j+1})$, then 
        \begin{align*}
            w_{j+1}&= (w_j\as_j)^{-1}\bs_j\\
            &= ((w_i\bs_i \cdots \bs_{j-1})^{-1}\as_i \cdots \as_{j-1}\as_j)^{-1}\bs_j &&\text{ (Subs. $w_j$ by I.H.)}\\
            &=(\as_i \cdots \as_{j-1}\as_j)^{-1}w_i\bs_i \cdots \bs_{j-1}\bs_j
        \end{align*}
        Similarly, if $w_{j+1}$ is an outer witness of $(u_{j+1},v_{j+1})$, we can show that $w_{j+1} = (w_i\bs_i \cdots \bs_{j-1}\bs_j)^{-1}\as_i \cdots \as_{j-1}\as_j$.
        \end{itemize}
        The case where $w_i$ is an outer witness of $(u_i,v_i)$ is symmetric.
        \end{proof}

       \begin{lemma}For each $i\in [1,k]$, let $w_i$ be the witness of $(u_i,v_i)$ given by \Cref{claimblock}. 
           For each $i \in [1,k]$, if $w_i$ is an inner (resp.~outer) witness of $(u_i,v_i)$, then $w_i$ is also an inner (resp.~outer) witness of $(\as_i \cdots \as_k\as_0 \cdots \as_{i-1},\bs_i \cdots \bs_k\bs_0 \cdots \bs_{i-1}).$
               \end{lemma}

       \begin{proof}
           
       WLOG assume that $w_i$ is an inner witness of $(u_i,v_i)$. The other case is symmetric. Using \Cref{claimwi}, 
       \[w_k = [\as_i \cdots \as_{k-1},w_i\bs_i \cdots \bs_{k-1}]_L,\]
       \begin{equation}\label{eq:cl6:1}
         w_i = (\as_1 \cdots \as_{i-1})^{-1}w_1\bs_1\cdots \bs_{i-1}  
       \end{equation}

       We perform a case analysis on whether $w_1$ and $w_k$ are inner/outer witnesses. Assume that $w_1$ is as inner witness of $(u_1,v_1)$ and $w_k$ is an outer witness of $(u_k,v_k)$. Hence,
       \begin{equation}\label{eq:cl6:2}
           w_k = (w_i\bs_i \cdots \bs_{k-1})^{-1}\as_i \cdots \as_{k-1}
       \end{equation}
      Using Item 3 of \Cref{claimblock}, we get that
      \begin{equation}\label{eq:cl6:3}
          w_1 = (w_k\as_k\as_0)^{-1}\bs_k\bs_0.
      \end{equation}
      Therefore,
      \begin{align*}
          w_i &=(\as_1 \cdots \as_{i-1})^{-1}w_1\bs_1\cdots \bs_{i-1} &&\text{ (Using \Cref{eq:cl6:1}}\\
          &=(\as_1 \cdots \as_{i-1})^{-1}(w_k\as_k\as_0)^{-1}\bs_k\bs_0\bs_1\cdots \bs_{i-1} &&\text{ (Subs. \Cref{eq:cl6:3}}\\
          &=(w_k\as_k\as_0\as_1 \cdots \as_{i-1})^{-1}\bs_k\bs_0\bs_1\cdots \bs_{i-1}\\
          &=((w_i\bs_i \cdots \bs_{k-1})^{-1}\as_i \cdots \as_{k-1}\as_k\as_0\as_1 \cdots \as_{i-1})^{-1}\bs_k\bs_0\bs_1\cdots \bs_{i-1} &&\text{ (Subs. \Cref{eq:cl6:2}}\\
          &=(\as_i \cdots \as_{k-1}\as_k\as_0\as_1 \cdots \as_{i-1})^{-1}w_i\bs_i \cdots \bs_{k-1}\bs_k\bs_0\bs_1\cdots \bs_{i-1} 
      \end{align*}
      Hence $w_i$ is also an inner witness of $(\as_i \cdots \as_k\as_0 \cdots \as_{i-1},\bs_i \cdots \bs_k\bs_0 \cdots \bs_{i-1})$.
      All other cases for $w_1$ and $w_k$ are analogous.

\end{proof}
 
  \begin{lemma}\label{claim2Kleene}
     Assume that $M$ is conjugate and a differentially-pumped pair in $M$ has a short cut. If any two singleton reduxes of $M$ have unique common witnesses, then they are identical. Moreover, either both are inner, or both are outer witnesses.
  \end{lemma}
  \begin{proof}
      Let 
      \begin{align}
      M_i &= (\as_0 \cdots \as_{i-1}, \bs_0 \cdots \bs_{i-1})G_i^*(\as_i \cdots \as_k,\bs_i \cdots \bs_k)\\
      M_j &= (\as_0 \cdots \as_{j-1}, \bs_0 \cdots \bs_{j-1})G_j^*(\as_j \cdots \as_k,\bs_j \cdots \bs_k)
      \end{align}
      be two singleton reduxes with unique common witnesses $c_i$ and $c_j$ respectively. Using \Cref{singlemonomial}, we get that the set $G_i \cup \{(\as_i \cdots \as_k\as_0 \cdots \as_{i-1}, \bs_i \cdots \bs_k\as_0 \cdots \bs_{i-1})\}$ has a unique common witness, say $z_i$. Moreover, by construction, $z_i$ is also the unique common witness of the pairs $(u_i,v_i)$ used in the canonical pair from \Cref{def:diffpumped} and $(\as_i \cdots \as_k\as_0 \cdots \as_{i-1}, \bs_i \cdots \bs_k\bs_0 \cdots \bs_{i-1})$. Analogously, let $z_j$ be the unique common witness of  the set $G_j \cup \{(\as_j \cdots \as_k\as_0 \cdots \as_{j-1}, \bs_j \cdots \bs_k\as_0 \cdots \bs_{j-1})\}$ and  $\{(u_j,v_j),(\as_j \cdots \as_k\as_0 \cdots \as_{j-1}, \bs_j \cdots \bs_k\as_0 \cdots \bs_{j-1})\}$ where $(u_j,v_j)$ is from \Cref{def:diffpumped}.  
     From \Cref{claimwi}, $z_i=w_i$ and $z_j =w_j$. WLOG assume that $i < j$. There are four cases depending on whether $w_i$ and $w_j$ is an inner or outer witness.
      \begin{itemize}
                \item \emph{When $w_i$ and $w_j$ are inner witnesses}: By \Cref{claimwi}, $w_j = (\as_i \cdots \as_{j-1})^{-1}w_i\bs_i \cdots \bs_{j-1}$. By \Cref{g1monomialwitness}, $c_i = [\as_0 \cdots \as_{i-1}w_i,\bs_0 \cdots \bs_{i-1}]_R$ and $c_j = [\as_0 \cdots \as_{j-1}w_j,\bs_0 \cdots \bs_{j-1}]_R$. Assume that $c_j$ is a common inner witness. We show that $c_i=c_j$ and it is also an inner witness. 
          \begin{align*}
              c_j &= [\as_0 \cdots \as_{j-1}w_j,\bs_0 \cdots \bs_{j-1}]_R\\
              &=\as_0 \cdots \as_{j-1}w_j(\bs_0 \cdots \bs_{j-1})^{-1} &&\text{(By \Cref{g1monomialwitness})}\\
              &= \as_0 \cdots \as_{j-1}(\as_i \cdots \as_{j-1})^{-1}w_i\bs_i \cdots \bs_{j-1}(\bs_0 \cdots \bs_{j-1})^{-1} && \text{ (Subs. $w_j$) }\\
                &= \as_0 \cdots \as_{i-1}w_i(\bs_0 \cdots \bs_{i-1})^{-1} && \text{ (Simplifying) }\\
              &= [\as_0 \cdots \as_{i-1}w_i,\bs_0 \cdots \bs_{i-1}]_R\\
              &= c_i
          \end{align*}
          We also deduce that $c_i$ is also a common inner witness by \Cref{g1monomialwitness}. Similarly, we can show that when $c_j$ is a common outer witness of $M_j$, then $c_i=c_j$ is also a common outer witness of $M_i$.
          \item \emph{When $w_i$ and $w_j$ is a unique common outer witness}: the case the similar to above.
          \item \emph{When $w_i$ is a unique common inner witness and $w_j$ is a unique common outer witness}: By \Cref{claimwi}, $w_j = (w_i\bs_i \cdots \bs_{j-1})^{-1}\as_i \cdots \as_{j-1}$ . By \Cref{g1monomialwitness}, $c_i = [\as_0 \cdots \as_{i-1}w_i,\bs_0 \cdots \bs_{i-1}]_R$ and $c_j = [\as_0 \cdots \as_{j-1},\bs_0 \cdots \bs_{j-1}w_j]_R$. Assume that $c_j$ is a common inner witness. We show that $c_i=c_j$ and it is also an inner witness. 
          \begin{align*}
              c_j &= [\as_0 \cdots \as_{j-1},\bs_0 \cdots \bs_{j-1}w_j]_R\\
              &=\as_0 \cdots \as_{j-1}(\bs_0 \cdots \bs_{j-1}w_j)^{-1} &&\text{(By \Cref{g1monomialwitness})}\\
              &=\as_0 \cdots \as_{j-1}(\bs_0 \cdots \bs_{j-1}(w_i\bs_i \cdots \bs_{j-1})^{-1}\as_i \cdots \as_{j-1})^{-1} && \text{ (Subs. $w_j$) }\\
              &=\as_0 \cdots \as_{j-1}(\bs_0 \cdots \bs_{i-1}(w_i)^{-1}\as_i \cdots \as_{j-1})^{-1} && \text{ (Simplifying) }\\
              &=\as_0 \cdots \as_{i-1}(\bs_0 \cdots \bs_{i-1}(w_i)^{-1})^{-1}\\
              &=\as_0 \cdots \as_{i-1}w_i(\bs_0 \cdots \bs_{i-1})^{-1}\\
              &=[\as_0 \cdots \as_{i-1}w_i,\bs_0 \cdots \bs_{i-1}]_R\\
              &=c_i
          \end{align*}
           We also deduce that $c_i$ is also a common inner witness by \Cref{g1monomialwitness}. Similarly, we can show that when $c_j$ is a common outer witness of $M_j$, $c_i=c_j$ is also a common outer witness of $M_i$.
          \item \emph{When $w_i$ is a unique common outer witness and $w_j$ is a unique common inner witness}: the case is similar to above. \qedhere
       \end{itemize}
  \end{proof}
  \begin{proof}[Proof of \Cref{prop:short}]

    Now we prove that $M$ has a common witness. 
    Let $M_i$ for $i\in [1,k]$ denote the singleton redux of $M$ obtained from $M$ by substituting all but $G_i^*$ by $(\varepsilon,\varepsilon)$.
  Since $M$ is conjugate, each of its singleton redux, $M_i$ for $i \in [1,k]$ is also conjugate. By \Cref{singlemonomial}, each $M_i$ has a common witness.

    Since the redux of $M$ is nonempty, if a singleton redux has infinitely many common witnesses, then it is a subset of powers of the primitive root of the redux by \Cref{singlemonomial}\ref{Thm7:3}. Hence, any witness for the primitive root of the redux is a witness for that singleton redux of $M$. 
    If all singleton reduxes have infinitely many witnesses, they are all subsets of powers of the 
primitive root. Hence, all of them share a common witness, and by \Cref{mcommonwitness} $M$ has a common witness.

Otherwise assume that there exists a singleton redux with a unique common witness. 
By \Cref{claim2Kleene}, if any two singleton reduxes have unique witnesses, then they are the same. Thus, all singleton reduxes with unique witnesses share a unique witness; 
Let it be $z$. Since $z$ is also a witness for the redux. By \Cref{samewitness}, it is a witness for any power of the primitive root of the redux. Therefore, $z$ is also a witness for all singleton reduxes with infinitely many witnesses by \Cref{singlemonomial}\ref{Thm7:3}. Hence, $z$ is a common witness of each singleton reduxes of $M$. By virtue of \Cref{mcommonwitness}, $M$ has a common witness when $p$ is short.

\end{proof}

\subsection{Long cut for differentially-pumped pair}

       In this subsection we prove the following 
       
       \begin{proposition}\label{prop:long} Assume that $M$ is conjugate. 
         Whenever a differentially-pumped pair of $M$ has a long cut, then
         \[M \subseteq (\as_0,\bs_0)(\rho_1,\rho_1')^* (\as_{1},\bs_{1})
           \cdots 
           (\rho_k,\rho_k')^*(\as_k,\bs_k)\]
           such that $\rho_i\sim \rho_i' \sim \rho_j \sim \rho_j'$ for all $i,j\in [1,k]$. 
           Let $\bar{\as} = \as_1, \ldots, \as_{k-1}, \as_k\as_0$, $\bar{\bs} = \bs_1, \ldots, \bs_{k-1}, \bs_k\bs_0$. Let $\bar{\varrho} $ and $\bar{\varrho}'$ denote the sequence of pairs 
\begin{align}
    \bar{\varrho} &=(\rho_1,\rho_2), \ldots, (\rho_{k-1},\rho_k), (\rho_k,\rho_1)\\
    \bar{\varrho}' &=(\rho'_1,\rho'_2), \ldots, (\rho'_{k-1},\rho'_k), (\rho'_k,\rho'_1).
\end{align}
 Then, one of the following holds.    
         \begin{itemize}
             \item 
             For each $i\in [1,k]$, $(\bar{\alpha})_i$ and $(\bar{\beta})_i$are inner witnesses of $(\bar{\varrho})_i$ and $(\bar{\varrho}')_i$ respectively, and we have  \[\fst(M) \sim \snd(M)  \subsetsim \rho_i^*  \sim  \rho_i'^*   \] for each for each $i\in [1,k]$.
             \item There is a unique $i\in [1,k]$ such that 
             $(\bar{\alpha})_i$ is not an inner witness of $(\bar{\varrho})_i$ and there is a unique $j\in [1,k]$ such that 
        $(\bar{\beta})_j$ is not an inner witness of $(\bar{\varrho}')_i$, and 
           we have  
           \begin{align*} \fst(M) &\sim \snd(M)  \subsetsim \as_i\dots \as_k\as_0
             \dots a_{i-1}\rho_i^*  \sim \bs_j\dots \bs_k\bs_0
             \dots \bs_{j-1}{\rho_j'}^*.
             \end{align*}
         \end{itemize} 
       \end{proposition}
        
       Towards this, we first show that each $G_i^*$, for $i \in [1,k]$,  have infinitely many witnesses and hence $G_i^* \subseteq (\rho_i,\rho_i')^*$ for some conjugate primitive pair $(\rho_i,\rho_i')$ for $i\in [1,k]$. Moreover, for all $i,j \in [1,k]$, $\rho_i \sim \rho_i' \sim \rho_j \sim \rho_j'$. 
       \begin{lemma}\label{claim:rhoconjugate}
            Assume that a differentially-pumped pair $(u,v)$ from \Cref{def:diffpumped} has a long cut. Then the primitive roots of pairs $(u_i,v_i)$ for $i \in [1,k]$ in $(u,v)$ are conjugate, i.e., 
       
       \begin{equation}\label{eq:rhoconjugate}
       \rho_{u_1} \sim \rho_{v_1} \sim \rho_{u_2} \sim \rho_{v_2} \cdots \sim \rho_{u_k} \sim \rho_{v_k}.
       \end{equation}
       \end{lemma}

      \begin{proof}
       There are three cases for a long cut $(p,q)$ in $(u,v)$:
       \begin{enumerate}
           \item The cut $p$ in $u$ ends in the second half of the block $B_i$ for $1 \leq i < k$.
           \item The cut $p$ in $u$ ends after the block $B_{i-1}$ and before the second half of the block $B_i$ for $1 < i \leq k$.
           \item The cut $p$ in $u$ ends in the second half of the block $B_k$.
       \end{enumerate}
       Note that whenever the cut $p$ in $u$ ends in the first half of block $B_1$, or ends after the block $B_k$, either $p$ or $q$ is short. Since $p$ and $q$ are long, $|p\bs_0| > |\as_0u_1^{j_1-1}| \geq|u_1^{j_1-1}|   \geq 2k \cdot \max(C,\ell)$ (since $j_1 > 2k \cdot \max(C,\ell)$). Therefore
       \begin{align}\label{eq:|p|}
            |p| > 2k \cdot \max(C,\ell) - |\bs_0|.
       \end{align}

       For $i \in [1,k]$, let $B_i^f$ and $B_i^s$  represent the first and second halves of the block $B_i$, respectively.
       Similarly, let $B_i'^f$ and $B_i'^s$  represent the first and second halves of the block $B_i'$, respectively.
       Clearly, $|B_i^f| = |B_i'^f|$ and $|B_i^s| = |B_i'^s|$. Since $|B_1| > 2k \cdot \max(C,\ell)$ $|B_1^f|$, we have $|B_1^s| \geq k \cdot max(C,\ell)$. Similarly, from \Cref{eq:diffpump}, we get that for each $i \in [2,k]$,
       \begin{equation}
           |B_1| + |B_2| + \cdots |B_{i-1}| + C \leq |B_i^f| = |B_i^s| 
        \end{equation}
       \begin{align*}
  u' &= \as_0\,\overbrace{u_1 \cdots }^{B_1^f} \overbrace{\cdots  u_1}^{B_1^s}\as_1\overbrace{u_2 \cdots}^{B_2^f}\overbrace{\cdots u_2}^{B_2^s}\as_2 \cdots \overbrace{u_k \cdots}^{B_k^f}\overbrace{\cdots u_k}^{B_k^s}\as_k \\
v' &= \bs_0\,\underbrace{v_1 \cdots }_{B_1'^f} \underbrace{\cdots  v_1}_{B_1'^s} \bs_1 \underbrace{v_2 \cdots}_{B_2'^f}\underbrace{\cdots v_2}_{B_2'^s} \bs_2 \, \cdots \underbrace{v_k \cdots}_{B_k'^f}\underbrace{\cdots v_k}_{B_k'^s} \bs_k 
\end{align*}
       We now show that, in all the above cases for long cut, \Cref{eq:rhoconjugate} holds.

       \begin{itemize}
           
           \item \emph{When the cut $p$ in $u$ ends in  $B_i^s$ for $1 \leq i < k$}: 

           \begin{align*}
  u &= \overbrace{\as_0 \cdots B_i^f B_i^s}^{p} \as_i \cdots B_{k}\as_k\\
v &= \bs_0 \cdots {B_i'^f}{B_i'^s} \bs_i \cdots {B_k'^f}\underbrace{B_k'^s\bs_k}_{p}
\end{align*}

Since $|p| < |\as_0B_1 \cdots \as_{i-1}B_i| < |B_1| + \cdots + |B_{i}| + C < |B_k'^s|$, $p$ in $v$ is matched/contained in $B_k'^s\bs_k$.  The length of each block $B_j$ for $j \in [1,i]$ in $u$ is at least $\ell$. Therefore, by virtue of \Cref{commonfactor}, $\rho_{u_j} \sim \rho_{v_k}$. Hence, for all $j \in [1,i]$,
\begin{equation}\label{eqcase1:1}
    \rho_{v_j} \sim \rho_{u_j} \sim \rho_{v_k} \sim \rho_{u_k}.
\end{equation}
Now, on matching the suffixes of $q$ in $u$ and $v$, we get that a prefix  of $B_k$ of length $|p(\bs_k)^{-1}| - |\as_k|$ matches with suffix of $B_{k-1}'\bs_{k-1}$. Hence, there exists a common factor between blocks $B_k$ and $B_{k-1}'$ of length atleast 
 \begin{align}
     |p(\bs_k)^{-1}| - |\as_k| - |\bs_{k-1}| &= |p| - |\bs_k|-|\as_k| - |\bs_{k-1}| \\
     &\geq 2k \cdot \max(C,\ell) - |\bs_0|- |\bs_k|-|\as_k| - |\bs_{k-1}| && (\text{By \Cref{eq:|p|}})\\
 &\geq 2k \cdot \max(C,\ell) - 2C\\ 
     &\geq 2(k-1) \cdot \max(C,\ell) \\
     &> \ell.
 \end{align} By \Cref{commonfactor}, we get that $\rho_{v_k} \sim \rho_{u_k} \sim \rho_{v_{k-1}} \sim \rho_{u_{k-1}}$. Continuing inductively, for all $j \in [i+1,k]$, we get that there exists a common factor between blocks $B_j$ and $B_{j-1}'$ of length atleast 
 \begin{align}
     |p(\bs_k)^{-1}| - \sum_{n \in [j, \ldots, k]} |\as_n| - \sum_{n \in [j-1, \ldots, k-1]}|\bs_{n}| &=|p| -|\bs_k|-  \sum_{n \in [j, \ldots, k]} |\as_n| - \sum_{n \in [j-1, k-1]}|\bs_{n}| \\
     &\geq   2k \cdot \max(C,\ell) - 2C \\
&\geq 2(k-1) \cdot \max(C,\ell) \\
&> \ell.
 \end{align}
 Hence, for $j \in [i+1,k]$, by \Cref{commonfactor}, we deduce that
\begin{equation}\label{eq:case1:2}
    \rho_{u_k} \sim \rho_{v_{k-1}} \sim \rho_{u_{k-1}} \cdots \sim \rho_{v_{i+1}} \sim \rho_{u_{i+1}}.
\end{equation}
By combining Equations~\ref{eqcase1:1} and \ref{eq:case1:2}, we get that \Cref{eq:rhoconjugate} holds.

  \item \emph{When the cut $p$ in $u$ ends after the block $B_{i-1}$ and before the block $B_i^s$ for some $1 < i \leq k$}:

       \begin{align*}
  u &= \overbrace{\as_0 \cdots{B_{i-1}}\as_{i-1}{B_i^f}}^{p}\overbrace{{B_i^s} \as_i\cdots \as_k}^{q} \\
v &= \bs_0 \cdots {B_{i-1}'} \bs_{i-1} {B_i'^f}{B_i'^s} \bs_i \cdots \bs_k
\end{align*} 

In this case, while comparing the prefixes of $q$ in $u$ and $v$, since $|B_i^f|,|B_i^s| > |B_1| + \cdots |B_{i-1}| + C = |B_1'| + \cdots |B_{i-1}'| + C$, the prefix $\bs_0 B_1' \cdots B_{i-1}'\bs_{i-1}$ of $q$ in $v$ is matched within the block $B_i$. Since length of each block $B_j'$ for $j \in [1,i-1]$  is at least $\ell$, by virtue of \Cref{commonfactor}, $\rho_{v_j} \sim \rho_{u_i}$. Therefore, for all $j \in [1,i-1]$
\begin{equation}\label{eqcase3:1}
    \rho_{u_j} \sim \rho_{v_j} \sim \rho_{v_i} \sim \rho_{u_i}.
\end{equation}

On further matching $q$ in $v$ and $u$ , there exists a suffix of the block $B_i'$ of length at least 
\begin{align}
    |\bs_0 B_1' \cdots B_{i-1}'\bs_{i-1}B_i'| - |\as_{i-1}B_i| &= |\bs_0 B_1' \cdots B_{i-1}'\bs_{i-1}| - |\as_{i-1}| 
    \\
    &\geq |B_1'| -  |\as_{i-1}| &&(\text{Since $i
    \geq 2$})\\
    &\geq 2k\max(C,\ell) - |\as_{i-1}| 
\end{align} to match with $\as_iB_{i+1}$. Hence, $B_i'$ and $B_{i+1}$ share a common factor of length atleast 
\begin{align}
    2k\max(C,\ell) - |\as_{i-1}| - |\as_{i}| &\geq 2k\max(C,\ell) - C \\
    &\geq 2(k-1)\max(C,\ell)\\ 
    &> \ell.
\end{align}  From \Cref{commonfactor}, we get that $\rho_{u_i} \sim \rho_{v_i} \sim \rho_{u_{i+1}} \sim \rho_{v_{i+1}}$.

Continuing inductively, for all $j \in [i, k-1]$, we get that blocks $B_j'$ and $B_{j+1}$ share a common factor of length atleast 
$$2k \max(C,\ell) - \sum_{n \in [i-1, j]}|\as_n| - \sum_{n \in [i, j-1]}|\bs_{n}| \geq 2(k-1) \cdot \max(C,\ell) > \ell.$$

From \Cref{commonfactor}, we get that $\rho_{v_j} \sim \rho_{u_{j+1}}$ for $j \in [i, k-1]$. By transitivity of conjugacy, we get
\begin{equation}\label{eqcase3:2}
\rho_{u_i} \sim \rho_{v_i} \sim \rho_{v_{i+1}} \sim \rho_{v_{i+1}} \sim \cdots \sim \rho_{u_k} \sim \rho_{v_k}.
\end{equation}
Combining Equations~\ref{eqcase3:1} and \ref{eqcase3:2}, we get that \Cref{eq:rhoconjugate} holds.

           \item \emph{The cut $p$ in $u$ ends within  $B_k^s$}: 
           \begin{align*}
  u &= \overbrace{\as_0\,{B_1^f} {B_1^s}\as_1{B_2^f}{B_2^s}\as_2 \cdots {B_k^f}{B_k^s}}^{p}\overbrace{\as_k^{\phantom{f}}}^{q} \\
v &= \bs_0\,{B_1'^f} {B_1'^s} \bs_1 {B_2'^f}{B_2'^s} \bs_2 \, \cdots{B_k'^f}{B_k'^s} \bs_k
\end{align*}

Since $q$ is not short,  $|q\as_0| > |\bs_0v_1^{j_1-1}| \geq |v_1^{j_1-1}| \geq 2k\max(C,\ell)$ (since $j_1 > 2k \cdot \max(C,\ell))$. Also, by assumption, $|q| \leq |B_k'^s\as_k|$. While matching the $q$ in $u$ and $v$, there exists a maximal $i \in [1,k-1]$, such that $\bs_0B_1' \cdots B_i'^f$ is a proper prefix of $q$ in $v$. Since  $|B_j'^f|$ for $j \in [1,i]$ is atleast $\geq k\max(C,\ell) > \ell$, and $|B_j'^f|$ is matched within the block $B_k$, by virtue of \Cref{commonfactor}, $\rho_{v_j} \sim \rho_{u_k}$. Therefore, for all $j \in [1,i]$
\begin{equation}\label{eqcase4:1}
    \rho_{u_j} \sim \rho_{v_j} \sim \rho_{v_k} \sim \rho_{u_k}.
\end{equation}

If $i = k-1$, then we are done. Otherwise, assume that $i \leq k-2$. Now, matching the prefix of $p$ in $u$ and $v$, we get that a suffix of block $B_{i+1}$ of length atleast 
\begin{align}
    |\as_0 \cdots B_i\as_iB_{i+1}| - |q^{-1}(\bs_0B_1' \cdots B_{i+1}'\bs_{i+1})| 
    &= |q| - (|\as_0 \cdots \as_i| - |\bs_0 \cdots \bs_{i+1}|) \\
    &\geq |q|-C \\
    &\geq 2k\max(C,\ell) \\
    &> \ell 
\end{align}
 matches with block $B_{i+2}'$, and hence $\rho_{u_{i+1}} \sim \rho_{v_{i+2}}$ by \Cref{commonfactor}.  
Continuing inductively, we get for $j \in [(i+1),(k-1)]$, block $B_{j}$ and $B_{j+1}'$ shares a common factor of length atleast $2(k-1)\max(C,\ell) > \ell$, and by \Cref{commonfactor},
\begin{equation}\label{eqcase4:2}
    \rho_{v_j} \sim \rho_{u_{j}} \sim \rho_{v_{j+1}} \sim \rho_{u_{j+1}}.
\end{equation}
Combining Equations~\ref{eqcase4:1} and \ref{eqcase4:2}, we get that \Cref{eq:rhoconjugate} holds.
      \qedhere

       \end{itemize}
       \end{proof}
As a consequence of \Cref{claim:rhoconjugate}, we get the following.

\begin{corollary}\label{cor:Mform}
    Assume that $M$ is conjugate. Whenever a differentially-pumped pair in $M$ has a long cut, each Kleene closure $G_i^*$ of $M$, for $i \in [1,k]$, has infinitely many common witnesses. Moreover,
    \begin{equation}\label{eq:Mform}
           M \subseteq (\as_0,\bs_0)(\rho_1,\rho_1')^* (\as_{1},\bs_{1})
           \cdots 
           (\rho_k,\rho_k')^*(\as_k,\bs_k).
       \end{equation}
for some $\rho_1 \sim \rho_1' \sim \cdots \rho_k \sim \rho_k'$.
\end{corollary}
\begin{proof}
     By virtue of Item 3 of \Cref{def:diffpumpedcanonical}, if there exists $G_i^*$, for some $i\in[1,k]$, with a unique common witness, then every differentially pumped pair $(u,v)$ constructed satisfies $\rho_{u_j} \not \sim \rho_{u_{\ell}}$ for some $j,\ell \in [1,k], j \neq \ell$. Therefore, if a a differentially-pumped pair in $M$ has a long cut, each Kleene closure $G_i^*$ has infinitely many witnesses, and therefore $G_i^* \subseteq (\rho_i,\rho_i')^*$ for some primitive pair $(\rho_i,\rho_i')$. Since by  \Cref{claim:rhoconjugate}, $\rho_i \sim \rho_j$ for all $i,j\in[1,k]$ the claim follows.
\end{proof}

       Therefore, in the long cut case, 
       \begin{equation}\label{eq:Mform}
           M \subseteq (\as_0,\bs_0)(\rho_1,\rho_1')^* \cdots (\as_{k-1},\bs_{k-1})(\rho_k,\rho_k')^*(\as_k,\bs_k).
       \end{equation}
such that $\rho_1 \sim \rho_1' \sim \cdots \rho_k \sim \rho_k'$. Hence, $ |\rho_1| = \cdots = |\rho_k| = |\rho'_1|  = \cdots = |\rho'_k|$, and let $d$ denote this common length where $d= |\rho_1| = \cdots = |\rho_k|$. 
We use the following shorthand for the rest of the section.
\begin{definition}
Let $\bar{\as} = \as_1, \ldots, \as_{k-1}, \as_k\as_0$ and $\bar{\bs} = \bs_1, \ldots, \bs_{k-1}, \bs_k\bs_0$. Let $\bar{\varrho} $ and $\bar{\varrho}'$ denote the sequence of pairs 
\begin{align}
    \bar{\varrho} &=(\rho_1,\rho_2), \ldots, (\rho_{k-1},\rho_k), (\rho_k,\rho_1)\\
    \bar{\varrho}' &=(\rho'_1,\rho'_2), \ldots, (\rho'_{k-1},\rho'_k), (\rho'_k,\rho'_1)
\end{align}

\end{definition}

\begin{definition}[$d$-near]
    Assume that $(u,v) \in M$ is conjugate with a cut $(x,y)$. We say $\as \in \bar{\as}$ and $\bs \in \bar{\bs}$ overlap  if some position of $\as$ matches (recall \Cref{def:matchcut}) with some position of $\bs$ w.r.t.~the cut $(x,y)$. Note that if $\as$ and $\bs$ do not overlap, then the nearest positions according to the matching are either the first position of $\as$ and the last position of $\bs$ or vice versa. We say $\as$ and $\bs$ are \emph{$d$-near} w.r.t.~the cut $(x,y)$ for some $d \in \N$ if either they overlap or there are at most $d$-many positions between their nearest positions. 
\end{definition}

For a sequence of words $\bar{s} = s_1, s_2, \cdots, s_n$, we denote by $(\bar{s})_i$, for $i \in [1,n]$, as the $i$-th word $s_i$ in the sequence.

  \begin{lemma}\label{claim:reduxform}
       Assume that a differentially-pumped pair $(u,v)$ from \Cref{def:diffpumped} has a long cut $(p,q)$, and let $d = |\rho_1| = \cdots = |\rho_k|$. Then precisely one of the following is true.
       \begin{itemize}
           \item

           None of the pairs $((\bar{\as})_i,(\bar{\bs})_j)$ for $i, j\in[1,k]$ are $d$-near. In this case, the word $(\bar{\as})_i$ is an inner witness of the pair $(\bar{\varrho})_i$, and the word $(\bar{\bs})_i$ is an inner witness of $(\bar{\varrho}')_i$ for each $i\in [1,k]$.
           \item 
 There exist a unique pair $((\bar{\as})_i,(\bar{\bs})_j)$ for some $i\neq j\in[1,k]$ that are $d$-near w.r.t.~the cut $(p,q)$. 
Moreover, $(\bar{\as})_l$  is an inner witness of the pair $(\bar{\varrho})_l$ for each $l\in [1,k]\setminus \{i\}$, and $(\bar{\bs})_l$  is an inner witness of the pair $(\bar{\varrho}')_l$ for each $l \in [1,k]\setminus \{j\}$.
\end{itemize}
       \end{lemma}

       \begin{proof} 
        In the definition of $(u,v)$, recall that $\ell = 2\cdot \max\{|u_i| \mid i \in [1,k]\} \geq 2\cdot d$.

    Firstly, we prove that in the long cut $(p,q)$ of $(u,v)$ there is at most one pair $(\bar{\as})_i$ and $(\bar{\bs})_j$ for some $i,j \in [1,k]$ that are $d$-near. 

Assume there exist $i,{i'}, j, {j'} \in [1,k]$ such that $(\bar{\as})_i$ and $(\bar{\bs})_j$, as well as, $(\bar{\as})_{i'}$ and $(\bar{\bs})_{j'}$ are $d$-near. By definition of the pair $(u,v)$, we have $i=i'$ if and only if $j=j'$ since a word in $\bar{\as}$ can be $d$-near with at most one word in $\bar{\bs}$ and vice-versa. Hence, assume for the sake of contradiction that $i \neq i'$ and $j \neq j'$. 

We claim that $i \neq j$ and $i' \neq j'$. For the sake of contradiction assume that $i=j$, i.e., $(\bar{\as})_i$ and $(\bar{\bs})_i$ are $d$-near. We do a case analysis. When $i=k$, i.e., when $(\bar{\as})_k = \as_k\as_0$ and $(\bar{\bs})_k = \bs_k\bs_0$ are $d$-near then either $p$ or $q$ is short since $\min(|p|,|q|) \leq \max(|\as_k\as_0|, |\bs_k\bs_0|) + d$. Therefore, 
\begin{align}
    \min(|p\bs_0|,|q\as_0|) &\leq 2C +d \\
    &< 2C + \ell\\
    &<2k\cdot \max(C,\ell) &&\text{(Since $k \geq 2$)}
\end{align}

\noindent contradicting the assumption that the cut is long. Next assume that $i \neq k$. In this case, note that $(\bar{\as})_i = \as_i$ and $(\bar{\bs})_i = \bs_i$. Since $\as_i$ and $\bs_i$ are $d$-near, matching the suffix $\as_iB_{i+1}\as_{i+1} \cdots \as_{k-1}B_k\as_k$ with $\bs_iB'_{i+1}\bs_{i+1} \cdots \bs_{k-1}B'_k\bs_k$, we get a suffix of length $d + \sum_{l=i}^{k}|\as_l| - \sum_{l=i}^{k}|\bs_l| \leq d+ C$ either in $u$ or $v$ to be matched with the prefix of the other. Thus, $\min(|p|,|q|) \leq C + d$, and hence $\min(|p\bs_0|,|q\as_0|) \leq 2C + d < 2k\cdot \max(C,\ell)$. Hence, either $p$ or $q$ is short, contradicting the assumption that the cut is long.  Therefore, we conclude that $i \neq j$. By a similar argument $i' \neq j'$. 

WLOG we can assume that $i <i'$. We have two cases: either $j<j'$ or $j'<j$. Assume $j <j'$. Since $(\bar{\as})_i$ and $(\bar{\bs})_j$, as well as, $(\bar{\as})_{i'}$ and $(\bar{\bs})_{j'}$ are $d$-near, by matching the infix of $u$ and $v$ defined by $(\bar{\as})_i, (\bar{\as})_{i'}$ and $(\bar{\bs})_j, (\bar{\bs})_{j'}$, namely  $(\bar{\as})_iB_{i+1} \cdots B_{i'}(\bar{\as})_{i'}$ and $(\bar{\bs})_jB_{j+1}' \cdots B'_{j'}(\bar{\bs})_{j'}$, we note  that 
\begin{equation}\label{ijeq}
    \left | \, |(\bar{\as})_iB_{i+1} \cdots B_{i'}(\bar{\as})_{i'}| - |(\bar{\bs})_jB'_{j+1} \cdots B'_{j'}(\bar{\bs})_{j'}| \,  \right | \leq C + 2\cdot d
\end{equation}
If $i' > j'$, then $|B_{i'}| > 2 \cdot (|B'_1| + \cdots + |B'_{j'}| + C)$ by definition of pair $(u,v)$. Hence, 
\begin{align}
    \left | \, |(\bar{\as})_iB_{i+1} \cdots B_{i'}(\bar{\as})_{i'}| - |(\bar{\bs})_jB'_{j+1} \cdots B'_{j'}(\bar{\bs})_{j'}| \, \right | \geq 1/2\cdot |B_{i'}| > k\cdot \max(C,\ell) > C+2 \cdot d.
\end{align}
Therefore,  contradicting \Cref{ijeq}. Similarly, the case when $j' > i'$ also contradicts \Cref{ijeq}. Thus, the case of $j < j'$ leads to a contradiction. 

Next assume that $j' <j$. Observe that $(\bar{\as})_iB_{i+1} \cdots B_{i'}(\bar{\as})_{i'}$ and $(\bar{\bs})_jB'_{j+1} \cdots B'_k(\bar{\bs})_k B'_1 \cdots B'_{j'}(\bar{\bs})_{j'}$ (or equivalently the factors $(\bar{\as})_{i'}B_{i'+1} \cdots B_k(\bar{\as})_k B_1 \cdots B_{i}(\bar{\as})_{i}$ and $(\bar{\bs})_{j'}B'_{j'+1} \cdots B'_{j}(\bar{\bs})_{j}$) are the factors defined by $(\bar{\as})_i, (\bar{\as})_{i'}$ in $u$ and $(\bar{\bs})_j, (\bar{\bs})_{j'}$ in $v$). Let $d'=\left | \, |(\bar{\as})_iB_{i+1} \cdots B_{i'}(\bar{\as})_{i'}| - |(\bar{\bs})_jB'_{j+1} \cdots B'_k(\bar{\bs})_k B'_1 \cdots B'_{j'}(\bar{\bs})_{j'}|\, \right |$. Since $(\bar{\as})_i$ and $(\bar{\bs})_j$, as well as, $(\bar{\as})_{i'}$ and $(\bar{\bs})_{j'}$ are $d$-near,
\begin{equation}\label{i'jeq}
    d' \leq C + 2\cdot d.
\end{equation}
If $i'\neq k$, then $|B'_{k}| > 2 \cdot (|B_1| + \cdots + |B_{i'}| + C)$. Hence, $ d' \geq 1/2\cdot |B'_{k}| > C + 2\cdot d$. Therefore,  contradicting \Cref{i'jeq}. 
The case when $i' = k$ is symmetric to the above case by considering the factors $(\bar{\as})_k B_1 \cdots B_{i}(\bar{\as})_{i}$ and $(\bar{\bs})_{j'}B'_{j'+1} \cdots B'_{j}(\bar{\bs})_{j}$ defined by $(\bar{\as})_i, (\bar{\as})_{i'}$ in $u$ and $(\bar{\bs})_j, (\bar{\bs})_{j'}$ in $v$.

Therefore, there does not exist $i,i',j,j' \in [1,k]$, $(i,j) \neq (i',j')$, such that $(\bar{\as})_i$ and $(\bar{\bs})_j$, as well as, $(\bar{\as})_i'$ and $(\bar{\bs})_j'$ are $d$-near, i.e., there exists at most one pair $(i,j)$ such that $(\bar{\as})_i$ and $(\bar{\bs})_j$ are $d$-near. Moreover, $i \neq j$. 

For some $i \in [1,k]$, assume that $(\bar{\as})_i \in \bar{\as}$ is not $d$-near with any word in $\bar{\bs}$. This implies that in the matching of $u$ and $v$ w.r.t.~the cut $(p,q)$, the factor $\rho_i (\bar{\as})_i \rho_{i+1}$ (with $\rho_{i+1} = \rho_1$ when $i=k$) is matched within some block of $\rho_j'^*$ for some $j \in [1,k]$. Since $\rho_i \sim \rho_{i+1} \sim \rho_j'$, it follows from \Cref{prop:iwredux} that $(\bar{\as})_i$ is an inner witness of $(\bar{\rho})_i = (\rho_i,\rho_{i+1})$. Similarly, for any $i \in [1,k]$, if $(\bar{\bs})_i \in \bar{\bs}$ is not $d$-near with any word in $\bar{\as}$, then $(\bar{\bs})_i$ is an inner witness of $(\bar{\rho'})_i = (\rho'_i,\rho'_{i+1})$ (with $\rho'_{i+1} = \rho'_1$ when $i=k$). Hence, the claim is proved.
       \end{proof}

\begin{lemma}\label{interval:associative}
  Let $1\leq i < j \leq k$. Assume that   $(\bar{\as})_\ell$ is an inner witness of $(\bar{\rho})_\ell = (\rho_{\ell},\rho_{\ell+1})$ (with $\rho_{\ell+1}= \rho_1$ when $\ell=k$), for all $\ell \in [i,j]$. 
    Then, 
   $(\bar{\as})_i \cdots (\bar{\as})_j$ is an inner witness of $(\fst((\bar{\rho})_i),\snd((\bar{\rho})_j)$, and  
         moreover, 
        \begin{align}\rho_i^*(\bar{\as})_i \rho_{i+1}^*(\bar{\as})_{i+1} \cdots (\bar{\as})_{j}\rho^*_{j+1} &= \rho_i^*(\bar{\as})_i \cdots (\bar{\as})_j\\ &= (\bar{\as})_i \cdots (\bar{\as})_j \rho_{j+1}^*.
        \end{align}

Further, if $1 \leq j < i \leq k$ and $(\bar{\as})_\ell$ is an inner witness of $(\bar{\rho})_\ell = (\rho_{\ell},\rho_{\ell+1})$ (with $\rho_{\ell+1}= \rho_1$ when $\ell=k$), for all $\ell \in [i,k] \cup [1, j]$, then
 $(\bar{\as})_i \cdots (\bar{\as})_k(\bar{\as})_1\cdots (\bar{\as})_j$ is an inner witness of $(\fst((\bar{\rho})_i),\snd((\bar{\rho})_j)$, and  moreover, 
 \begin{align}\rho_i^*(\bar{\as})_i \rho_{i+1}\cdots  (\bar{\as})_{k}\rho^*_{1}(\bar{\as})_1 \rho_{2}\cdots  (\bar{\as})_{j}\rho^*_{j+1} &= \rho_i^*(\bar{\as})_i \cdots (\bar{\as})_k(\bar{\as})_{1} \cdots (\bar{\as})_j\\
 &= (\bar{\as})_i \cdots (\bar{\as})_k(\bar{\as})_{1} \cdots (\bar{\as})_j \rho_{j+1}^*.
 \end{align}

\end{lemma}
\begin{proof} 
Since by assumption  $\ \rho_\ell^{n}(\bar{\as})_\ell = (\bar{\as})_\ell\rho_{\ell+1}^{n}$ for all $\ell \in [i,j]$ and $\forall n\geq 0$, we have
    \begin{align}\rho_i^*(\bar{\as})_i \rho_{i+1}^*(\bar{\as})_{i+1}\cdots  (\bar{\as})_{j-1}\rho^*_{j}(\bar{\as})_{j}\rho^*_{j+1} 
&=\rho_i^*(\bar{\as})_i \rho_{i+1}^*(\bar{\as})_{i+1}\cdots(\bar{\as})_{j-1}\rho^*_{j}(\bar{\as})_{j}\\
&=\rho_i^*(\bar{\as})_i \rho_{i+1}^*(\bar{\as})_{i+1}\cdots \rho^*_{j-1} (\bar{\as})_{j-1}(\bar{\as})_{j}\\  &=\rho_i^*(\bar{\as})_i \cdots (\bar{\as})_{j} && (\text{By induction on the interval $[i,j]$})
        \end{align}
By an analogous argument, 
\begin{align}\rho_i^*(\bar{\as})_i \rho_{i+1}^*(\bar{\as})_{i+1}\cdots  (\bar{\as})_{j-1}\rho^*_{j}(\bar{\as})_{j}\rho^*_{j+1}  &=(\bar{\as})_i \cdots (\bar{\as})_{j}\rho_{j+1}^*. 
\end{align}
From the above, we infer that $(\bar{\as})_i \cdots (\bar{\as})_{j}$
 is an inner witness of the pair $(\rho_i, \rho_{j+1})=(\fst((\bar{\rho})_i),\snd((\bar{\rho})_j)$.

 Next assume that $1 \leq j \leq i \leq k$ and $(\bar{\as})_\ell$ is an inner witness of $(\bar{\rho})_\ell = (\rho_{\ell},\rho_{\ell+1})$ (with $\rho_{\ell+1}= \rho_1$ when $\ell=k$), for all $\ell \in [i,k] \cup [1, j]$. By the above observation, 
\begin{align}
\rho_i^*(\bar{\as})_i \rho_{i+1}\cdots  (\bar{\as})_{k}\rho^*_{1}  &=(\bar{\as})_i \cdots (\bar{\as})_{k}\rho_{1}^*\\
&=\rho_{i}^*(\bar{\as})_i \cdots (\bar{\as})_{k}\label{eq:97}
\end{align}
 \begin{align}
\rho_1^*(\bar{\as})_1 \rho_{2}\cdots  (\bar{\as})_{j}\rho^*_{j+1}  &=(\bar{\as})_1 \cdots (\bar{\as})_{j}\rho_{j+1}^*\\
&=\rho_{1}^*(\bar{\as})_1 \cdots (\bar{\as})_{j} \label{eq:99}
\end{align}
From the above equations, we get that
\begin{align}
    \rho_i^*(\bar{\as})_i \rho_{i+1}\cdots  (\bar{\as})_{k}\rho^*_{1}(\bar{\as})_1 \rho_{2}\cdots  (\bar{\as})_{j}\rho^*_{j+1} &= \rho_i^*(\bar{\as})_i \rho_{i+1}\cdots  (\bar{\as})_{k}\rho^*_{1}(\bar{\as})_1 \cdots (\bar{\as})_{j} &&\text{(Using \Cref{eq:99})}\\
    &= \rho_{i}^*(\bar{\as})_i \cdots (\bar{\as})_{k}(\bar{\as})_1 \cdots (\bar{\as})_{j}  &&\text{(Using \Cref{eq:97})}\\
    &=(\bar{\as})_i \cdots (\bar{\as})_{k}(\bar{\as})_1 \cdots (\bar{\as})_{j}\rho_{j+1}^* &&\text{(Similarly)}
\end{align}

From the above, we infer that $(\bar{\as})_i \cdots (\bar{\as})_{k}(\bar{\as})_1 \cdots (\bar{\as})_{j}$
 is an inner-witness of the pair $(\rho_i, \rho_{j+1})=(\fst((\bar{\rho})_i),\snd((\bar{\rho})_j)$.
 \end{proof}

\begin{proof}[Proof of \Cref{prop:long}]

From \Cref{cor:Mform}, we get that $M \subseteq (\as_0,\bs_0)(\rho_1,\rho_1')^* (\as_{1},\bs_{1})
           \cdots 
           (\rho_k,\rho_k')^*(\as_k,\bs_k)$
for some $\rho_1 \sim \rho_1' \sim \cdots \rho_k \sim \rho_k'$. 
By virtue of \Cref{claim:reduxform}, there are two cases.

\medskip
\noindent
\emph{Case 1: When the word $(\bar{\as})_i$ is an inner witness of $(\bar{\rho})_i$ for each $i \in [1,k]$.} Notice that by \Cref{interval:associative}, $(\bar{\as})_1 \cdots (\bar{\as})_{k}$ is an inner witness of $(\rho_1,\rho_1)$, and hence $(\bar{\as})_1 \cdots (\bar{\as})_{k} \in \rho_1^*$ . We show that $\fst(M)$ is a cyclic shift of $\rho_1^*$ as follows.  
\begin{align}
    \fst(M) &\subseteq \as_0\rho_1^*\as_1\rho^*_2\as_2 \cdots \rho^*_k\as_k\\
    &\sim \rho_1^*\as_1\rho^*_2\as_2 \cdots \rho^*_k\as_k\as_0\\
    &= \rho_1^*(\bar{\as})_1\rho^*_2(\bar{\as})_2 \cdots \rho^*_k(\bar{\as})_k\\
    &= \rho_1^*(\bar{\as})_1(\bar{\as})_2 \cdots(\bar{\as})_k &&\text{(By \Cref{interval:associative})}\\
    &\subseteq \rho_1^*. &&\text{(Since $(\bar{\as})_1 \cdots (\bar{\as})_{k}$ is an i.w of $(\rho_1,\rho_1)$) }
\end{align}
By a similar argument, we show that $\snd(M) \subsetsim \rho_1'^*$. Therefore, since $M$ is conjugate, $M \subsetsim (\rho_1,\rho_1')^*$ and hence has a common witness. Moreover, since $\rho_1' \sim \rho_1$, $\fst(M)$ and $\snd(M)$ are cyclic shifts of $\rho_1^*$.

\medskip
\noindent
\emph{Case 2: When there exists a unique pair $(i,j)$, $i\neq j\in[1,k]$, such that for all $l\in [1,k]\setminus \{i\}$, $(\bar{\as})_l$  is an inner witness of the pair $(\bar{\varrho})_l$, and for all $l \in [1,k]\setminus \{j\}$, $(\bar{\bs})_l$ is an inner witness of the pair $(\bar{\varrho}')_l$.}
\begin{align}
    \fst(M) &\subseteq \as_0\rho_1^*\as_1\rho^*_2\as_2 \cdots \rho^*_k\as_k\\
    &\sim \rho_1^*\as_1\rho^*_2\as_2 \cdots \rho^*_k\as_k\as_0\\
    &= \rho_1^*(\bar{\as})_1\rho^*_2(\bar{\as})_2 \cdots \rho^*_k(\bar{\as})_k\\
    &\sim (\bar{\as})_i\rho_{i+1}^* \cdots (\bar{\as})_k \rho_{1}^* (\bar{\as})_1 \cdots (\bar{\as})_{i-1}\rho^*_i\\
    &= (\bar{\as})_i \cdots (\bar{\as})_k(\bar{\as})_1 \cdots (\bar{\as})_{i-1}\rho^*_i &&\text{(By \Cref{interval:associative})}
\end{align}

 Therefore, we have  $\fst(M) \sim \snd(M) \subsetsim  (\bar{\as})_i \cdots (\bar{\as})_k(\bar{\as})_1 \cdots (\bar{\as})_{i-1}(\rho_i)^*$. 
 By an analogous argument,  $\fst(M) \sim \snd(M) \subsetsim  (\bar{\bs})_j \cdots (\bar{\bs})_k(\bar{\bs})_1 \cdots (\bar{\bs})_{j-1}(\rho_j')^*$, and the claim follows.

\end{proof}

\section{Computing Witness of a Sumfree Expression}\label{Subsec:computingwitness}

In this section, we give a decision procedure to compute the common witness of a sumfree expression, if it exists. A sumfree expression can have no common witness, a unique common witness, or infinitely many common witnesses. Thus, the set of common witnesses (abbreviated as the \emph{witness set}) is either empty, singleton, or infinite. Whenever there are infinitely many common witnesses for an expression, the witnesses are the same as those of its primitive root (\Cref{infmanycwcorr}). In that case we compute the primitive root as their finite representation.

The following proposition shows that there is a bound to the size of the unique common witness of two conjugate pairs if it exists, which aids in computing the common witness of two pairs in linear time.
\begin{proposition}\label{witlengthG}
If two conjugate primitive pairs $(u_1,v_1)$ and $(u_2,v_2)$ have a unique common witness $z$, then $|z| \leq 2 \cdot \max(|u_1|,|u_2|)$.
\end{proposition}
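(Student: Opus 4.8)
The plan is to prove the contrapositive-flavoured statement by contradiction: assume $(u_1,v_1)$ and $(u_2,v_2)$ are conjugate primitive pairs with a \emph{unique} common witness $z$, and suppose $|z| > 2\max(|u_1|,|u_2|)$; I will derive that $(u_1,v_1)=(u_2,v_2)$, so that $\{(u_1,v_1),(u_2,v_2)\}$ is a singleton set of a conjugate primitive pair, which has infinitely many common witnesses by \Cref{infmanycw} --- contradicting uniqueness of $z$.

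First I would reduce to the case where $z$ is a common \emph{inner} witness. A common witness is by definition a common inner or a common outer witness; if $z$ is a common outer witness of $\{(u_1,v_1),(u_2,v_2)\}$, then $z$ is a common inner witness of $\{(v_1,u_1),(v_2,u_2)\}$, and the set of common witnesses (inner or outer) of the latter set coincides with that of the former, so $z$ remains the unique common witness there; since $|v_i|=|u_i|$, the target bound is unchanged. So assume $u_1 z = z v_1$ and $u_2 z = z v_2$ (the case $z=\epsilon$ being trivial). Next, iterating as in the proof of \Cref{powerconj} gives $u_i^k z = z v_i^k$ for all $k\geq 1$, and since $|u_i|=|v_i|$, choosing $k$ with $k|u_i|\geq |z|$ shows that $z$ is a prefix of $u_i^k$, hence of $u_i^\omega$, for $i=1,2$. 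Thus $z$ is a common prefix of $u_1^\omega$ and $u_2^\omega$ of length $|z| > 2\max(|u_1|,|u_2|) \geq |u_1|+|u_2|-\gcd(|u_1|,|u_2|)$, the last inequality being the elementary estimate $\gcd(|u_1|,|u_2|)\geq |u_1|-|u_2|$. By the theorem of Fine and Wilf (\Cref{finewilf}), $u_1$ and $u_2$ are powers of a common word, and since both are primitive, $u_1=u_2=:u$. Then $u z = z v_1$ and $u z = z v_2$ force $z v_1 = z v_2$, hence $v_1=v_2$, so $(u_1,v_1)=(u_2,v_2)$, which gives the desired contradiction and therefore $|z|\leq 2\max(|u_1|,|u_2|)$.

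I do not expect a real obstacle here; the only point requiring care is the reduction from the outer-witness case to the inner-witness case, where one must be explicit that ``common witness'' quantifies over both inner and outer witnesses and that passing from $(u_i,v_i)$ to $(v_i,u_i)$ interchanges the two notions while preserving word lengths, so both the witness set and the claimed bound are left invariant. The remainder is one arithmetic check plus a single application each of Fine--Wilf and \Cref{infmanycw}.
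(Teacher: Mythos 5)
Your proof is correct and follows essentially the same route as the paper's: both establish that $z$ is a long common prefix of $u_1^\omega$ and $u_2^\omega$, invoke Fine--Wilf (\Cref{finewilf}) to conclude $u_1=u_2$ (and hence $v_1=v_2$) by primitivity, and then derive a contradiction with uniqueness via \Cref{infmanycw}. One small slip worth fixing: your stated justification ``$\gcd(|u_1|,|u_2|)\geq |u_1|-|u_2|$'' is false in general, but it is also unnecessary, since $2\max(|u_1|,|u_2|)\geq |u_1|+|u_2|\geq |u_1|+|u_2|-\gcd(|u_1|,|u_2|)$ holds trivially, so the chain of inequalities and the rest of the argument go through unchanged.
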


\begin{proof}
Let $z$ be a common inner witness. Therefore, $z = (x_1y_1)^{n_1}x_1 = (x_2y_2)^{n_2}x_2$ for some $n_1,n_2 \geq 0$ where $(x_i,y_i)$ is a cut of the primitive pair $(u_i,v_i)$ for $i \in \{1,2\}$. We claim either $n_1$ or $n_2$ is less than $2$. Suppose not, i.e., $n_1 \geq 2$ and $n_2 \geq 2$. Thus $u_1^\omega$ and $u_2^\omega$ share a common prefix of length at least $|u_1| + |u_2|$. From \Cref{finewilf}, they have the same primitive root. It implies that $x_1y_1 = x_2y_2$ since $u_1$ and $u_2$ are primitive words. Since $(x_1y_1)^{n_1}x_1 = (x_2y_2)^{n_2}x_2$, $x_1y_1 = x_2y_2$ and $|x_1|,|x_2| < |x_1y_1|$, we obtain $n_1 = n_2$, and hence, $x_1 = x_2$. This implies $y_1 = y_2$ and hence, $y_1x_1 = y_2x_2$. Both $(u_1,v_1)$ and $(u_2,v_2)$ are the same word; thus, they have infinitely many common witnesses, that is a contradiction. Hence $|z| \leq 2 \cdot \mathit{max}\{|u_1|,|u_2|\}$.

The case for common outer witness is symmetric.
\end{proof}
The above proposition holds true for any two conjugate pairs (not necessarily primitive) by \Cref{samewitnessext}.

\begin{proposition}\label{existswitness}
The witness set of two conjugate pairs can be computed in linear time.
\end{proposition}

\begin{proof}
Since the common witness for a set and its primitive root are the same (\Cref{samewitnessext}), we compute the witness set for the primitive roots of the given two conjugate pairs. We can compute the primitive roots in linear time by \Cref{computeroot}.
Let $G = \{(u_1,v_1), (u_2,v_2)\}$ be the set of primitive roots of the given conjugate pairs and let $(x_i,y_i)$ be the cut of $(u_i,v_i)$ for $i \in \{1,2\}$. The cut of the primitive pair $(u_i,v_i)$ can be computed in linear time by finding the first nontrivial occurrence of $v_i$ in $u_iu_i$.

One of the following possibilities holds true for $G$: it has no common witness, a unique common witness, or infinitely many common witnesses. The following algorithm outlines the computation of the witness set of $G$:
\begin{enumerate}
\item Check if the primitive pairs are identical, i.e., verify if $u_1 = u_2$ and $v_1 = v_2$. If yes, then $G$ has infinitely many common witnesses by \Cref{infmanycwcorr}. The witness is finitely represented by the primitive pair $(u_1,v_1)$. This step takes linear time w.r.t.~the length of the primitive pairs.
\item If the pairs are not identical, then check if $G$ has a unique common witness using \Cref{witlengthG} as follows: WLOG assume that $|u_1| > |u_2|$. According to \Cref{witlengthG}, if a unique common witness exists for $G$, its length is at most $2 \cdot \max(|u_1|,|u_2|) = 2 \cdot |u_1|$. Thus, it suffices to check whether $(x_1y_1)^\omega$ and $(x_2y_2)^\omega$ share equal prefixes of length $|x_1|$ or $|x_1y_1x_1|$, that also end in $x_2$. If it is satisfied, then $G$ has a unique common witness. This step can be performed in linear time w.r.t.~the length of the primitive pairs.
\item If none of the above holds, then $G$ has no common witness.
\end{enumerate}
The overall complexity of the algorithm is linear w.r.t~the length of the primitive roots of the given pairs.
\end{proof}

The witness set of a sumfree expression is equal to the intersection of witness sets of each of its singleton reduxes. So first, we show how to compute the witness set of a sumfree expression with only one Kleene star, in effect the witness set of a singleton redux. Using this procedure, we show how to compute the witness set of a general sumfree expression.
\begin{lemma}\label{computeMwit}
Let $M= (\as_0,\bs_0)E^*(\as_1,\bs_1)$ be a sumfree expression. Given the witness set of $E$, we can compute the witness set of $M$ in time $\mathcal{O}(m + n)$ where $m$ is the size of the expression $M$, and $n$ is the size of the witness of $E$.
\end{lemma}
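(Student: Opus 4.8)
The strategy is to reduce the witness set of $M$ to the common witness set of the \emph{finite} object $E \cup \{(\as_1\as_0,\bs_1\bs_0)\}$ and then translate through the explicit formulas of \Cref{g1monomialwitness}. Concretely, a word is a common witness of $L(E)\cup\{(\as_1\as_0,\bs_1\bs_0)\}$ iff it is a common witness of $L(E)$ \emph{and} a witness of the single pair $(\as_1\as_0,\bs_1\bs_0)$, so this set is just the intersection of the given witness set of $E$ with the witness set of $(\as_1\as_0,\bs_1\bs_0)$. First I would form $\as_1\as_0$, $\bs_1\bs_0$ and the redux $(\as_0\as_1,\bs_0\bs_1)$ in time $\mathcal O(m)$, and test conjugacy of $(\as_1\as_0,\bs_1\bs_0)$ by \Cref{isconjugate}. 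If $(\as_1\as_0,\bs_1\bs_0)$ is not conjugate, or if the given witness set of $E$ is empty, then either $E^*$ is not conjugate (\Cref{conjugacy}) or the pair has no witness; in either case \Cref{boundedkleenestar} and \Cref{gmonomialcut} force $M$ to be non-conjugate, and the procedure returns the empty witness set.

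Otherwise $(\as_1\as_0,\bs_1\bs_0)$ is conjugate and the witness set of $E$ is either a single word $z_E$ or, by \Cref{infmanycwcorr}, the witness set of a primitive pair $(\rho,\rho')$ with $|\rho|+|\rho'|=\mathcal O(n)$. In the first case I would check whether $z_E$ is a witness of $(\as_1\as_0,\bs_1\bs_0)$ by testing the equalities $\as_1\as_0 z_E=z_E\bs_1\bs_0$ and $z_E\as_1\as_0=\bs_1\bs_0 z_E$; if neither holds the intersection is empty and $M$ is non-conjugate, and if one holds then $z'=z_E$ is the unique common witness of $E\cup\{(\as_1\as_0,\bs_1\bs_0)\}$. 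In the second case I would compute the witness set of the two-pair set $\{(\rho,\rho'),(\as_1\as_0,\bs_1\bs_0)\}$ via \Cref{existswitness}. Either way we obtain the common witness set of $E\cup\{(\as_1\as_0,\bs_1\bs_0)\}$ in one of three normal forms. If it is empty, return the empty set. If it is the witness set of a primitive pair, then $E\cup\{(\as_1\as_0,\bs_1\bs_0)\}$ has infinitely many common witnesses, so by \Cref{g1monomialwitness}, case \ref{Thm7:3}, the set $M$ is a set of powers of the primitive root of its redux; I would compute the primitive root of $(\as_0\as_1,\bs_0\bs_1)$ by \Cref{computeroot} and return it as the finite representation of the (infinite) witness set of $M$.

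The remaining subcase is that $E\cup\{(\as_1\as_0,\bs_1\bs_0)\}$ has a unique common witness $z'$. I would first decide whether $z'$ is inner or outer using the same equality test. If $z'$ is inner, \Cref{g1monomialwitness}, case \ref{Thm7:1}, says $M$ has the unique common witness $z=[\as_0 z',\bs_0]_R=[\as_1,z'\bs_1]_L$, which is an inner witness when $|\as_0 z'|\ge|\bs_0|$ and an outer witness otherwise; if $z'$ is outer, \Cref{g1monomialwitness}, case \ref{Thm7:2}, gives $z=[\as_0,\bs_0 z']_R=[z'\as_1,\bs_1]_L$, again classified as inner or outer by the stated length comparison. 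Each such delay is a single word quotient on strings of length $\mathcal O(m+n)$, so this step is linear, and we output $\{z\}$.

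For the running time, all concatenations, equality tests and delay computations are linear in $m+n$, while the three nontrivial ingredients — the conjugacy test (\Cref{isconjugate}), the primitive-root computations (\Cref{computeroot}), and the two-pair witness-set computation (\Cref{existswitness}) — are each quadratic in the length of their inputs, which is $\mathcal O(m+n)$; hence the whole procedure runs in time $\mathcal O((m+n)^2)$. The mathematical content is entirely carried by \Cref{gmonomialcut} and \Cref{g1monomialwitness}, which guarantee that these cases are exhaustive and that the witness of $M$ has exactly the claimed closed form; I expect the main difficulty to be not in any single step but in faithfully tracking which of the three normal forms of a witness set is in play and whether the current witness is inner or outer, since a mismatch there silently breaks the translation formulas.
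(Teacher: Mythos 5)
Your proposal follows the same high-level route as the paper (reduce to the witness set of $E\cup\{(\as_1\as_0,\bs_1\bs_0)\}$, then translate through the closed-form formulas of \Cref{g1monomialwitness}), but the first reduction step has a genuine gap that makes the procedure incorrect. You assert that a word is a common witness of $L(E)\cup\{(\as_1\as_0,\bs_1\bs_0)\}$ iff it is a common witness of $L(E)$ and a witness of the one extra pair, i.e.\ that the witness set of the union is the plain set intersection of the two witness sets. This equivalence only holds in the forward direction. By \Cref{def:commonwitness}, a common witness of a set is either a common \emph{inner} witness of every pair or a common \emph{outer} witness of every pair; the notion does not mix orientations. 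A word that is a common inner witness of $L(E)$ and an outer (but not inner) witness of $(\as_1\as_0,\bs_1\bs_0)$ lies in your intersection but is not a common witness of the union. Concretely, take $E=\{(ab,ba),(ac,ca)\}$, whose unique common witness is $a$ and is inner-only, and take $(\as_1\as_0,\bs_1\bs_0)=(ba,ab)$: then $a\cdot ba = ab\cdot a$ passes your ``test both equalities'' step, you would set $z'=a$ and feed it into the formulas of \Cref{g1monomialwitness}, but $E\cup\{(ba,ab)\}$ in fact has no common witness and $M$ is not conjugate, so the output would be a spurious word.

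The fix is exactly what the paper's proof does: when $E$ has a unique common witness $z$, the recursive call must report whether $z$ is a common \emph{inner} or common \emph{outer} witness, and you test only the matching equality $\as_1\as_0 z=z\bs_1\bs_0$ (inner) or $z\as_1\as_0=\bs_1\bs_0 z$ (outer), returning the empty witness set if that single equality fails. Equivalently, the intersection must be taken separately on inner witness sets and on outer witness sets rather than on their unions. Your second branch (witness set of $E$ given by a primitive pair, handled via \Cref{existswitness}) is fine, since \Cref{existswitness} computes the common-witness set of the two-element set with the correct orientation semantics; and the rest of your case structure and the $\mathcal O((m+n)^2)$ running-time accounting match the paper. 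You even flag ``tracking inner versus outer'' as the likely pitfall at the end, but the procedure as written does not actually carry that information across the reduction, which is precisely where it breaks.
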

\begin{proof}
If redux of $M$ is empty, then the witness set of $M$ is same as the witness set of $E$ by \Cref{conjugacy}. Now assume that $M$ has a nonempty redux. From \Cref{singlemonomial}, $M$ has a common witness iff $E \cup \{(\as_1\as_0,\bs_1\bs_0)\}$ has a common witness. The common witness of $M$ is computed from the common witness of $E$ and $(\as_1\as_0,\bs_1\bs_0)$.

The idea is to check if a common witness exists for $E \cup \{(\as_1\as_0,\bs_1\bs_0)\}$ using \Cref{gmonomialcut}. If it exists, using that we compute the common witness for $M$.
There are two possibilities for common witnesses of $E$:
\begin{enumerate}
\item\label{Mwit:1} $E$ has a unique common inner (\textit{resp.}~outer) witness $z$. By \Cref{gmonomialcut:2} of \Cref{gmonomialcut}, it suffices to check if $z$ is a common inner (\textit{resp.}~outer) witness of $(\as_1\as_0,\bs_1\bs_0)$. This can be checked in $\mathcal{O}(m+n)$ time using \Cref{uz=zv}. If so, $z$ is the common witness of $E \cup \{(\as_1\as_0,\bs_1\bs_0)\}$. Now compute the common witness of $M$ using \Cref{singlemonomial}\ref{Thm7:1} (\textit{resp.}~\Cref{singlemonomial}\ref{Thm7:2}). This can be computed in $\mathcal{O}(m+n)$ time. Otherwise, $E \cup \{(\as_1\as_0,\bs_1\bs_0)\}$ has no common witness and hence, $M$ has no common witness by \Cref{singlemonomial}.

\item $E$ has infinitely many common witnesses. In this case, the witnesses of $E$ are the same as that of its primitive root, say $(\rho,\rho^\prime)$. From \Cref{gmonomialcut:1} of \Cref{gmonomialcut}, it suffices to check if $(\rho,\rho^\prime)$ and $(\as_1\as_0,\bs_1\bs_0)$ has a common witness. For this, first check if the primitive root of $(\as_1\as_0,\bs_1\bs_0)$ is equal to $(\rho,\rho^\prime)$. This step takes time $\mathcal{O}(m +n)$. We have two cases:
\begin{enumerate}[label=(\alph*)]
\item If $(\as_1\as_0,\bs_1\bs_0)$ have same primitive root as that of $E$, then $E$ and $(\as_1\as_0,\bs_1\bs_0)$ have infinitely many common witnesses by \Cref{infmanycwcorr}. In this case, $M$ is a set of powers of the primitive root of the redux by \Cref{singlemonomial}\ref{Thm7:3}. Thus $M$ has infinitely many witnesses. Compute the primitive root of its redux using \Cref{computeroot}. This step takes $\mathcal{O}(m)$ time.

\item Otherwise, compute the unique common witness of $(\rho,\rho^\prime)$ and $(\as_1\as_0,\bs_1\bs_0)$ if it exists using \Cref{existswitness}. If so, we are back to \emph{Case 1}; else $M$ has no common witness. This step takes $\mathcal{O}(m+n)$ time.
\end{enumerate}
\end{enumerate}
\end{proof}

Using the above algorithm, we compute the common witness of a general sumfree expression as follows.
\begin{lemma}\label{computeEwit}
Let $M$ be a sumfree expression. Given the witness set of each Kleene star in $M$, we can compute the witness set of $M$ in time $\mathcal{O}(m \cdot (m+n))$ where $m$ is the size of the expression and $n$ is the maximum size among the given witnesses.
\end{lemma}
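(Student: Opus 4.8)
The plan is to reduce the witness set of a general sumfree expression to those of its singleton reduxes, each of which has exactly one Kleene star, and then to intersect the results. Write $M = (\as_0,\bs_0)E_1^*(\as_1,\bs_1)\cdots E_k^*(\as_k,\bs_k)$ and, for $1\le i\le k$, let $M_i$ denote the singleton redux that keeps only $E_i^*$. By \Cref{mcommonwitness}, a word is a common witness of $M$ if and only if it is a common witness of every $M_i$; hence the witness set $W(M)$ of $M$ equals $\bigcap_{i=1}^{k} W(M_i)$. Since $|M_i|\le m$ and we are given the witness set of each $E_i$, we first call \Cref{computeMwit} once for each $i$ to compute $W(M_i)$ in time $\mathcal{O}((m+n)^2)$ each, for a total of $\mathcal{O}(m\cdot(m+n)^2)$. (If some $E_i$ has empty witness set, so does $M_i$ by \Cref{boundedkleenestar} and \Cref{kleenecw}, and then $W(M)=\emptyset$; if $M$ contains no Kleene star at all, it is a single pair whose witness set is computed directly by \Cref{existswitness}.)

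It then remains to intersect the $k$ witness sets $W(M_1),\ldots,W(M_k)$. By \Cref{infmanycwcorr}, each $W(M_i)$ is empty, or a singleton $\{z_i\}$ with $|z_i|=\mathcal{O}(m+n)$ (using \Cref{witlengthG}), or infinite, in which case \Cref{computeMwit} returns it finitely represented by a conjugate primitive pair $(\rho_i,\rho_i')$ whose witness set it is. We handle the intersection by cases. If any $W(M_i)$ is empty, output $\emptyset$. Otherwise, if at least one $W(M_i)$ is a singleton, fix such a word $z$: all singleton witness sets must equal $\{z\}$ (compare the strings), and for each infinite $W(M_j)$ we check, via \Cref{uz=zv}, whether $z$ is a witness of $(\rho_j,\rho_j')$; output $\{z\}$ if all these tests pass, and $\emptyset$ otherwise. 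If every $W(M_i)$ is infinite, then either all the pairs $(\rho_i,\rho_i')$ coincide, and the intersection is that single infinite set; or two of them differ, in which case \Cref{existsrootwitness} computes the intersection of the corresponding two infinite sets, which --- since by \Cref{infmanycwcorr} a family containing two pairs with distinct primitive roots has at most one common witness --- is either empty or a single word $z$, and $z$ is then tested for membership in the remaining $W(M_j)$ as above. Each such operation runs in $\mathcal{O}((m+n)^2)$ time, and there are $\mathcal{O}(m)$ of them, so the intersection phase also costs $\mathcal{O}(m\cdot(m+n)^2)$, yielding the claimed bound.

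Correctness of the reduction is exactly \Cref{mcommonwitness}, and correctness of the intersection step rests on \Cref{infmanycwcorr}, \Cref{witlengthG}, \Cref{uz=zv}, and \Cref{existsrootwitness}; once these are invoked, the remaining argument is routine bookkeeping. The one point requiring care is the treatment of infinite witness sets: we must know both that the finite representation by a primitive pair is faithful (the witness set of a primitive conjugate pair determines, and is recovered from, that pair, via uniqueness of cuts, \Cref{primitive}) and that two distinct primitive roots share at most one witness, so that intersecting infinite witness sets never produces anything larger than a singleton. I expect this case analysis, rather than any new combinatorial fact, to be the main substance of the proof.
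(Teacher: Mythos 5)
Your proposal is correct and follows essentially the same structure as the paper's proof: reduce to the singleton reduxes via \Cref{mcommonwitness}, invoke \Cref{computeMwit} once per redux, and then combine the $k$ witness sets. The only real difference is in how the combination step is organized, and it is worth noting because the paper's version is noticeably slicker. You treat the combination as a generic intersection of $k$ sets, each of which is empty, a singleton, or the (infinite) witness set of some primitive pair $(\rho_j,\rho_j')$, and you explicitly test singletons against each infinite $W(M_j)$ and intersect pairs of infinite sets via \Cref{existsrootwitness}. The paper instead exploits structure: by \Cref{singlemonomial}\ref{Thm7:3}, every singleton redux with infinitely many witnesses is a set of powers of the primitive root of the \emph{redux of $M$}, so all the infinite $W(M_j)$ are literally the same set; and if some $W(M_i)=\{z\}$, then $z$ is in particular a witness of the shared redux (which lies in $M_i$) and hence, by \Cref{samewitness}, automatically a witness of that primitive root — so your membership tests against the infinite $W(M_j)$ always pass and can be dropped. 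The paper thus only needs to compare the finitely many unique witnesses for equality. Both routes give the same $\mathcal{O}(m\cdot(m+n)^2)$ bound, so nothing is lost, but you are doing work the structure already rules out. Two small slips worth flagging: \Cref{existswitness} is stated for \emph{two} conjugate pairs, so it is not the right citation for the no-Kleene-star base case (a single conjugate pair trivially has witness set $(xy)^*x$, computed from its primitive root via \Cref{computeroot}); and the "two primitive pairs differ" subcase among the infinite $W(M_i)$ is vacuous for the reason above — harmless but worth realizing, since it is exactly the insight the paper uses to streamline the algorithm.
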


\begin{proof}

From \Cref{mcommonwitness}, the witness set of $M$ is the intersection of the witness sets of its singleton reduxes. Check if each singleton redux of $M$ has a common witness and compute it using \Cref{computeMwit}. This step takes $\mathcal{O}(m \cdot (m+n))$. If there is a singleton redux with no common witness, then $M$ has no common witness by \Cref{generalmonomialwitness}. 

Assume that $M$ has a nonempty redux. The algorithm is as follows. Firstly, check if the redux of $M$ is conjugate using \Cref{isconjugate} (in time $\mathcal{O}(m)$). If yes, then compute the primitive root of the redux, say $(\rho_m,\rho_m^\prime)$, using \Cref{computeroot} (in time $\mathcal{O}(m)$). Otherwise, $M$ has no common witness. There are two cases depending on whether all singleton redux has infinitely many witnesses.
\begin{enumerate}[label=(\alph*)]
\item If all the singleton reduxes have infinitely many witnesses, then $M$ is a set of powers of the primitive root of the redux $(\rho_m,\rho_m^\prime)$ by \Cref{singlemonomial}\ref{Thm7:3}. Thus, $M$ has infinitely many common witnesses.

\item If there exists a singleton redux with a unique common witness, say $z$, then for all other singleton reduxes of $M$ with a unique witness $z^\prime$, check if $z = z^\prime$ (for all other singleton reduxes $z$ is already a witness by virtue of being a witness of the redux of $M$). If so, $z$ is the unique common witness of $M$; otherwise $M$ has no common witness.
\end{enumerate}
This takes $\mathcal{O}(mn + m)$ time.
Consider the case where the redux of $M$ is empty. Let $M$ be of the form $E_1^*E_2^* \cdots E_k^*$.  We iterate over $i \in \{1,\ldots, k-1\}$.
\begin{enumerate}[label=(\alph*)]
\item If both $E_i$ and $E_{i+1}$ has infinitely many witnesses, then the witness set of $E_i^*$ and $E_{i+1}^*$ is finitely represented by their primitive roots. Check if the primitive roots are the same. If yes, continue the procedure for $i=i+1$. Otherwise, check and compute the common witness, say $z$, of the primitive roots if exists using  \Cref{existswitness} (in time $\mathcal{O}(n)$). If not, then $M$ has no common witness. Otherwise, check if $z$ is a common witness of all singleton reduxes $E_{i+2}^*, \ldots, E_k^*$ (This can be done in $\mathcal{O}(n)$ for each singleton redux --- If a singleton redux has a unique witness then check if it is equal to $z$. Otherwise, if it has infinitely many witness, then it is represented by a primitive root. It suffices to check if $z$ is a witness of that primitive root).
\item If either one of $E_i$ and $E_{i+1}$ has a unique witness, say $z$, then check if $z$ is a common witness of all singleton reduxes $E_{i+2}^*, \ldots, E_k^*$ . 
\end{enumerate}
This step takes $\mathcal{O}(mn + n)$ time. Overall, it takes $\mathcal{O}(m \cdot (m+n))$.
\end{proof}

\subparagraph*{Computation of the Witness Set:} Given a sumfree expression $M$, we compute its witness set bottom-up. We start from the innermost Kleene star. It is a pair of words $(u,v)$. First, we check if $(u,v)$ is conjugate using \Cref{isconjugate}. If yes, then there are infinitely many common witnesses for $(u,v)^*$, namely the witnesses of its primitive root, otherwise $M$ has no witness. This step can be done in a time linear in the length of $(u,v)$. Now we recursively use \Cref{computeEwit} to compute the common witness of the expression under the Kleene star in each level. If there is no common witness for any level of Kleene star expression, then $M$ is not conjugate.

To find out the complexity of the decision procedure, it suffices to estimate the maximum length of a witness involved in the computation.

\subparagraph*{Length of the Witness of a Sumfree Expression:} We claim that if a sumfree expression $M$ is conjugate, then there exists a witness of length linear in size of $M$. If $M$ has infinitely many witnesses, from \Cref{infmanycwcorr}, $M$ is a set of powers of a primitive root. Therefore, there exists a witness of length that is less than that of the length of the primitive root. Next, suppose $M$ has a unique common witness. In that case, there exists a subexpression $E_i^*$ such that $E_i^*$ has a unique common witness, and all Kleene star appearing in $E_i$ has infinitely many witnesses. Thus, all of them have a common witness of length at most $|E_i|$. Therefore, there is a singleton redux $M_i$ of $E_i^*$ that has a unique witness $z_i$. The size of $z_i$ is linear in $M_i$ and the size of the witnesses of subexpressions of $E_i$. Both are upper bounded by size of $M$. Furthermore, the common witnesses for all subsequent levels is unique (if it exists) and its length is bounded by $|M|$.

\subparagraph*{Complexity of the Algorithm:} Since the size of the common witness of $M$ is linear in $|M|$, by \Cref{computeEwit}, the overall complexity of computing a common witness of a sumfree expression is $\mathcal{O}(h \cdot m^2)$ where $h$ is the \emph{star height} of $M$ and $m$ is the length of the expression.

\section{Deciding Conjugacy of a Sumfree Expression}\label{sec:decidingconj}
Given a sumfree expression $E=(\as_0,\bs_0)E_1^*(\as_1,\bs_1)\dots E_k^*(\as_k,\bs_k)$, $k\geq 1$, the following algorithms decides  whether $E$ is conjugate.
\begin{enumerate}
    \item Check whether $E$ has a common witness using the algorithm described in Section 7. If there exists a common witness, then output YES.
    \item Otherwise, verify the following conditions. If any of them fails, output NO.
    \begin{enumerate}
    \item $\as_0\dots \as_k \sim \bs_0\dots \bs_k$.
        \item For $i\in [1,k]$, each $E_i^*$ is conjugate and moreover it has a primitive root $(\rho_i,\rho_i')$.
    \item $\rho_i \sim \rho_j$ for all $i,j \in [1,k]$.
    \item Let $\bar{\alpha}=(\as_1,\ldots, \as_{k-1},\as_k\as_0)$ and $\bar{\varrho} =(\rho_1,\rho_2), \ldots, (\rho_{k-1},\rho_k), (\rho_k,\rho_1)$. 
 Then, there is at most one element $i\in[1,k]$ such that $(\bar{\alpha})_i$ is \emph{not} an inner witness of $(\bar{\varrho})_i$. 
     \item Let $\bar{\beta}=(\bs_1,\ldots, \bs_{k-1},\bs_k\as_0)$ and $\bar{\varrho}' =(\rho'_1,\rho'_2), \ldots, (\rho'_{k-1},\rho'_k), (\rho'_k,\rho'_1)$. 
     Then there is at most one element $j\in[1,k]$ such that $(\bar{\beta})_j$ is \emph{not} an inner witness of $(\bar{\varrho}')_j$.
    \end{enumerate} 
    \item If $i$ and $j$ do not exist in steps (d) and (e), then output YES.
    \item Otherwise, output YES if the below expression is conjugate.
    \begin{align}
        \left(\begin{array}{c}
             \as_i\dots \as_k\as_0\dots \as_{i-1} \\
              \bs_j\dots \bs_k\bs_0\dots \bs_{j-1} 
\end{array}\right)\left(\begin{array}{c}
            \rho_i \\
             \rho_j' 
        \end{array}\right)^*
    \end{align}
    Note that the above expression is conjugate iff the set $\{(\as_i\dots \as_k\as_0\dots \as_{i-1},\bs_j\dots \bs_k\bs_0\dots \bs_{j-1}), (\rho_i,\rho_j')\}$ has a common witness.
    \item In all other cases, Output NO.
 \end{enumerate}
 The correctness of the above algorithm follows from \Cref{thm:conjugateMonoidClosure} and \Cref{prop:long}. In the above algorithm, Steps $1$, $2.(b)$, and $4$ amounts to computing the common witness of a sumfree expression, and therefore is in polynomial time. Now, Steps $2.(a)$ and $2.(c)$ is checking conjugacy of two words, which is in linear time. Steps $2.(d)$ and $2.(e)$ are in linear time. Therefore, the algorithm runs in polynomial time.

\section{Conclusion}
\label{Sec:conclusion}
It is shown that the conjugacy problem of a rational relation is decidable. The current decision procedure proceeds through the analysis of rational expressions. In its essence, it is analogous to the boundedness checking of distance automata using factorisation trees \cite{Colcombet21}, though explicit use of factorisation trees are avoided  using sumfree rational expressions instead. An obvious question is the existence of an automata-theoretic proof. Factorisation forests remain the primary tool to settle boundedness questions on automata and by that standard the proof approach taken in this paper is natural and quite possibly the most intuitive.

Computing a witness of a given sumfree expression, if one exists, can be done in polynomial time. However, converting a rational expression into a sum of sumfree expressions may result in an exponential blowup. Thus, the algorithm presented in the paper is of exponential time. It remains to find the precise complexity of this problem.

It is natural to look at the conjugacy problem of more general classes, for instance functions definable by a deterministic two-way transducers (regular functions \cite{Engelfriet}), or by two-way pebble automata (polyregular functions \cite{bojanczyk2022transducers}).

\appendix
\section{}
\label{app1}

\begin{lemma}\label{sumfreere}
Every rational expression $E$ can be converted to one in sumfree normal form $E'$ in exponential time. Moreover, $|E'| \leq 2^{2 \cdot |E|}$.
\end{lemma}
\begin{proof}
Let $E$ be a rational expression over the monoid $\bfM$. 
We assume that the rational expression $E$ is given as a tree $e$. We take the size of $e$, denoted as $|e|$, to be the number of nodes in the tree. We inductively define a tree $e’$ that has the same language as the sumfree normal form of the expression $E$ and furthermore, as shown in \Cref{snftree}, it is in the shape of a right-comb with the internal nodes of the spine labelled with $+$’s (and the leaf of the spine is labelled with $\emptyset$) and the pendant left subtrees attached to the spine are sumfree. We call $e'$ as the SNF tree of $e$.

\begin{figure}[t]

\centering
\begin{tikzpicture} 
	\node {$+$}
                   child {node{$e_1$}}
		child {
                           node {$+$}
                           child{node{$e_2$}}
                           child {
                                    node {$+$}
                                    child{node{$e_3$}}
                                    child{node {$\emptyset$}}
                                    }};
\end{tikzpicture}
\caption{SNF tree for the SNF expression $E_1 + E_2 + E_3$.}

\label{snftree}
\end{figure}

We obtain an equivalent sumfree normal form expression and its expression tree $e'$ by induction on the structure of $E$. We prove the following invariant along with the construction of $e’$.

\begin{claim}
$|e’| \leq 2^{2|e|}$
\end{claim}

The following definition is used in the analysis below. Let $N(e’)$ denote the number of summands in $e’$, i.e., $N(e’)$ is the number of nodes in the spine of the comb, or equivalently, 1 more than the number of nodes labelled with $‘+’$ in $e’$. Hence $N(e’) \leq |e’|$.

\paragraph*{Base Case} When $E$ is $\emptyset$ or $m\in M$, then $E$ is already sumfree. The tree $e$ corresponds to a tree with a single node. We take $e’$ to be the tree with 3 nodes in SNF with the left subtree of the root being $e$. Hence $|e’| = 3$ and the claim holds.

\paragraph*{Inductive Case} Assume that $G$ and $H$ are rational expressions with expression trees $g$ and $h$ respectively. Let $g’$ and $h’$ denote their SNF trees. By induction hypothesis, $G \equiv \alpha_1 + \cdots + \alpha_k$ and $F \equiv \beta_1 + \cdots + \beta_\ell$ such that $\alpha_i, \beta_j$, $1\leq i \leq k$, $1 \leq j \leq \ell$, are sumfree expressions. Also, $|g’| \leq 2^{2|g|}$ and $
|h’| \leq 2^{2|h|}$.

\begin{enumerate}
\item
If $E = G + H$, then by substituting for $G$ and $H$, we get an equivalent expression of the desired form. This step takes constant time. To obtain $e’$ we replace the leaf of the spine of $g’$ with the root of $h’$. Clearly,
\begin{align*}
|e’| &= |g’| + |h’| - 1\\
&\leq 2^{2|g|} + 2^{2|h|} - 1\\
&\leq 2^{2(|g|+|h|+1)}\\
&= 2^{2|e|}.
\end{align*}

\item
If $E = G\cdot H$, then by substituting $G$ and $H$ we get
$E \equiv (\alpha_1 + \cdots + \alpha_k) \cdot (\beta_1 + \cdots + \beta_\ell)$. Distributing the monoid operation over the union, we get $E \equiv (\alpha_1\beta_1 + \cdots + \alpha_1\beta_\ell) + \cdots + (\alpha_k\beta_1 + \cdots + \alpha_k\beta_\ell) $, that is in the required form. This step takes time quadratic in the maximum among the length of the SNF expressions $G$ and $H$.

Assume there are $p$-many (resp. $q$-many) pendant subtrees attached to the spine of $g’$ (resp. $h’$). The tree $e’$ is a right-comb with $pq$-many pendant subtrees where each subtree is obtained by the pairwise concatenation of pendant subtrees from $g’$ and $h’$ respectively. Clearly, $N(e’) = N(g’)N(h’) - 1$.
\begin{align*}
|e’| &\leq N(g’)N(h’) + |g’|N(h’) + |h’|N(g’) + N(g’)N(h’)\\
&\leq 4|g’||h’|\\
&\leq 4 \cdot 2^{2|g|}2^{2|h|}\\
&\leq 2^{2(|g| + |h| + 1)}\\
&= 2^{2|e|}.
\end{align*}

\item
Finally, if $E = G^*$, then by repeatedly applying the rational identity $(X+Y)^* = (X^* Y^*)^*$, where $X,Y$ are rational expressions, we get $E = G^* \equiv (\alpha_1 + \alpha_2 + \cdots + \alpha_k)^* = (\alpha_1^*\alpha_2^* \cdots \alpha_k^*)^*$. This step takes linear time w.r.t.~the length of the SNF epression $G$.

We obtain the tree $g’$ corresponding to $g$, and construct a new tree $h$ from $g’$ as follows.
\begin{itemize}
\item Add an intermediate node labelled with $*$  between each pendant subtree and the spine of $g’$.
\item Replace each $+$ labelled nodes in the spine with concatenation.
\item Replace $\emptyset$ in the leaf of spine with epsilon.
\item Add a new root node labelled with $*$.

\end{itemize}
Now, $e’$ is obtained by attaching $h$ as the left subtree of a right-comb in the desired form. Clearly $N(e’) = 1$.
\begin{align*}
|e’| &\leq |g’|+N(g’)+3\\
&\leq |g’|+|g’|+3\\
&\leq 2 \cdot 2^{2|g|}+3\\
&= 2^{2|g|+1}+3\\
&\leq 2 \cdot 2^{2|g|+1} &&\text{(Since $|g| \geq 1, 2^{2|g|+1} \geq 8$)}\\
&= 2^{2(|g|+1)}\\
&= 2^{2|e|}
\end{align*}
\end{enumerate}

Hence proved that the upper bound on the size of the SNF expression is exponential in the size of the given expression.

Each step of constructing an SNF expression takes polynomial time in the length of its constituent SNF expressions. Therefore, any rational expression can be converted to an equivalent sumfree normal form in exponential time.

\end{proof}

\bibliographystyle{elsarticle-num} 

\bibliography{reference}

\end{document}